\newcommand{\power}[1]{\ensuremath{\mathcal{P}(#1)}}
\newcommand{\st}[1]{\ensuremath{\mathbf{states}(#1)}}
\newcommand{\union}{\, \cup \,}
\newcommand{\lang}{\ensuremath{\mathcal{L}}}
\newcommand{\equivalence}{\leftrightarrow}
\newcommand{\cl}[1]{\ensuremath{\mathsf{cl}(#1)}}
\newcommand{\hintikka}[1]{\ensuremath{\mathcal{#1}}}
\newcommand{\inter}{\, \cap \,}
\newcommand{\tableau}[1]{\ensuremath{\mathcal{#1}}}
\newcommand{\fm}[1]{\emph{#1}}
\newcommand{\subf}[1]{\ensuremath{\mathsf{Sub}(#1)}}
\newcommand{\ecl}[1]{\ensuremath{\mathsf{ecl}(#1)}}
\newcommand{\bigo}[1]{\ensuremath{\mathcal{O}(#1)}}
\newcommand{\card}[1]{\ensuremath{|#1|}}
\newcommand{\RRule}[1]{\textbf{(#1)}}
\newcommand{\powerne}[1]{\ensuremath{\mathcal{P}^{\tiny +}(#1)}}
\newcommand{\imp}{\rightarrow}
\newcommand{\knows}[1]{\ensuremath{\mathbf{K}}_{#1}}
\newcommand{\crh}[2]{\ensuremath{\{\, #1 \mid \, #2\, \}}}
\newcommand{\kframe}[1]{\ensuremath{\mathfrak{#1}}}
\newcommand{\set}[1]{\ensuremath{ \{ #1 \} }}
\newcommand{\rel}[1]{\ensuremath{\mathcal{#1}}}
\newcommand{\bigunion}{\bigcup \,}
\newcommand{\biginter}{\bigcap \,}
\newcommand{\fexp}{\textsc{FullExpansion}}
\newcommand{\system}[1]{\ensuremath{\mathbf{#1}}}
\newcommand{\MAELC}{\textbf{MAEL(C)}}
\newcommand{\MAELD}{\textbf{MAEL(D)}}
\newcommand{\MAELCD}{\textbf{MAEL(CD)}}
\newcommand{\ATEL}{\textbf{ATEL}}
\newcommand{\vp}{\ensuremath{\varphi}}
\newcommand{\ATL}{\textbf{ATL}}
\newcommand{\PDL}{\textbf{PDL}}
\newcommand{\CMAELC}{\textbf{CMAEL(C)}}
\newcommand{\CTL}{\textbf{CTL}}
\newcommand{\LTL}{\ensuremath{\textbf{LTL}}}
\newcommand{\ap}{\textbf{\texttt{AP}}}
\newcommand{\con}{\wedge}
\newcommand{\de}[1]{\emph{#1}}
\newcommand{\commonk}[1]{\ensuremath{\mathbf{C}}_{#1}}
\newcommand{\distrib}[1]{\ensuremath{\mathbf{D}}_{#1}}
\newcommand{\cmaelcd}{\textbf{CMAEL}(CD)}
\newcommand{\DtoD}[4]{#1\stackrel{\neg \distrib{#3} #4}{\longrightarrow} #2}
\newcommand{\FE}{\text{CS}}
\newcommand{\funny}{cut-saturated }
\newcommand{\di}[1]{\distrib{#1}}
\newcommand{\co}[1]{\commonk{#1}}
\newcommand{\De}{\ensuremath{\Delta}}
\newcommand{\Ga}{\ensuremath{\Gamma}}
\newcommand{\fe}[1]{\ensuremath{\mathcal{FE}(#1)}}
\newcommand{\cse}[1]{\ensuremath{\mathcal{CSE}(#1)}}
\newcommand{\CMAELCD}{\textbf{CMAEL(CD)}}
\newcommand{\mmodel}[1]{\ensuremath{\mathcal{#1}}}
\newcommand{\sat}[3]{\ensuremath{\mmodel{#1}, #2 \Vdash #3}}
\newcommand{\agents}{\ensuremath{\Sigma}}
\newcommand{\MAEL}{\textbf{MAEL}}
\newcommand{\ttt}[1]{\texttt{#1}}
\title{Tableau-based
  decision procedure for the multiagent epistemic logic with all 
  coalitional operators for common and distributed knowledge}
\titlerunning{Tableaux for the full coalitional multiagent epistemic logic} 
 \author{Mai Ajspur\inst{1} \and Valentin Goranko\inst{2} \and  Dmitry Shkatov\inst{3}}
\authorrunning{Mai Ajspur, Valentin Goranko and  Dmitry Shkatov} 
\institute{Roskilde University, \email{ajspur@ruc.dk}, \\
\and Technical University of Denmark and University of Johannesburg\thanks{
{Visiting professor}}, \email{vfgo@imm.dtu.dk} \\
\and 
University of the Witwatersrand, \email{dmitry@cs.wits.ac.za}}
\begin{document}

\maketitle  
\pagestyle{plain}

\centerline{January 16, 2012}
\begin{abstract}
  We develop a conceptually clear, intuitive, and feasible decision procedure for testing satisfiability in the full multi\-agent epistemic logic \CMAELCD\ with operators for common and distributed knowledge for all coalitions of agents mentioned in the language. To that end, we introduce Hintikka
 structures for \CMAELCD\ and prove that satisfiability in such structures is equivalent to satisfiability in standard models. Using that result, we design an incremental tableau-building procedure that eventually constructs a satisfying Hintikka structure for every satisfiable input set of formulae of \CMAELCD\ and closes for every unsatisfiable input set of formulae. 
\end{abstract}
\keywords{multi-agent epistemic logic, satisfiability, tableau, decision procedure}

\section{Introduction}
\label{sec:intro}

Over the last three decades, multiagent epistemic
logics~\cite{Fagin95knowledge}, \cite{vdHM95} have been playing an
increasingly important role in computer science and AI.  The
earliest prominent applications have been to specification, design, and
verification of distributed protocols ~\cite{Halpern87} and
\cite{HM90}; a number of other applications are described in, among
others,~\cite{Fagin95knowledge}, \cite{FHV92}, and \cite{vdHM95}.  The
most recent, and perhaps more important ones are to specification,
design, and verification of \emph{multiagent systems} --- a research area that
has emerged on the borderline between distributed computing, AI, and
game theory \cite{Shoham08}, \cite{Weiss99}, \cite{Wooldridge02}.

\subsection{Multiagent epistemic logics and decision methods for them} 
Languages of multiagent epistemic logics considered in the literature
contain various repertoires of epistemic operators. We refer to the
basic multiagent epistemic logic, containing only operators of
individual knowledge for a finite non-empty set $\agents$ of agents,
as \MAEL\ (\textbf{M}ulti \textbf{A}gent \textbf{E}pistemic
\textbf{L}ogic).  Since all epistemic operators of this logic are
\system{S5}-type modalities, it is also referred to in the literature
as $\system{S5}_n$, where $n$ is the number of agents in the language.
The logic obtained from \MAEL\ by adding the operator of common
knowledge among all agents in $\agents$ is then called \MAELC. This
logic, along with \MAEL, was studied in~\cite{HM92}.  Analogously, if
\MAEL\ is augmented with the operator of distributed knowledge for all
agents, then the resulting logic will be called \MAELD. It was studied
in~\cite{FHV92} and \cite{vdHM92}.
\MAEL\ augmented with operators of both common and distributed
knowledge for the set of all agents, hereafter called \MAELCD, was
studied in~\cite{vdHM97}, and a tableau-based decision procedure for
it was first presented in ~\cite{GorSh08a}.  Thus, all logics mentioned so
far either do not have both operators of common and distributed
knowledge, or only have those operators for the whole set of agents in
the language.

At the same time, there has recently been an increasing interest in
the study of \emph{coalitional} multiagent logics (see~\cite{Pauly01},
\cite{Pauly01a}, \cite{Pauly02}, \cite{AHK02}, \cite{vdHW04},
\cite{GorJam04}), i.e. logics whose languages refer to any groups
(\emph{coalitions}) of agents. These are important, inter alia, in
multiagent systems, where agents may ``cooperate'' (i.e., form a
coalition) in order to achieve a certain goal.
Most of the so far studied logical formalisms referring to coalitions
of agents have only been concerned with formalizing reasoning about
\emph{strategic abilities} of coalitions.  (A notable exception
is~\cite{vdHW04}, where the Alternating-time Temporal Epistemic Logic
\ATEL\ was introduced, whose language contains both common knowledge
and strategic operators for coalitions of agents.) Clearly, real
cooperation can only be achieved by communication, i.e., exchange of
knowledge. Thus, it is particularly natural and important to consider
multiagent epistemic logics with operators for both common and
distributed knowledge among \emph{any (non-empty) coalitions of
  agents}. This is the logic under consideration in the present paper, hereby
called \CMAELCD\ (for \textbf{C}oalitional
\textbf{M}ulti-\textbf{A}gent \textbf{E}pistemic \textbf{L}ogic with
operators of \textbf{C}ommon and \textbf{D}istributed knowledge). It subsumes
all multiagent epistemic logics mentioned above, except \ATEL. 

In order to be practically useful for such tasks as specification and design of distributed or multiagent systems, the respective logic need to be equipped with algorithms solving (constructively) its satisfiability problem, i.e.  testing whether a given input formula $\vp$ of that logic is satisfiable and, if so, providing enough information for the construction of a model for $\vp$.
Decidability of modal logics, including epistemic logics, is usually 
proved by establishing a `small model property', which provides a brute force decision procedure consisting of exhaustive search for a model amongst all those whose size is within the theoretically prescribed bounds. The two most common practically feasible general methods for satisfiability checking of modal logics are based on \emph{automata}
\cite{Vardi06} and on tableaux (see e.g., \cite{Pratt79}, \cite{BPM83}, \cite{Fitting83}, \cite{Wolper85}, \cite{EmHal85}, \cite{Gore98},
\cite{Fitting07}). 

There are various styles of tableau-based decision procedures; see \cite{Fitting83}, \cite{Gore98} and \cite{Fitting07} for detailed exposition and surveys. An easy to describe but somewhat less efficient and practically unfeasible approach, that we will call \emph{maximal tableau} (also  called \emph{top-down} in \cite{EmHal85}), consists in trying to build in one step a `canonical' finite model for any given formula out of \emph{all}  maximal consistent subsets of the closure of that formula. This method always works in (at least) exponential time and usually produces a wastefully large model, if any exists. A more flexible and more practically applicable version, adopted in the present paper, is a so called \emph{incremental} (aka, `bottom-up') tableau building procedure. While in all known cases, the worst-case time complexity for maximal and incremental tableaux are the same, the crucial difference is that maximal tableaux \emph{always} require the amount of resources predicted by the theoretical worst-case time estimate, while incremental tableaux work on average much more efficiently\footnote{This claim can not be made mathematically precise due to the lack of an a priori probability distribution on formulae of a logic. The interested
reader may consult~\cite{GorSh08} for comparison of efficiency of
the two types of tableaux in the context of Alternating-time
temporal logic \ATL.}.  

\subsection{Related work and comparison}
\label{subsec:related}

The present work is part of a series of papers (\cite{GorSh08},\cite{GorSh08a},\cite{GorSh09},\cite{GorSh09a}) where we have embarked on the project of developing practically efficient yet intuitive and conceptually clear incremental-tableau-based satisfiability checking procedures for a range of multiagent logics.
This paper builds on the conference papers \cite{GorSh08a} and
\cite{GorSh09} by substantially extending, revising, and improving
them.
 
There are three inherent complications affecting the construction of a tableau procedure for the logic
\CMAELCD, arising respectively from the common knowledge
(fixpoint-definable operator), the distributed knowledge (with
associated epistemic relation being the intersection of the individual
knowledge epistemic relations), and the interactions between the
knowledge operators over different coalitions of agents.

Several tableau-based methods for satisfiability-checking for modal
logics with fixpoint-definable operators have been developed and published over the past 30 years, all going back to the tableau-based decision methods   developed for the Propositional Dynamic Logic \PDL\ in~\cite{Pratt80}, for the branching-time temporal logics \textbf{UB} in~\cite{BPM83} and \CTL\ in~\cite[Section 5]{EmHal85} and~\cite{Emerson90}. In terms of handling eventualities arising from the fixed-point operators our tableau method follows more closely on the incremental tableaux for the linear time temporal logic \LTL\ in~\cite{Wolper85} and for \CTL\ in~\cite[Section 7]{EmHal85}.

A particular complication arising in the tableau for \CMAELCD\ stems from the fact that the epistemic operators, being \system{S5} modalities, are symmetric, and thus the epistemic boxes have global effect on the model, too. This requires a special mechanism for propagating their effect backwards when occurring in
states of the tableau. In the present paper we have chosen to implement such mechanism by using \emph{analytic cut rules}, going back to Smullyan
\cite{Smullyan68} and Fitting \cite{Fitting83}, see also \cite{Gore98} and  \cite{Nguyen00}.  
More recently, tableaux with analytic cut rules for modal logics with symmetric relations have been developed in \cite{GoreN07TABLEAUX}, \cite{GoreN07CLIMA}, \cite{Dunin-KepliczNS11}. 

We note that there is a natural tradeoff between conceptual clarity and simplicity of (tableau-based) decision procedures on the one hand, and their technical sophistication and optimality on the other hand. We emphasize that the main objective of developing the tableau procedure presented here is the conceptual clarity, intuitiveness, and ease of  implementation, rather than practical optimality. While being optimal in terms of worst-case time complexity and incorporating some new and non-trivial optimizing features (such as restricted applications of cut rules) this procedure 
is amenable to various improvements and further optimizations. Most important known such optimizations are \emph{on-the-fly} techniques for elimination of bad states and \emph{one-pass} tableau methods  developed for some related logics in \cite{Schw98}, \cite{AGF07} and \emph{cut-free} versions of tableau as in \cite{AGF07} for \MAELC, \cite{GoreWidmann10} for PDL with converse operators, \cite{Nguyen11} for the description logic {SHI} and of sequent calculi, in~\cite{JKS07} for \MAELC\ and in \cite{BrunnlerLange08} for LTL and CTL.  
We discuss briefly the possible modifications of our procedure, implementing such optimizing techniques in Section \ref{sec:complexity}. 

Here is a summary (in a roughly chronological order) of the more closely related previous work, besides our own, on tableau-based decision procedures for multiagent epistemic logics with common and/or distributed knowledge:

\begin{itemize}
\item the maximal tableaux for \MAELC, presented in~\cite{HM92};
\item the semantic construction used in~\cite[Appendix A1]{FHV92} to
  prove completeness of an axiomatic system for \MAELD;
\item the proof of decidability of \MAELCD\ based on finite model
  property via filtration in~\cite{vdHM97};
\item the maximal tableau-like decision procedure for \ATL, presented
  in~\cite{WLWW06} and extended to \ATEL\ in~\cite{Walther05};
\item the exponential-time tableau-based procedure developed in
  \cite{DVD07} for testing satisfiability in the BDI logic, that has some 
  common features with \CMAELC;
\item  the optimized cut-free single-pass tableaux for the multi-agent
logic of common knowledge \MAELC, in \cite{AGF07}.
on tableaux for multiagent logics using global caching and analytic cuts in  \cite{Dunin-KepliczNS11}. 
\end{itemize}

\subsection{Structure of the paper} 
In Section~\ref{sec:logics}, we
introduce the syntax and semantics of the logic \CMAELCD.  In
Section~\ref{sec:HintikkaStructures}, we introduce Hintikka structures
for \CMAELCD\ and show that Hintikka structures are equivalent to Kripke models with
respect to satisfiability of formulae.  Then, in
Section~\ref{sec:tableaux}, we develop the tableau procedures checking
for satisfiability of formulae of \CMAELCD.
In Section~\ref{sec:scc}, we prove the correctness of our procedure in Section~\ref{sec:complexity} we estimate its complexity, discuss it efficiency and indicate some possible technical improvements. We end with concluding remarks pointing out some directions for further development.

\section{Syntax and semantics}
\label{sec:logics}

\subsection{Syntax of \CMAELCD}
The language of \CMAELCD\ contains a fixed, at most countable, set
\ap\ of \de{atomic propositions}, typically denoted by $p, q, r,
\ldots$; a finite, non-empty set $\agents$ of (names for)
\de{agents}\footnote{The notion of agent used in the present paper is
  an abstract one; in the context of distributed systems, for example,
  agents can be thought of as processes making up the system; in the
  context of multiagent systems, they can be thought of as independent
  software components of the system.}, typically denoted by $a, b,
\ldots$, while sets of agents, called \emph{coalitions}, will be
usually denoted by $A, B, \ldots$; a sufficient repertoire of the
Boolean connectives, say $\neg$ (``not'') and $\con$ (``and''); and,
for every non-empty coalition $A$, the epistemic operators
$\distrib{A}$ (``\emph{it is distributed knowledge among $A$ that
  \ldots}'') and $\commonk{A}$ (``\emph{it is common knowledge among
  $A$ that \ldots}'').
The formulae of \CMAELCD\ are thus defined by the following BNF
expression:
\[\vp := p \mid \neg \vp \mid (\vp_1 \con \vp_2) \mid \distrib{A}
\vp \mid \commonk{A} \vp,\] where $p$ ranges over $\ap$ and $A$ ranges
over the set $\powerne{\agents}$ of non-empty subsets of $\agents$.  
The other Boolean connectives can be defined as usual. We denote formulae of \CMAELCD\ 
by $\vp, \psi, \chi, \ldots$ and omit parentheses in formulae whenever it does not result in ambiguity.

The distributed knowledge operator $\distrib{A} \vp$ intuitively means
that an ``$A$-superagent'', who knows everything that any of the
agents in $A$ knows, can obtain $\vp$ as a logical consequence of
their knowledge. For example, if agent $a$ knows that $\psi$ and agent
$b$ knows that $\psi \imp \chi$, then $\distrib{\{a,b\}} \chi$ is true
even though neither $a$ nor $b$ knows $\chi$. The operators of
individual knowledge $\knows{a} \vp$ (``\emph{the agent $a$ knows that
  $\vp$}''), for $a \in \agents$, can be defined as $\distrib{\set{a}}
\vp$, henceforth simply written $\distrib{a} \vp$. Then, we define
$\knows{A} \vp := \bigwedge_{a \in A} \distrib{a} \vp$.

The common knowledge operator $\commonk{A} \vp$ intuitively means that
$\vp$ is ``public knowledge'' among $A$, i.e., that every agent in $A$
knows that $\vp$ and knows that every agent in $A$ knows that $\vp$,
etc.  Formulae of the form $\neg \commonk{A} \vp$ are referred to as
\de{(epistemic) eventualities}, for the reasons given later on.

\subsection{Coalitional multiagent epistemic models}
Formulae of \CMAELCD\ are interpreted in \emph{coalitional multiagent
  epistemic models}. In order to define those, we first need to
introduce \emph{coalitional multiagent epistemic structures} and
\emph{frames}.
  
\begin{definition}
 \label{def:cmaes}
 A \de{coalitional multiagent epistemic structure} (CMAES) is a tuple 
 \[
 \kframe{G}~=~(\agents, S, \set{\rel{R}^D_A}_{A
   \in \powerne{\agents}}, \set{\rel{R}^C_A}_{A \in
   \powerne{\agents}})
\]   
   where
 \begin{enumerate}
 \item $\agents$ is a finite, non-empty set of
   \de{agents}\footnote{Notice that we use the same symbol,
     ``$\agents$'', both for the set of \emph{names of agents} in the
     language and for the set of \emph{agents} in CMAES's.  It will
     always be clear from the context which set we refer to.};
 \item $S \ne \emptyset$ is a set of \de{states};
 \item for every $A \in
   \powerne{\agents}$, $\rel{R}^D_A$ is a binary relation on $S$;
 \item for every $A \in \powerne{\agents}$, $\rel{R}^C_A$ is the
   reflexive, transitive closure of $\bigunion_{B\subseteq A}\,
   \rel{R}^D_{B}$.
 \end{enumerate}
\end{definition}

\begin{definition}
 \label{def:cmaef}
 A \de{coalitional multiagent epistemic frame} (CMAEF) is a CMAES 
\[
\kframe{F} = (\agents, S, \set{\rel{R}^D_A}_{A \in
   \powerne{\agents}}, \set{\rel{R}^C_A}_{A \in \powerne{\agents}}),
\]
 where each $\rel{R}^D_A$ is an equivalence relation satisfying the
 following condition:
 \begin{center}
   \(
   \begin{array}{lcl}
     (\dag) & & \rel{R}^D_A = \biginter_{a \in A} \rel{R}^D_{a}
   \end{array}
   \)
 \end{center}
 (Here, and further, we write $\rel{R}^D_{a}$ instead of $\rel{R}^D_{\set{a}}$, where $a \in \agents$.)
 
 If condition (\dag) above is replaced by the following, weaker one:
  \begin{center}
   \(
   \begin{array}{lcl}
     (\dag \dag) & & \rel{R}^D_A \subseteq \rel{R}^D_B \mbox{ whenever } B \subseteq A,

   \end{array}
   \)
 \end{center}
 then \kframe{F} is a \de{coalitional multiagent epistemic pseudo-frame} (pseudo-CMAEF).
\end{definition}

Note that in every (pseudo-)CMAEF $\rel{R}^D_A \subseteq \biginter_{a
  \in A} \rel{R}^D_{a}$, and hence $\bigunion_{B\subseteq
  A} \rel{R}^D_{B} = \bigunion_{a \in A} \rel{R}^D_{a}$. 
  Hence, condition 4 of
Definition~\ref{def:cmaes} in (pseudo-) CMAEFs is equivalent to
requiring that $\rel{R}^C_A$ is the transitive closure of
$\bigunion_{a \in A} \rel{R}^D_{a}$, for every $A \in \powerne{\agents}$.
Also, note that each $\rel{R}^C_A$ in a (pseudo-)CMAEF is an equivalence
relation.

\begin{definition}
 \label{def:cmaem}
 A \de{coalitional multiagent epistemic model} (CMAEM) is a tuple
 $\mmodel{M} = (\kframe{F}, \ap, L)$, where $\kframe{F}$ is a CMAEF with a set of states $S$, $\ap$ is a set of atomic propositions, and $L: S \mapsto \power{\ap}$
 is a \de{labeling function}, assigning to every state $s$ the set
 $L(s)$ of atomic propositions true at $s$.

 If $\kframe{F}$ is a pseudo-CMAEF, then $\mmodel{M} = (\kframe{F},
 \ap, L)$ is a \de{multiagent coalitional pseudo-model}
 (pseudo-CMAEM).
\end{definition}

The notion of truth, or satisfaction, of a \CMAELCD-formula at a state of a 
 (pseudo-)CMAEM is defined in the standard Kripke semantics style.  In particular: 
\begin{itemize}
\item \sat{M}{s}{\distrib{A} \vp} iff $(s, t) \in \rel{R}^D_A$ implies
 \sat{M}{t}{\vp};
\item \sat{M}{s}{\commonk{A} \vp} iff $(s, t) \in \rel{R}^C_A$ implies
 \sat{M}{t}{\vp}.
\end{itemize}

\begin{definition}
  Given a (pseudo-)CMAEM $\mmodel{M}$, a \CMAELCD-formula $\vp$ is \de{satisfiable} in $\mmodel{M}$ if \sat{M}{s}{\vp} holds for some $s \in \mmodel{M}$; $\vp$ is \de{valid} in $\mmodel{M}$ if \sat{M}{s}{\vp} holds for every $s \in \mmodel{M}$.  
  
A formula $\vp$ is  \de{satisfiable} if it is satisfiable in some CMAEM; it is \de{valid}, denoted $\Vdash \vp$, if it is valid in every CMAEM. 
\end{definition}

The satisfaction condition for the operator $\commonk{A}$ can be
re-stated in terms of reachability.  Let $\mmodel{M}$ be a
(pseudo-)CMAEM with state space $S$ and let $s, t \in S$.  We say that
$t$ is \de{$A$-reachable from $s$} if either $s = t$ or, for some $n
\geq 1$, there exists a sequence $s = s_0 , s_1, \ldots, s_{n-1}, s_n
= t$ of elements of $S$ such that, for every $0 \leq i < n$, there
exists $a_i \in A$ such that $(s_i, s_{i+1}) \in \rel{R}^D_{a_i}$.  It is
then easy to see that the satisfaction condition for
$\commonk{A}$ is equivalent to the following one:
\begin{itemize}
\item \sat{M}{s}{\commonk{A} \vp} iff \sat{M}{t}{\vp} for every $t$ that is
 $A$-reachable from $s$.
\end{itemize}

The following claim be easily verified.
\begin{proposition}
\label{fixpointC}
$\Vdash  \commonk{A} \vp  \leftrightarrow (\vp \land \bigwedge_{a\in A} \distrib{a}\commonk{A} \vp)$. 
\end{proposition}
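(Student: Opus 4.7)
The plan is to work directly from the reachability reformulation of $\commonk{A}$ given just before the proposition, and to prove both implications of the validity by reasoning about arbitrary (pseudo-)CMAEMs $\mmodel{M}$ and states $s$.

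For the forward direction, I would assume $\sat{M}{s}{\commonk{A}\vp}$, so $\vp$ holds at every state $A$-reachable from $s$. Since $s$ is $A$-reachable from itself (by the ``$s=t$'' clause), $\sat{M}{s}{\vp}$. For any $a \in A$ and any $t$ with $(s,t)\in \rel{R}^D_a$, I need $\sat{M}{t}{\commonk{A}\vp}$; but any $u$ that is $A$-reachable from $t$ is also $A$-reachable from $s$, since prepending the single-step chain $s, t$ (which is licensed because $a \in A$) to the chain witnessing reachability of $u$ from $t$ yields a witness for reachability of $u$ from $s$. Hence $\sat{M}{u}{\vp}$, giving $\sat{M}{t}{\commonk{A}\vp}$, and then $\sat{M}{s}{\distrib{a}\commonk{A}\vp}$ by the universal quantification over $t$. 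Since $a \in A$ was arbitrary, the right-hand conjunction follows.

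For the backward direction, assume $\sat{M}{s}{\vp}$ and $\sat{M}{s}{\distrib{a}\commonk{A}\vp}$ for every $a\in A$. I would show $\sat{M}{t}{\vp}$ for every $t$ that is $A$-reachable from $s$ by induction on the length $n$ of the reachability chain. The base case $n=0$ (i.e.\ $t=s$) is immediate from the first conjunct. For the inductive step, if $s = s_0, s_1, \ldots, s_n = t$ witnesses $A$-reachability with $(s_0,s_1)\in \rel{R}^D_{a_0}$ for some $a_0 \in A$, then from $\sat{M}{s}{\distrib{a_0}\commonk{A}\vp}$ we get $\sat{M}{s_1}{\commonk{A}\vp}$; and since $t$ is $A$-reachable from $s_1$ via the tail $s_1,\ldots,s_n$, the reachability reformulation yields $\sat{M}{t}{\vp}$.

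There is no real obstacle here: the only subtlety is to use the correct characterisation of $\commonk{A}$ (reachability along edges labelled by agents $a \in A$, not along $\rel{R}^D_B$ for arbitrary $B \subseteq A$), but this is justified by the observation immediately following Definition~\ref{def:cmaef} that $\bigcup_{B\subseteq A}\rel{R}^D_B = \bigcup_{a\in A}\rel{R}^D_a$ in every (pseudo-)CMAEF, so that $\rel{R}^C_A$ is the reflexive transitive closure of $\bigcup_{a\in A}\rel{R}^D_a$. Consequently the same argument establishes validity both over CMAEMs and over pseudo-CMAEMs.
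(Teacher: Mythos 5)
Your proof is correct and complete. The paper itself offers no proof of Proposition~\ref{fixpointC} (it is merely declared ``easily verified''), and your argument via the $A$-reachability reformulation of $\commonk{A}$ --- using that $\rel{R}^C_A$ is the reflexive transitive closure of $\bigcup_{a\in A}\rel{R}^D_a$, with prepending of a single $\rel{R}^D_{a}$-step for the left-to-right direction and a case split on chain length for the converse --- is exactly the intended verification, and it applies uniformly to CMAEMs and pseudo-CMAEMs as you note.
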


\textsc{Remark} If $\agents = \set{a}$, then $\distrib{a}
\vp \equivalence \commonk{a} \vp$ is valid for all $\vp$.  Thus, the
single-agent case is essentially trivialized and, therefore, we assume
throughout the remainder of the paper that the set $\agents$ of names
of agents in the language of \CMAELCD\ contains at least 2 agents.

\section{Hintikka structures for  \CMAELCD}
\label{sec:HintikkaStructures}

We are ultimately interested in (constructive) satisfiability of
(finite sets of) formulae in models. However, the tableau procedure we
present in this paper checks for the existence of a more general kind
of semantic structure for the input formula, namely a \emph{Hintikka
  structure}.  In Section~\ref{subsec:HintikkaStructures}, we introduce Hintikka structures for
\CMAELCD.  In Section~\ref{subsec:Hintikka_equals_models} we show that satisfiability in
Hintikka structures is equivalent to satisfiability in models;
consequently, testing for satisfiability in a Hintikka structure can
replace testing for satisfiability in a model.

\subsection{Fully expanded sets and Hintikka structures}
\label{subsec:HintikkaStructures}

There are two fundamental differences between (pseudo-)models and
Hintikka structures for \CMAELCD, which make working with the latter
substantially easier than working directly with models. First, while
models specify the truth value of every formula of the language at
each state, Hintikka structures only do so for the formulae relevant
to the evaluation of a fixed formula $\theta$ (or, a finite set of
formulae $\Theta$) at a distinguished state.  Second, the relations in
(pseudo-) models have to satisfy certain conditions (see
Definition~\ref{def:cmaef}), while in Hintikka structures conditions
are only placed on the labels of states.  These labeling conditions
ensure, however, that every Hintikka structure generates, through the
constructions described in Section \ref{subsec:Hintikka_equals_models}, 
a pseudo-model so that
membership of formulae in the labels is compliant with the truth in
the resultant pseudo-model.  We then show how to convert a
pseudo-model into a bona fide model in a ``truth-preserving'' way.

To describe Hintikka structures, we need the concept of fully expanded
set. Such sets contain all the formulae that have to be satisfied
locally at the state under consideration. We divide all the formulae
that are not elementary in the sense that their satisfaction at the
state does not imply satisfaction of any other formulae at the same
state (such as $p \in \ap$ or $\neg \distrib{A} \vp$) into
$\alpha$-formulas and $\beta$-formulas.  The former are formulae of a
conjunctive type, i.e. their truth implies the truth of all their
$\alpha$-components at the same state, while the latter are of a
disjunctive type: their truth implies the truth of at least one of
their $\beta$-components at the same state. Table
\ref{tab:alpha_beta_components} shows the $\alpha$- and
$\beta$-formulas of $\CMAELCD$ together with their $\alpha$- and
$\beta$-components.  The following claims are straightforward, the
cases of common knowledge using Proposition \ref{fixpointC}.

\begin{table}\label{tab:alpha_beta_components}
\begin{tabular}[t]{|c|c|}
\hline
$\alpha$-formula & $\alpha$-components\\
\hline
$\neg \neg \vp$&$\{\vp\}$\\
$\vp \con \psi$&$\{\vp, \psi\}$\\
$\distrib{A} \vp$&$\{\distrib{A}\vp, \vp\}$\\
$\commonk{A} \vp$&$\{\vp\}\cup \crh{\distrib{a}\commonk{A}\vp}{a\in A}$\\
\hline
\end{tabular}
\begin{tabular}[t]{|c|c|}
\hline
$\beta$-formula&$\beta$-components\\
\hline
$\neg (\vp \con \psi)$ &$\{\neg \vp,\neg \psi\}$\\
$\neg \commonk{A} \vp$&$\{\neg\vp\}\cup\crh{\neg \distrib{a}\commonk{A}\vp}{a\in A}$\\
\hline
\end{tabular} \\ ~ \\
\caption{$\alpha$- and $\beta$-formulas of $\CMAELCD$ with their respective components} 
\end{table}

\begin{lemma}
\label{lem:Decomp}
\begin{description}
\item[1.] Every $\alpha$-formula is equivalent to the conjunction of
  its $\alpha$-components.
\item[2.] Every $\beta$-formula is equivalent to the disjunction of
  its $\beta$-components.
\end{description}
\end{lemma}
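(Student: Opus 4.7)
The plan is to prove both parts by straightforward case analysis on the syntactic form of the $\alpha$- or $\beta$-formula, following the entries of Table \ref{tab:alpha_beta_components}. For each row I will verify the claimed equivalence semantically, i.e.\ show that for every CMAEM $\mmodel{M}$ and every state $s$, the formula is satisfied at $s$ iff the conjunction (resp.\ disjunction) of its listed components is satisfied at $s$. Most cases are immediate, so the proof is really a matter of bookkeeping.

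First I dispose of the purely propositional cases: $\neg\neg\vp \equivalence \vp$ holds by the semantics of negation; $\vp \con \psi$ is, tautologically, equivalent to the conjunction of $\{\vp,\psi\}$; and $\neg(\vp \con \psi)$ is equivalent to $\neg\vp \dis \neg\psi$ by De Morgan. For the $\alpha$-case $\distrib{A}\vp$, I need $\distrib{A}\vp \equivalence (\distrib{A}\vp \con \vp)$. The left-to-right direction uses reflexivity of $\rel{R}^D_A$ (recall that in a CMAEF every $\rel{R}^D_A$ is an equivalence relation), which yields $(s,s)\in \rel{R}^D_A$ and hence $\sat{M}{s}{\vp}$ from $\sat{M}{s}{\distrib{A}\vp}$; the right-to-left direction is trivial since the conjunction entails its first conjunct.

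The remaining two cases are the ones involving common knowledge, and they are handled by Proposition~\ref{fixpointC}. For $\commonk{A}\vp$, the stated equivalence
\[
\Vdash \commonk{A}\vp \equivalence \bigl(\vp \con \bigcon_{a\in A}\distrib{a}\commonk{A}\vp\bigr)
\]
is exactly the claim that $\commonk{A}\vp$ is equivalent to the conjunction of its $\alpha$-components. For the $\beta$-formula $\neg\commonk{A}\vp$, I negate both sides of Proposition~\ref{fixpointC} and apply De Morgan to obtain
\[
\Vdash \neg\commonk{A}\vp \equivalence \bigl(\neg\vp \dis \bigdis_{a\in A}\neg\distrib{a}\commonk{A}\vp\bigr),
\]
which is the disjunction of the $\beta$-components listed in the table.

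There is no real obstacle here; the only conceptual ingredient beyond propositional reasoning is the fixpoint identity in Proposition~\ref{fixpointC} (for the common-knowledge rows) and the reflexivity of $\rel{R}^D_A$ (for the $\distrib{A}\vp$ row). Both are already available, so the argument reduces to a short, uniform case check.
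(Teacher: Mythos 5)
Your proof is correct and matches the paper's intent exactly: the paper itself only remarks that these claims are straightforward, with the common-knowledge cases handled via Proposition~\ref{fixpointC}, which is precisely the case analysis you carry out (including the use of reflexivity of $\rel{R}^D_A$ for the $\distrib{A}\vp$ row). Nothing further is needed.
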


\begin{definition}
\label{def:Clo}

The  \emph{closure of the formula} $\vp$ is the smallest set of formulae $\cl{\vp}$ such that:
  \begin{enumerate}
  \item $\vp \in \cl{\vp}$;
  \item $\cl{\vp}$ is closed with respect to $\alpha$- and
    $\beta$-components of all $\alpha$- and $\beta$-formulae,
    respectively;
  \item for any formula $\psi$ and coalition $A$, if $\lnot
    \distrib{A}\psi \in \cl{\vp}$ then $\lnot \psi \in \cl{\vp}$.
\end{enumerate}
\end{definition}

\begin{definition}
\label{def:Clo2}
For any set of formulae $\De$ we define $\cl{\De} := \bigcup \{
\cl{\vp} \mid \vp \in \De\}$. A set of formulae $\De$ is \emph{closed}
if $\De = \cl{\De}$.
\end{definition}

\begin{remark}
  Intuitively, the closure of a set of formulae $\Ga$ consists of all
  formulae that may appear in the tableau whose input is the set of
  formulae $\Ga$.
\end{remark}

\begin{definition}
\label{def:PatInc}
A set of formulae is \emph{patently inconsistent} if it contains 
a contradictory pair of formulae $\vp$ and $\neg \vp$.
\end{definition}

\begin{definition}
  \label{def:cmaelcd_downward_saturated}
  A set $\De$ of \CMAELCD-formulae is \de{fully expanded} if it
  satisfies the following conditions:
\begin{itemize}
	\item $\De$ is not \de{patently inconsistent}; 
	\item if $\vp$ is an $\alpha$-formula and $\vp\in \De$, then
          all $\alpha$-components of $\vp$ are in $\De$.
	\item if $\vp$ is a $\beta$-formula and $\vp\in \De$, then at
          least one $\beta$-component of $\vp$ is in $\De$.
\end{itemize}
\end{definition}

Intuitively, a non-patently inconsistent set is fully expanded if it
is closed under applications of all \emph{local} (pertaining to the
same state of a structure) formula decomposition rules.

\begin{definition}
  \label{def:FullExp}
 
  The procedure $\fexp$ applies to a set of formulae $\Ga$ and
  produces a (possibly empty) family of sets $\fe{\Ga}$, called the
  \emph{family of full expansions of $\Ga$}, obtained as follows:
  start with the singleton family $\{\Ga\}$; if $\Ga$ is patently
  inconsistent, halt and return $\fe{\Ga} = \emptyset$; otherwise
  repeatedly apply, until saturation, the following \emph{set
    replacement operations}, each time to a non-deterministically
  chosen set $\Phi$ from the current family of sets $\mathcal{F}$ and
  a formula $\varphi \in \Phi$; though, we prioritize the
  eventualities in $\Ga$ so that these formulae are processed first:

 \begin{enumerate}
\item
If $\varphi$ is an $\alpha$-formula with $\alpha$-components $\varphi_1$ and $\varphi_2$, then replace $\Phi$ by $\Phi \cup \{\varphi_1,\varphi_2\}$.
\item
If $\varphi$ is a $\beta$-formula such that none of its  $\beta$-components is in $\Phi$, then replace $\Phi$ with the family of extensions 
\[
\{\Phi \cup \{\psi\} \mid \psi \mbox{ is a $\beta$-component of } \varphi \}
\]
\item
If $\varphi = \neg \commonk{A} \psi$ and $\neg \psi \notin \Phi$,  
but some of the other $\beta$-components of $\varphi$ is in $\Phi$, then add to $\mathcal{F}$ the set $\Phi \cup \{\neg \psi\}$ 
 \end{enumerate}
 
 The following proviso applies to the procedure above:  if a patently inconsistent set is added to $\mathcal{F}$ as a result of such application, it is removed immediately thereafter. 
 
Saturation occurs when no application of a set replacement operation can change the current family $\mathcal{F}$. At that stage, the family $\fe{\Ga}$ of sets of formulae is produced and returned. Reaching a stage of saturation is guaranteed to occur because all sets of formulae produced during the procedure $\fexp$ are subsets of the finite set $\cl{\Ga}$.
\end{definition}

Notice that the procedure $\fexp$ allows adding not more than one $\beta$-component of a formula $\varphi = \neg \commonk{A} \psi$ to the initial set, besides $\neg \psi$.

In what follows, we will need the following proposition.

\begin{proposition}
  \label{lem:full_expansion}
For any finite set of formulae $\Ga$:
\[\Vdash \bigwedge \Ga \leftrightarrow \bigvee \left\{\bigwedge \De \mid \De \in
\fe{\Ga}\right\}.\]
\end{proposition}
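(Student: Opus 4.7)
The plan is to prove the invariant
\[
\Vdash \bigwedge \Ga \leftrightarrow \bigvee_{\Phi \in \mathcal{F}} \bigwedge \Phi
\]
holds for the current family $\mathcal{F}$ at every stage of the execution of $\fexp$ on input $\Ga$. Since $\mathcal{F}$ is initially $\{\Ga\}$, the invariant is trivially true at the start, and at saturation $\mathcal{F} = \fe{\Ga}$, which yields the desired equivalence. (The convention that an empty disjunction is $\bot$ handles the case $\fe{\Ga} = \emptyset$, which occurs only when $\Ga$ itself is patently inconsistent and hence $\bigwedge \Ga$ is equivalent to $\bot$.) So it suffices to show that each individual step of $\fexp$ preserves the invariant.

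I would verify this for each of the three set-replacement operations in turn, plus the removal of patently inconsistent sets. For operation (1), $\Phi$ is replaced by $\Phi \cup \{\varphi_1,\varphi_2\}$ where $\varphi \in \Phi$ is an $\alpha$-formula with $\alpha$-components $\varphi_1, \varphi_2$. By Lemma~\ref{lem:Decomp}(1), $\varphi \leftrightarrow \varphi_1 \wedge \varphi_2$ is valid, so $\bigwedge \Phi \leftrightarrow \bigwedge(\Phi \cup \{\varphi_1,\varphi_2\})$ is valid, and the disjunct corresponding to $\Phi$ is simply rewritten to an equivalent one. For operation (2), $\Phi$ is replaced by the family of sets $\Phi \cup \{\psi\}$ for each $\beta$-component $\psi$ of some $\beta$-formula $\varphi \in \Phi$. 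By Lemma~\ref{lem:Decomp}(2), $\varphi$ is equivalent to the disjunction of its $\beta$-components; together with $\varphi \in \Phi$, this gives $\bigwedge \Phi \leftrightarrow \bigvee_{\psi} \bigwedge(\Phi \cup \{\psi\})$, so the corresponding disjunct of the family disjunction is rewritten to an equivalent disjunction.

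The subtlest case, which I expect to be the main obstacle, is operation (3). Here $\Phi$ is not replaced but rather \emph{augmented} by adding the set $\Phi \cup \{\neg \psi\}$ to $\mathcal{F}$, given that $\varphi = \neg \commonk{A} \psi \in \Phi$ and some other $\beta$-component of $\varphi$ is already in $\Phi$. The new family-disjunction differs from the old one by the extra disjunct $\bigwedge(\Phi \cup \{\neg \psi\})$. The key observation is that $\Phi \cup \{\neg \psi\} \supseteq \Phi$, so $\bigwedge(\Phi \cup \{\neg \psi\}) \to \bigwedge \Phi$ is valid; since $\bigwedge \Phi$ is already a disjunct in the old family-disjunction, adding the extra, strictly stronger disjunct does not change the disjunction up to equivalence. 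Thus the invariant is preserved. (Note that this rule exists precisely to ensure that $\neg \psi$ is available when needed, without multiplying the size of $\fe{\Ga}$; the above shows it is also semantically harmless.)

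Finally, the proviso that patently inconsistent sets are removed from $\mathcal{F}$ trivially preserves the invariant: if $\Phi$ contains $\chi$ and $\neg \chi$, then $\bigwedge \Phi$ is equivalent to $\bot$ and contributes nothing to a disjunction. Combining the preservation of the invariant at every step with the termination argument given in Definition~\ref{def:FullExp} (all generated sets lie inside the finite set $\cl{\Ga}$, so saturation is reached in finitely many steps), we conclude that the invariant holds for the final family $\fe{\Ga}$, which is exactly the claim of the proposition.
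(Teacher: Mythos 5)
Your proof is correct and follows essentially the same route as the paper's: the paper's own argument is a one-line version of your invariant, namely that by Lemma~\ref{lem:Decomp} every set-replacement operation preserves $\bigvee \{\bigwedge \De \mid \De \in \mathcal{F}\}$ up to logical equivalence, starting from $\bigwedge \Ga$. Your write-up merely fills in the case analysis (including the absorption argument for operation (3) and the removal of patently inconsistent sets) that the paper leaves implicit.
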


\begin{proof}
By Lemma~\ref{lem:Decomp}, 
  every set replacement operation applied to a family $\mathcal{F}$ preserves the formula
  $\bigvee \{\bigwedge \De \mid \De \in \fe{\Ga}\}$ up to logical
  equivalence. At the beginning, that formula is $\bigwedge \Ga$, hence 
  the claim follows.
\end{proof}

We now define Hintikka structures for \CMAELCD:

\begin{definition}
 \label{def:cmaehs}
 A \de{coalitional multiagent epistemic Hintikka structure} (CMAEHS)
 is a tuple 
 \[(\agents, S, \set{\rel{R}^D_A}_{A \in \powerne{\agents}},
 \set{\rel{R}^C_A}_{A \in \powerne{\agents}}, \ap, H)\] such that:
 \begin{itemize}
 \item $(\agents, S, \set{\rel{R}^D_A}_{A \in \powerne{\agents}},
   \set{\rel{R}^C_A}_{A \in \powerne{\agents}})$ is a CMAES (recall
   Definition~\ref{def:cmaes});
 \item \ap\ is a set of atomic propositions;
 \item $H$ is a labeling of the elements of $S$ with sets of
   \CMAELCD-formulae that satisfy the following constraints, for every
   $s, s' \in S$:
   
   \begin{enumerate}[label={{CH\arabic*}}, ref=(CH\arabic*)]
   \item\label{it:CHDS} $H(s)$ is fully expanded;
   \item\label{it:CHnegDK} If $\neg \distrib{A} \vp \in H(s)$, then $(s, t) \in
     \rel{R}^D_A$ and $\neg \vp \in H(t)$, for some $t\in S$;
   \item\label{it:CHDK} If $(s, s') \in \rel{R}^D_A$, then $\distrib{B} \vp \in
     H(s)$ iff $\distrib{B} \vp \in H(s')$, for every $B \subseteq A$;
   \item\label{it:CHnegCK} If $\neg \commonk{A} \vp \in H(s)$, then $(s, t) \in
     \rel{R}^C_A$ and $\neg \vp \in H(t)$, for some $t \in S$.
   \end{enumerate}
 \end{itemize}
\end{definition}

\begin{definition}
  \label{def:Hintikka_sat}
Let $\hintikka{H}$ be a CMAEHS with state space $S$. A  \CMAELCD-formula $\theta$  is \de{satisfiable} in \hintikka{H} if $\theta
  \in H(s)$, for some $s \in S$.  Likewise, a set of  \CMAELCD-formulae $\Theta$ is satisfiable in \hintikka{H} if $\Theta \subseteq H(s)$, for some 
  $s \in S$.
\end{definition}

\subsection{Equivalence of Hintikka structures and models for \CMAELCD}
\label{subsec:Hintikka_equals_models}

Here we show that satisfiability in Hintikka structures is
equivalent to satisfiability in models. For brevity, we only deal with single formulae; the extension to finite sets of formulae is straightforward.  The main complications in the proofs below arise due to the presence of distributed knowledge operators in the language
of a logic.  

Here we will prove that a \CMAELCD-formula $\theta$ is satisfiable in
a CMAEM iff it is satisfiable in a CMAEHS.  First, we show that
satisfiability in a CMAEM implies satisfiability in a CMAEHS.  Then, we
show that satisfiability in a CMAEHS implies satisfiability in a
pseudo-CMAEM, which in turn implies satisfiability in a CMAEM.

That satisfiability in a CMAEM implies satisfiability in a CMAEHS is
almost immediate. Given a CMAEM $\mmodel{M}$
with a set of states $S$, define the \emph{extended labeling
  function} $L^+_{\mmodel{M}}$ from $S$ to the power-set of
\CMAELCD-formulae as follows: $L^+_{\mmodel{M}}(s) =
\crh{\vp}{\sat{M}{s}{\vp}}$.  It is then routine to check the
following.

\begin{lemma}
  \label{lem:models_to_Hintikka}
  Let $\mmodel{M} = (\agents, S, \set{\rel{R}^D_A}_{A \in
    \powerne{\agents}}, \set{\rel{R}^C_A}_{A \in \powerne{\agents}},
  \ap, L)$ be a CMAEM satisfying $\theta$ and let $L^+_{\mmodel{M}}$ be
  the extended labeling on \mmodel{M}.  Then, $(\agents, S,
  \set{\rel{R}^D_A}_{A \in \powerne{\agents}},$ $\set{\rel{R}^C_A}_{A
    \in \powerne{\agents}}, \ap, L^+_{\mmodel{M}})$ is a CMAEHS
  satisfying $\theta$.  Therefore, satisfiability in a CMAEM implies
  satisfiability in a CMAEHS.
\end{lemma}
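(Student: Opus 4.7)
The plan is to verify directly that the structure $(\agents, S, \set{\rel{R}^D_A}_{A}, \set{\rel{R}^C_A}_{A}, \ap, L^+_{\mmodel{M}})$ satisfies each of the four Hintikka conditions \ref{it:CHDS}--\ref{it:CHnegCK}, and then to observe that $\theta \in L^+_{\mmodel{M}}(s)$ at whichever state $s$ witnesses satisfaction of $\theta$ in $\mmodel{M}$. The underlying CMAES is inherited from $\mmodel{M}$, so no structural conditions need rechecking.

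For \ref{it:CHDS}, I would argue that $L^+_{\mmodel{M}}(s)$ cannot contain a pair $\vp,\neg\vp$ since truth in a Kripke model is classical at each state; the $\alpha$- and $\beta$-closure conditions then follow immediately from Lemma~\ref{lem:Decomp}, which gives that each $\alpha$-formula is equivalent to the conjunction of its components (so all components are true at $s$) and each $\beta$-formula is equivalent to the disjunction of its components (so at least one is true at $s$). For \ref{it:CHnegDK} and \ref{it:CHnegCK}, the standard Kripke satisfaction clauses for $\distrib{A}$ and $\commonk{A}$ directly yield a witness $t$.

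The only slightly nontrivial condition is \ref{it:CHDK}. Given $(s,s') \in \rel{R}^D_A$ and $B \subseteq A$, I would invoke the frame condition $(\dag\dag)$ on $\rel{R}^D$ (which holds in every CMAEF because of $(\dag)$), giving $\rel{R}^D_A \subseteq \rel{R}^D_B$, together with the fact that $\rel{R}^D_B$ is an equivalence relation, hence symmetric and transitive. Then from $\distrib{B}\vp \in L^+_{\mmodel{M}}(s)$, every $\rel{R}^D_B$-successor of $s'$ is also an $\rel{R}^D_B$-successor of $s$ (by transitivity through the $\rel{R}^D_B$-edge from $s'$ back to $s$ obtained by symmetry), so $\vp$ holds there, giving $\distrib{B}\vp \in L^+_{\mmodel{M}}(s')$; the converse direction is symmetric.

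No step here is a genuine obstacle; the lemma is essentially a bookkeeping exercise. The subtlest point is simply remembering that the definition of CMAEF enforces $(\dag)$ and that each $\rel{R}^D_B$ is an equivalence relation, since these are exactly what make the ``backward'' direction of \ref{it:CHDK} go through for subcoalitions $B \subsetneq A$. The rest amounts to unpacking definitions.
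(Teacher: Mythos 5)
Your proposal is correct and matches the paper's intent exactly: the paper states this lemma as ``routine to check,'' and your verification of \ref{it:CHDS}--\ref{it:CHnegCK} (including the use of $(\dag)$, symmetry, and transitivity of $\rel{R}^D_B$ for \ref{it:CHDK}) is precisely the routine check being alluded to.
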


For the converse direction we need two steps, done in  
Lemma~\ref{lem:Hintikka_to_psmodels} and Lemma~\ref{lem:psmodels_to_models}.

\begin{lemma}
 \label{lem:Hintikka_to_psmodels}
 Let $\theta$ be a \CMAELCD-formula satisfiable in a CMAEHS.  Then,
 $\theta$ is satisfiable in a pseudo-CMAEM.
\end{lemma}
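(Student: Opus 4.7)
The plan is to convert the given CMAEHS $\hintikka{H} = (\agents, S, \set{\rel{R}^D_A}_{A \in \powerne{\agents}}, \set{\rel{R}^C_A}_{A \in \powerne{\agents}}, \ap, H)$, which satisfies $\theta$ at some $s_0 \in S$, into a pseudo-CMAEM on the same state space by enlarging the relations so they become equivalence relations satisfying $(\dag\dag)$, while preserving the information recorded in the Hintikka labels. Concretely, for each coalition $A$ I define $\rel{R}'^D_A$ to be the reflexive, symmetric, transitive closure of $\bigunion_{C \supseteq A} \rel{R}^D_C$; I let $\rel{R}'^C_A$ be the reflexive, transitive closure of $\bigunion_{B \subseteq A} \rel{R}'^D_B$; and I set $L(s) := H(s) \inter \ap$. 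Each $\rel{R}'^D_A$ is an equivalence relation by construction, and $(\dag\dag)$ is built in: if $B \subseteq A$ then $\set{C : C \supseteq A} \subseteq \set{C : C \supseteq B}$, so $\rel{R}'^D_A \subseteq \rel{R}'^D_B$. Hence the resulting tuple $\mmodel{M}'$ is a pseudo-CMAEM, and by construction $\rel{R}^D_A \subseteq \rel{R}'^D_A$ and $\rel{R}^C_A \subseteq \rel{R}'^C_A$ for every $A$.

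The core of the proof is to show, by structural induction on $\vp$, that $\vp \in H(s)$ implies $\sat{M'}{s}{\vp}$ for all $s \in S$; applied to $\theta \in H(s_0)$ this yields the lemma. Literal cases follow from the definition of $L$ together with non-patent-inconsistency of $H(s)$, supplied by \ref{it:CHDS}; the Boolean cases reduce to the $\alpha$- and $\beta$-components via the full expansion conditions. For the existential modalities $\neg \distrib{A}\psi$ and $\neg \commonk{A}\psi$, the witnesses furnished by \ref{it:CHnegDK} and \ref{it:CHnegCK} already live in $\rel{R}^D_A \subseteq \rel{R}'^D_A$ and $\rel{R}^C_A \subseteq \rel{R}'^C_A$, respectively, so the inductive hypothesis applied to $\neg \psi$ concludes both cases.

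For the universal modality $\distrib{A}\psi \in H(s)$, I fix $t$ with $(s,t) \in \rel{R}'^D_A$ and pick a witnessing path whose atomic steps lie in $\rel{R}^D_C$ or its converse for various $C \supseteq A$. Because $A \subseteq C$, condition \ref{it:CHDK} supplies the biconditional $\distrib{A}\psi \in H(u) \Leftrightarrow \distrib{A}\psi \in H(u')$ across each step (handling forward and backward edges uniformly); iterating along the path gives $\distrib{A}\psi \in H(t)$, whence $\psi \in H(t)$ by full expansion of the $\alpha$-formula $\distrib{A}\psi$ and $\sat{M'}{t}{\psi}$ by induction.

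The main obstacle is the case $\commonk{A}\psi \in H(s)$, which requires a nested induction and is the reason the construction is delicate. I prove that $\commonk{A}\psi \in H(t)$ whenever $(s,t) \in \rel{R}'^C_A$ by induction on the length of a connecting path whose steps $(u,u')$ lie in some $\rel{R}'^D_B$ with $B \subseteq A$. The key inner step is: if $\commonk{A}\psi \in H(u)$ and $(u,u') \in \rel{R}'^D_B$, then $\commonk{A}\psi \in H(u')$. Full expansion of $\commonk{A}\psi$ at $u$, using the $\alpha$-decomposition given by Proposition~\ref{fixpointC}, yields $\distrib{a}\commonk{A}\psi \in H(u)$ for every $a \in A$; pick any $a \in B$ (which is non-empty) and observe that each atomic step of the path witnessing $(u,u') \in \rel{R}'^D_B$ lies in some $\rel{R}^D_C$ (or its converse) with $\set{a} \subseteq B \subseteq C$, so \ref{it:CHDK} transports $\distrib{a}\commonk{A}\psi$ all the way to $H(u')$; full expansion at $u'$ then extracts $\commonk{A}\psi \in H(u')$, and a further application of full expansion produces $\psi \in H(t)$ at the end of the outer path. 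The structural induction hypothesis on the proper subformula $\psi$ then delivers $\sat{M'}{t}{\psi}$. The subtle point, and the reason this is the hardest case, is that neither the fixpoint unfolding of $\commonk{A}$ nor the transport provided by \ref{it:CHDK} suffices on its own: one must interleave them, using full expansion to convert $\commonk{A}\psi$ into $\distrib{a}$-formulas that \ref{it:CHDK} can move across the reflexive-symmetric-transitive closure, and then use full expansion again at the target to recover $\commonk{A}\psi$.
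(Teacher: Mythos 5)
Your proposal is correct and follows essentially the same route as the paper: the same relational closures ($\rel{R}'^D_A$ as the reflexive--symmetric--transitive closure of $\bigunion_{C \supseteq A}\rel{R}^D_C$), the same labeling, and the same induction interleaving \ref{it:CHDS} and \ref{it:CHDK} to transport $\distrib{a}\commonk{A}\psi$ along paths in the $\commonk{A}$ case. The only cosmetic difference is that you run a single-direction induction over all formulas of the closure (with negated formulas as separate cases, so the induction must be on formula size rather than strict subformula structure), whereas the paper proves the two implications $(i)$ and $(ii)$ simultaneously for positive $\chi$; the content of each case is identical.
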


\begin{proof}
 Let $\hintikka{H} = (\agents, S, \set{\rel{R}^D_A}_{A \in
   \powerne{\agents}}, \set{\rel{R}^C_A}_{A \in \powerne{\agents}},
 \ap, H)$ be an CMAEHS for $\theta$.  We construct a pseudo-CMAEM
 \mmodel{M'} satisfying $\theta$ out of \hintikka{H} as follows.

 First, for every $A \in \powerne{\agents}$, let $\rel{R}'^D_A$ be the
 reflexive, symmetric, and transitive closure of $\bigunion_{A
   \subseteq B} \rel{R}^D_{B}$ and let $\rel{R}'^C_A$ be the
 transitive closure of $\bigunion_{a \in A} \rel{R}'^D_a$. Thus, both 
 $\rel{R}'^D_A$ and  $\rel{R}'^C_A$ are equivalence relations and 
 $\rel{R}^D_A \subseteq \rel{R}'^D_A$ and $\rel{R}^C_A \subseteq
 \rel{R}'^C_A$, for every $A \in \powerne{\agents}$. Second, let $L(s)
 = H(s) \inter \ap$, for every $s \in S$.  It is then immediate to check
 that $B \subseteq A$ implies $\rel{R}'^D_A \subseteq \rel{R}'^D_B$,
 and hence, $\mmodel{M}' = (\agents, S, \set{\rel{R}'^D_A}_{A \in
   \powerne{\agents}}, \set{\rel{R}'^C_A}_{A \in \powerne{\agents}},
 L)$ is a pseudo-CMAEM.

Basically this construction relabels the edges of a Hintikka structure such that if a directed edge is labelled with a coalition $A$, it is made bidirectional and is further labelled with all coalitions that are subsets of $A$. Hereafter the relation is then made transitive and reflexive. The labels of the states are reduced to only containing (positive) atoms. 
Figure \ref{fig:HS_to_pseudo} illustrates the process of  transforming the Hintikka structure on the left into the pseudo-model on the right. 
\begin{figure}
\begin{equation*}
\small{\xymatrix{
\{\neg \di{a}\neg p,\di{b}q,q\}\ar[d]_{\{a\},\{b\}}\ar[r]^(0.45){\{a,b\}}&\{\di{b}q,q,\neg\co{\{a,b\}}r,\neg r\}\ar@(dr,dl)^{\{a,b\}}\\
\{\neg\neg p,p,\di{b}q,q\}
}}\quad \leadsto\quad 
\small{
\xymatrix{
\{q\}\ar@(dl,ul)^{\tiny{\begin{tabular}{@{}>{$}c<{$}@{}}\set{a,b},\\ \set{a},\set{b}\end{tabular}}}\ar@{<->}[d]_{\tiny{\set{a},\set{b}}}\ar@{<->}[r]^{\tiny{\begin{tabular}{@{}>{$}c<{$}@{}}\set{a,b},\\ \set{a},\set{b}\end{tabular}}}&\{q\}\ar@(dr,ur)_{\tiny{\begin{tabular}{@{}>{$}c<{$}@{}}\set{a,b},\\ \set{a},\set{b}\end{tabular}}}\\
\{p.q\}\ar@(dr,dl)^{\tiny{\set{a,b},\set{a},\set{b}}}\ar@{<->}[ur]_{\tiny{\set{a},\set{b}}}
}
}
\end{equation*}
\caption{Example on transforming a Hintikka structure to a pseudo-model using the construction from the proof of Lemma~\ref{lem:Hintikka_to_psmodels}}
\label{fig:HS_to_pseudo} 
\end{figure}
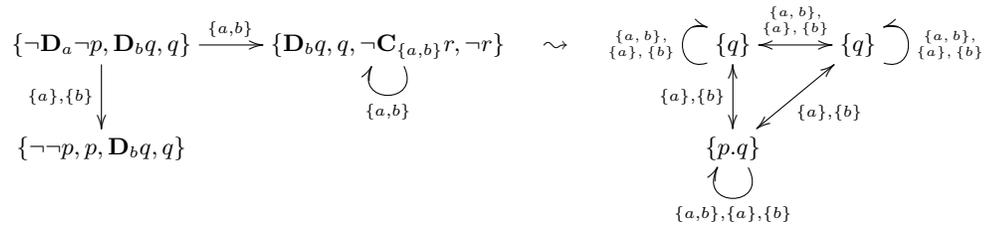

 To complete the proof of the lemma, we show, by induction on the
 structure of the  formulae in $\cl{\theta}$ that, for every $s \in S$ and every
 formula $\chi$, the following hold:
 
 \begin{center}
 \(
 \begin{array}{ll}
   (i) & \chi \in H(s) \text{ implies } \sat{M'}{s}{\chi}; \\
   (ii) & \neg \chi \in H(s) \text{ implies } \sat{M'}{s}{\neg \chi}.
 \end{array}
 \)  
 \end{center}

 \noindent The statement of the lemma then follows.
 
 Let $\chi$ be some $p \in \ap$.  Then, $p \in H(s)$ implies $p \in
 L(s)$ and, thus, \sat{M'}{s}{p}; if, on the other hand, $\neg p \in
 H(s)$, then due to \ref{it:CHDS}, $p \notin H(s)$ and thus $p \notin L(s)$;
 hence, \sat{M'}{s}{\neg p}.

 Assume that the claim holds for all subformulae of $\chi$; then, we have
 to prove that it holds for $\chi$, as well.

 Suppose that $\chi$ is $\neg \vp$.  If $\neg \vp \in H(s)$, then the
 inductive hypothesis immediately gives us $\sat{M'}{s}{\neg \vp}$;
 if, on the other hand, $\neg \neg \vp \in H(s)$, then by virtue of
 \ref{it:CHDS}, $\vp \in H(s)$ and hence, by inductive hypothesis,
 $\sat{M'}{s}{\vp}$ and thus $\sat{M'}{s}{\neg \neg \vp}$.

 The case of $\chi = \vp \con \psi$ is straightforward, using \ref{it:CHDS}.

 Suppose that $\chi$ is $\distrib{A} \vp$.  Assume, first, that
 $\distrib{A} \vp \in H(s)$.  In view of the inductive hypothesis, it
 suffices to show that $(s, t) \in \rel{R}'^D_A$ implies $\vp \in
 H(t)$.  So, assume that $(s, t) \in \rel{R}'^D_A$.  There are two
 cases to consider.  If $s = t$, then the conclusion immediately
 follows from \ref{it:CHDS}.  If, on the other hand, $s \ne t$, then there
 exists an undirected path between $s$ and $t$ along the relations of
 the form $\rel{R}^D_{B}$, where each $B$ is a superset of $A$.  Then,
 in view of \ref{it:CHDK}, $\distrib{A} \vp \in H(t)$; hence, by \ref{it:CHDS}, $\vp
 \in H(t)$, as desired.

 Assume, next, that $\neg \distrib{A} \vp \in H(s)$.  In view of the
 inductive hypothesis, it suffices to show that there exists $t \in S$
 such that $(s, t) \in \rel{R}'^D_A$ and $\neg \vp \in H(t)$.  By
 \ref{it:CHnegDK}, there exists $t \in S$ such that $(s, t) \in \rel{R}^D_A$ and
 $\neg \vp \in H(t)$.  As $\rel{R}^D_A \subseteq \rel{R}'^D_A$, the
 desired conclusion follows.

 Suppose now that $\chi$ is $\commonk{A} \vp$.  Assume that $\commonk{A}
 \vp \in H(s)$.  In view of the inductive hypothesis, it suffices to
 show that if $t$ is $A$-reachable from $s$ in $\mmodel{M}'$, then
 $\vp \in H(t)$. So, assume that either $s = t$ or, for some $n \geq
 1$, there exists a sequence of states $s = s_0, s_1, \ldots, s_{n-1},
 s_n = t$ such that, for every $0 \leq i < n$, there exists $a_i \in
 A$ such that $(s_i, s_{i+1}) \in \rel{R}'^D_{a_i}$. In the former
 case, the desired conclusion follows from \ref{it:CHDS}. In the latter, we
 can show by induction on $i$, for $0 \leq i < n$, using \ref{it:CHDK}
 and \ref{it:CHDS}, that $\distrib{a_i} \commonk{A} \vp \in H(s_i)$.  Then, in particular, $\distrib{a_{n-1}} \commonk{A} \vp \in H(s_{n-1})$, and again, by \ref{it:CHDK}, $\distrib{a_{n-1}}\commonk{A}\vp\in H(t)$ and thus by \ref{it:CHDS}, $\commonk{A}\vp\in H(t)$ and $\vp \in H(t)$.

 Assume, on the other hand, that $\neg \commonk{A} \vp \notin H(s)$.
 Then, the desired conclusion follows from \ref{it:CHnegCK}, the inclusion
 $\rel{R}^C_A \subseteq \rel{R}'^C_A$, and the inductive hypothesis.
 
\end{proof}

We now prove that satisfiability in a pseudo-CMAEM implies
satisfiability in a CMAEM. To that end, we use a modification of the
construction from~\cite[Appendix A1]{FHV92} to show that if $\theta$
is satisfiable in a pseudo-CMAEM, then it is satisfiable in a
``tree-like'' pseudo-CMAEM that actually turns out to be a bona fide
CMAEM.
 To present the proof, we need some preliminary definitions.

\begin{definition}
  \label{def:maximal_paths}
  Let $\mmodel{M} = (\agents, S, \set{\rel{R}^D_A}_{A \in
    \powerne{\agents}}, \set{\rel{R}^C_A}_{A \in \powerne{\agents}}, \ap,
  L)$ be a \mbox{(pseudo-)} CMAEM and let $s, t \in S$.  A \de{maximal
    path from $s$ to $t$} in \mmodel{M} is a sequence $s_0, A_0, s_1,
  A_1,\ldots, s_{n-1}, A_{n-1}, s_n$ where $s=s_0$ and $t=s_n$, such
  that $n=0$ and $s=t$ or, for every $0 \leq i < n$, $(s_i, s_{i+1}) \in
  \rel{R}^D_{A_i}$, but $(s_i, s_{i+1}) \in \rel{R}^D_{B}$ does not
  hold for any $B$ with $A_i \subset B \subseteq \agents$.  A segment
  $\rho'$ of a maximal path $\rho$ starting and ending with a state is
  a \de{sub-path} of $\rho$.
\end{definition}

Notice that, in general, there might be several maximal paths between
a pair of states.

For a path $\tau = s_0, A_0, s_1, \ldots, s_{n-1}, A_{n-1}, s_n$, we
denote by $\tau_{|i}$ the sub-path of $\tau$ starting in $s_0$ and
ending in $s_i$, i.e. $\tau_{|i} = s_0, A_0, s_1,\ldots, A_{i-1},
s_{i}$ and by $|\tau|$ the length of $\tau$, i.e. $n$. We denote the
last element of a path $\tau$, which is a state, by $l(\tau)$ and
the second last element of $\tau$, which is a coalition, by
$sl(\tau)$.

\begin{lemma}
 \label{lem:psmodels_to_models}
 Let $\theta$ be a \CMAELCD-formula satisfiable in a pseudo-CMAEM;
 then, $\theta$ is satisfiable in a CMAEM.
\end{lemma}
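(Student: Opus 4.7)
The overall plan is to unravel the given pseudo-CMAEM $\mmodel{M} = (\agents, S, \set{\rel{R}^D_A}_{A \in \powerne{\agents}}, \set{\rel{R}^C_A}_{A \in \powerne{\agents}}, \ap, L)$ into a tree-like CMAEM $\mmodel{M}'$ whose states are maximal paths in $\mmodel{M}$ starting from a fixed $s_0$ with $\sat{M}{s_0}{\theta}$, closely following the construction in \cite[Appendix A1]{FHV92}. The point of using a tree shape is that any chain of single-agent steps connecting two states then has a unique canonical route, so that mutual connectivity under every $a \in A$ in $\mmodel{M}'$ can be read back as connectivity under the whole coalition $A$ in $\mmodel{M}$, which is precisely what is needed to enforce the identity $(\dag)$.

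Concretely, let $W$ be the set of all maximal paths in $\mmodel{M}$ starting at $s_0$, set $L'(\tau) := L(l(\tau))$, and for each $a \in \agents$ define a symmetric one-step relation $\sim_a$ on $W$ by $\tau \sim_a \tau'$ iff one of $\tau, \tau'$ extends the other by a single edge whose coalition label contains $a$. Let $\rel{R}'^D_a$ be the reflexive-transitive closure of $\sim_a$, and put
\[
\rel{R}'^D_A := \biginter_{a \in A} \rel{R}'^D_a,
\]
with $\rel{R}'^C_A$ defined as the transitive closure of $\bigunion_{a \in A} \rel{R}'^D_a$. Each $\rel{R}'^D_a$ is then an equivalence relation, each $\rel{R}'^D_A$ is an intersection of equivalence relations, and condition $(\dag)$ holds by construction, so $\mmodel{M}'$ is a bona fide CMAEM.

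The heart of the argument is a truth lemma: for every $\phi \in \cl{\theta}$ and every $\tau \in W$, $\sat{M'}{\tau}{\phi}$ iff $\sat{M}{l(\tau)}{\phi}$; since $s_0$ is the trivial path in $W$, this yields $\sat{M'}{s_0}{\theta}$. The Boolean cases are routine and the $\commonk{A}$ case reduces to the $\distrib{a}$ cases via the reachability characterisation of $\commonk{A}$; the main obstacle is the $\distrib{A}$ case. For the forward direction, suppose $\sat{M}{l(\tau)}{\distrib{A}\phi}$ and $(\tau, \tau') \in \rel{R}'^D_A$; then $(\tau, \tau') \in \rel{R}'^D_a$ for every $a \in A$, which forces, by the tree structure of $W$, the unique up-and-then-down route from $\tau$ to $\tau'$ to consist entirely of edges whose coalition labels contain every $a \in A$, hence contain $A$ itself. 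By $(\dag\dag)$ each such edge witnesses an $\rel{R}^D_A$-edge in $\mmodel{M}$, and transitivity of $\rel{R}^D_A$ then yields $(l(\tau), l(\tau')) \in \rel{R}^D_A$, so the hypothesis gives $\sat{M}{l(\tau')}{\phi}$ and the inductive hypothesis closes the case. For the converse, given a witness $t \in S$ with $(l(\tau), t) \in \rel{R}^D_A$ and $\sat{M}{t}{\neg \phi}$, choose a maximal $B \supseteq A$ with $(l(\tau), t) \in \rel{R}^D_B$ and extend $\tau$ by the edge $(B, t)$ to obtain $\tau' \in W$; then $(\tau, \tau') \in \rel{R}'^D_a$ for every $a \in B \supseteq A$, so $(\tau, \tau') \in \rel{R}'^D_A$ and the inductive hypothesis finishes the argument.
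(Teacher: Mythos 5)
Your proposal is correct and follows essentially the same route as the paper: both unravel the (generated sub)pseudo-model into the tree of maximal paths from the satisfying state and exploit the fact that, in a tree, being $\rel{R}'^D_a$-connected for every $a \in A$ forces every edge on the unique connecting route to carry a label containing all of $A$, so that $(\dag\dag)$ and the equivalence properties of $\rel{R}^D_A$ transfer the connection back to the original pseudo-model. The only difference is organisational: you build $(\dag)$ into the definition of $\rel{R}'^D_A$ as an intersection and place the real work in the $\distrib{A}$ case of a truth lemma, whereas the paper defines $\rel{R}^*{}^D_A$ as the reflexive-symmetric-transitive closure of superset-labelled one-step edges, proves $(\dag)$ as a separate claim via its condition \eqref{eq:condforin*}, and concludes with a bisimulation --- the two definitions provably coincide, so the mathematical content is the same.
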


\begin{proof}
  Suppose that $\theta$ is satisfied in a pseudo-CMAEM $\mmodel{M}$ at
  state $s$. Let $\mmodel{M}_s = (\agents, S, \set{\rel{R}^D_A}_{A \in
    \powerne{\agents}}, \set{\rel{R}^C_A}_{A \in \powerne{\agents}},
  \ap, L)$ be the submodel of \mmodel{M} generated by $s$.  Then,
  $\mmodel{M}_s, s \Vdash \theta$ since $\mmodel{M}_s$ and
  $\mmodel{M}$ are locally bisimilar at $s$. Next, we unravel
  $\mmodel{M}_s$ into a model $\mmodel{M}^* = (\agents, S^*,
  \set{\rel{R}^*{}^D_A}_{A \in \powerne{\agents}},
  \set{\rel{R}^*{}^C_A}_{A \in \powerne{\agents}}, \ap, L^*)$, as
  follows.

  First, call a maximal path $\rho$ in $\mmodel{M}_s$ an $s$-max-path
  if the first component of $\rho$ is $s$, and let $S^{*}$ be the set
  of all $s$-max-paths in $\mmodel{M}_s$.  Notice that $s$ by itself
  is an $s$-max-path with $l(s) = s$.
  
  For every $A\in \powerne{\agents}$, let 
  \[
  \rel{R}'^D_A = \crh{(\rho, \tau)} {\rho, \tau \in S^*,\: \tau_{|\,
      |\tau|-1} = \rho \text{ and } sl(\tau)\supseteq A},
  \]
  i.e.  $(\rho, \tau) \in \rel{R}'^D_A$ if $\tau$ extends $\rho$ with
  one step labelled by a coalition containing $A$.
  Next, let $\rel{R}^*{}^D_A$ be a reflexive, symmetric, and
  transitive closure of $\rel{R}'^D_A$.  Notice that $(\rho,\tau)\in
  \rel{R}^*{}^D_A$ holds for two distinct paths $\rho$ and $\tau$ iff
  there exists a sequence $\rho_0, \ldots, \rho_n\in S^*$ with
  $\rho=\rho_0$ and $\tau=\rho_n$ such that for all $i<n$, either
  $(\rho_i,\rho_{i+1})\in\rel{R}'^D_A$ or
  $(\rho_{i+1},\rho_{i})\in\rel{R}'^D_A$.
  It then follows that the following \emph{downward closure condition}
  holds:
    
    $$\textbf{(DC)} \text{ If }(\rho, \tau) \in
  \rel{R}^*{}^D_A \text{ and } B \subseteq A, \text{ then } (\rho,
  \tau) \in \rel{R}^*{}^D_B.$$ 

  The relations $\rel{R}^*{}^C_A$ are defined as in any CMAEF.  To
  complete the definition of $\mmodel{M}^*$, we put $L^*(\rho) =
  L(l(\rho))$, for every $\rho \in S^*$. Notice that $\mmodel{M}^*$ is
  \emph{tree-like} in the sense that the structure $(S^*,
  \set{\rel{R}'^D_A}_{A \in \powerne{\agents}})$ is a
  tree. 

By this construction we basically remove all `non-maximal' edges between two vertices from the part of the given pseudomodel that can be reached by the given state $s$. Then we build paths by starting in $s$ and then traversing the resulting graph via the edges. E.g., if we consider the pseudo-model $\mmodel{M}$ in Figure \ref{fig:HS_to_pseudo}, and we let the top-left-most state be $s$, then $\sat{M}{s}{\neg\di{a}\neg p\land \di{b}q}$. $S^*$ will in this case be all paths starting in $s$ and following  
the links in the graph.  
\[
\small{
\xymatrix{
s\ar@(dl,ul)^{\set{a,b}}\ar@{<->}[d]_{\set{a},\set{b}}\ar@{<->}[r]^{\set{a,b}}&r\ar@(dr,ur)_{\set{a,b}}\\
t\ar@(dr,dl)^{\set{a,b}}\ar@{<->}[ur]_{\set{a},\set{b}}
}
}
\]
I.e. $\rho = (s,\{a,b\},s,\{a\},t)$ and $\tau = (s,\{a,b\},r,\{b\},t)$ are in $S^*$, while $\rho'=(s,\{a\},s,\{b\},t)\notin S^*$. 

We have $(\rho,\tau)\notin \rel{R}^*{}^D_{a}$, $(\rho,\tau)\notin \rel{R}^*{}^D_{b}$ and $(\rho,\tau)\notin \rel{R}^*{}^D_{a,b}$. On the other hand, $(\tau, (s,\{a,b\},r,\{a,b\},s) )\in\rel{R}^*{}^D_{b}$.

In this example, $L^*(\rho)=L^*(\tau)\overset{\text{def.}}=L(t)=\{p,q\}$.

  It is clear from the construction, namely from \textbf{(DC)}, that
  $\mmodel{M}^*$ is a pseudo-CMAEM, and in the following, we will show
  that condition ($\dag$) of Definition \ref{def:cmaef} also holds, so that
  $\mmodel{M}^*$ is a CMAEM.

  First, we notice that, since $\mmodel{M}^*$ is tree-like, we have
  $(\rho, \tau)\in\rel{R}^*{}^D_A$ iff there exists $k \geq 0$, with
  $k\leq|\rho|$ and $k\leq|\tau|$, such that
  \begin{equation}\label{eq:condforin*}
\begin{gathered}
  \rho_{|k} = \tau_{|k}, \text{ and}\\
  \text{ for all } k<i\leq|\tau| \text{ and } k<j\leq|\rho|,
  A\subseteq sl(\tau_{|i}) \text{ and } A\subseteq sl(\rho_{|j}).
\end{gathered}
\end{equation}

(The situation is depicted in Figure \ref{fig:possiblepaths}.)
\begin{figure}[htbp]\centering 
\setlength{\unitlength}{8mm}
\begin{picture}(7.5, 2.5)
 \put(-1,1){\circle*{0.15}}
 \qbezier(-1, 1)(-0.5,1.5)(0,  1)
 \put(-1,1.3){\small{$s$}}
 \put(0,1){\circle{0}}
 \qbezier(0, 1)(0.5,0.5)(1,  1)
 \put(1,1){\circle{0}}
 \put(1.3,1){\small{$\ldots$}}
 \put(2,1){\circle{0}}
 \qbezier(2, 1)(2.5,1.5)(3,  1)
 \put(3,1){\circle*{0.15}}
 \put(3.3,1){\small{$l(\rho_{|k})=l(\tau_{|k})$}}

 \qbezier(3, 1)(4,2)(4,  2)
 \put(3.1,1.7){\tiny{$\supseteq A$}}
 \put(4,2){\circle{0}}

 \put(4.3,2){\small{$\ldots$}}

 \put(5,2){\circle{0}}
 \qbezier(5, 2)(5.5,2.5)(6,  2)
 \put(5.2,2.4){\tiny{$\supseteq A$}}
 \put(6,2){\circle{0}}
 \qbezier(6, 2)(6.5,1.5)(7,  2)
 \put(6.2,1.9){\tiny{$\supseteq A$}}
 \put(7,2){\circle{0}}
 \qbezier(7, 2)(7.5,2.5)(8,  2)
 \put(7.2,2.4){\tiny{$\supseteq A$}}
 \put(8,2){\circle*{0.15}}
 \put(8.3,2){\small{$l(\tau)$}}

 \qbezier(3, 1)(3,1)(4,  0)
 \put(3.0,0.2){\tiny{$\supseteq A$}}
 \put(4,0){\circle{0}}
 
 \put(4.3,0){\small{$\ldots$}}
 
 \put(5,0){\circle{0}}
 \qbezier(5, 0)(5.5,-0.5)(6,  0)
 \put(5.2,0.0){\tiny{$\supseteq A$}}
 \put(6,0){\circle{0}}
 \put(6,0){\circle*{0.15}}
 \put(6,-0.4){\footnotesize{$l(\rho_{|n-1})$}}
 \qbezier(6, 0)(6.5,0.5)(7,  0)
 \put(6.2,0.3){\tiny{$\supseteq A$}}
 \put(7,0){\circle*{0.15}}
 \put(7.3,0){\small{$l(\rho)$}}
\end{picture}
\caption{The situation from \eqref{eq:condforin*} drawn in
  $\mmodel{M}$, i.e. the dots/circles belongs to $S$, and the links
  are links in $\rel{R}^D$}
\label{fig:possiblepaths}
\end{figure}
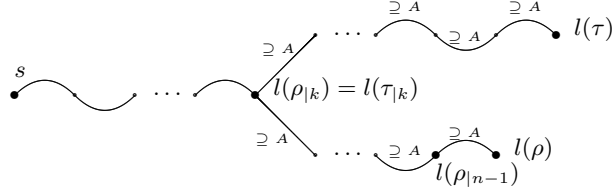%
As stated, we have to prove that
$\rel{R}^*{}^D_A = \biginter_{a \in A} \rel{R}^*{}^D_a$ for every $A
\in \powerne{\agents}$. The left-to-right inclusion immediately
follows from \textbf{(DC)}. For the converse, assume that $(\rho,
\tau) \in \rel{R}^*{}^D_a$ holds for every $a \in A$.  Then, for every
$a\in A$, according to \eqref{eq:condforin*}, there exists $k_a \geq
0$ such that $\rho_{|k_a} = \tau_{|k_a}$ and $\{a\}\subseteq
sl(\tau_{|i}), sl(\rho_{|j})$ for every $|\tau| \geq i > k_a$ and
every $|\rho| \geq j>k_a$. Now, let $k$ be the largest $k_a$
satisfying this condition (such a $k$ exists since
$\mmodel{M}^*$ is tree-like). Then, $\rho_{|k} = \tau_{|k}$, and for
every $a \in A$, the inclusions $\{a\}\subseteq sl(\tau_{|i})$ and
$\{a\}\subseteq sl(\rho_{|j})$ hold for every $|\tau| \geq i > k$ and
every $|\rho| \geq j >k$. Therefore, condition \eqref{eq:condforin*}
is fulfilled for $A$ and $k$, and hence $(\rho, \tau)\in
\rel{R}^*{}^D_A$, as desired.

 Finally, it remains to prove that $\mmodel{M}^*$ satisfies $\theta$.
 From \eqref{eq:condforin*} we see, that if $(\rho, \tau) \in
 \rel{R}^*{}^D_A$, then $(l(\rho), l(\tau)) \in \rel{R}^D_A$, since every
 $\rel{R}^D_A$ is an equivalence relation. It is now easy
 to check that the relation $Z = \crh{(\rho, l(\rho))}{\rho \in S^*}$
 is a bisimulation between $\mmodel{M}^*$ and $\mmodel{M}_s$.  Since
 $(s, l(s)) \in Z$, it follows that \sat{M^*}{s}{\theta}, and we are
 done.
\end{proof}

\begin{theorem}
 \label{thr:models_equal_Hintikka}
 Let $\theta$ be a \CMAELCD-formula.  Then, $\theta$ is satisfiable in
 a CMAEHS iff it is satisfiable in a CMAEM.
\end{theorem}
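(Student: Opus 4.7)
The proof is essentially a bookkeeping exercise, since all the substantive work has already been packaged into the three preceding lemmas. My plan is to chain them together, treating each direction of the biconditional separately.

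For the right-to-left direction (satisfiability in a CMAEM implies satisfiability in a CMAEHS), I would simply invoke Lemma~\ref{lem:models_to_Hintikka}: given a CMAEM $\mmodel{M}$ that satisfies $\theta$ at some state, the extended labeling $L^+_{\mmodel{M}}$ turns $\mmodel{M}$ into a CMAEHS satisfying $\theta$ at the same state. No additional work is needed beyond citing the lemma.

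For the left-to-right direction (satisfiability in a CMAEHS implies satisfiability in a CMAEM), I would compose Lemma~\ref{lem:Hintikka_to_psmodels} and Lemma~\ref{lem:psmodels_to_models}. Concretely, assume $\theta$ is satisfiable in a CMAEHS \hintikka{H}. By Lemma~\ref{lem:Hintikka_to_psmodels}, $\theta$ is satisfiable in some pseudo-CMAEM $\mmodel{M}'$. Then, applying Lemma~\ref{lem:psmodels_to_models} to $\mmodel{M}'$, we obtain a CMAEM $\mmodel{M}^*$ in which $\theta$ is satisfiable. Combining these two transformations yields the desired CMAEM.

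Since both directions reduce to invocations of already-established lemmas, there is no genuine obstacle in this proof — the hard work (handling distributed knowledge labels via the symmetric/transitive closure, and unraveling the pseudo-model into a tree-like bona fide model via maximal paths) was all carried out in the lemmas. The theorem's proof is therefore just two short sentences, one per direction, each citing the appropriate lemma (or pair of lemmas). I would keep the presentation terse, noting only the chain of implications
\[
\text{CMAEM} \;\Longrightarrow\; \text{CMAEHS} \;\Longrightarrow\; \text{pseudo-CMAEM} \;\Longrightarrow\; \text{CMAEM},
\]
and observing that the first implication is Lemma~\ref{lem:models_to_Hintikka} while the composition of the last two is the content of Lemmas~\ref{lem:Hintikka_to_psmodels} and~\ref{lem:psmodels_to_models}.
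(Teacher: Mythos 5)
Your proposal is correct and matches the paper's own proof exactly: the paper also derives the theorem immediately from Lemmas~\ref{lem:models_to_Hintikka}, \ref{lem:Hintikka_to_psmodels}, and \ref{lem:psmodels_to_models}, with the same division of labour between the two directions. Your version is just a slightly more explicit write-up of the same one-line argument.
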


\begin{proof}
  Immediate from Lemmas~\ref{lem:models_to_Hintikka},
  \ref{lem:Hintikka_to_psmodels}, and \ref{lem:psmodels_to_models}.
\end{proof}

\section{Tableau procedure for testing satisfiability in \CMAELCD}
\label{sec:tableaux}

In this section, we present our tableau algorithm for checking
(constructive) satisfiability of formulae of \CMAELCD.  We start off
by explaining the general philosophy underlying our tableau procedure
and then present it in detail.

\subsection{Basic ideas and overview of the tableau procedure}
\label{subsec:overview_of_procedure}

Traditionally, the propositional tableau method works by decomposing
the formula whose satisfiability is being tested into its $\alpha$-,
resp. $\beta$- components -- repeatedly, until producing all full
expansions of that formula. All these components belong to the closure
of the input formula. When the closure is finite (as it is usually the
case with modal and temporal logics) the termination of the
tableau-building procedure is guaranteed because there are only
finitely many full expansions.

Furthermore, in the tableau method for the classical propositional
logic that decomposition into components produces a tree representing
an exhaustive search for a Hintikka set, the propositional analogue of
Hintikka structures, for the input formula. If at least one branch of
the tree remains open, it produces a full expansion of the input formula, which is a Hintikka set for this formula. In this case, the formula is
pronounced satisfiable; otherwise, it is declared unsatisfiable. In
the case of modal and temporal logics, local decomposition steps,
producing full expansions, are interleaved with steps along the
accessibility/transition relations, producing sets of formulae that
are supposed to be true at \emph{successors} of the current
state. These sets are subjected, again, to local decomposition into
components, eventually producing their full expansions, etc. In order
to distinguish fully expanded sets from those produced after
transition to successors, we will deal with two types of nodes of the
tableau, respectively called `states' and `prestates'. In order to
ensure termination of the construction process, we will systematically
reuse states and prestates labelled with the same sets of formulae.

The tableau procedure for testing a formula $\theta$ for
satisfiability attempts to construct a non-empty graph
$\tableau{T}^{\theta}$ (called itself a \fm{tableau}) representing
``sufficiently many'' CMAEHSs for $\theta$ in the sense that if
$\theta$ is satisfiable in any CMAEHS, then it is satisfiable in a
CMAEHS represented by the tableau. The procedure consists of three
major sub-procedures, or phases: \fm{construction}, \fm{prestate
  elimination}, and \fm{state elimination}.
During the construction phase, we build the \emph{pretableau}
$\tableau{P}^{\theta}$---a directed graph whose nodes are sets of
formulae of two types: \fm{states}\footnote{From now on we will use
  the term ``state'' in two related but distinct senses: as a state of a tableau
  and as a state of a semantic structure (frame, model, Hintikka
  structure). The use of term ``state'' will usually be clear from
  the context or explicitly specified.
  } 
  and \fm{prestates}, as
explained above. States represent (labels of) states of the CMAEHSs
that the tableau attempts to construct, while prestates are only used
temporarily, during the construction phase.

During the prestate elimination phase, we create a smaller graph
$\tableau{T}_0^{\theta}$ out of $\tableau{P}^{\theta}$, called the
\fm{initial tableau for $\theta$}, by eliminating all the prestates of
$\tableau{P}^{\theta}$ and adjusting its edges, as prestates have
already fulfilled their role of keeping the graph finite and can,
therefore, be discharged.

In the case of classical propositional logic, the only reason why it
may turn out to be impossible to produce a Hintikka set for the input
formula is that every attempt to build such a set results in a
collection of formulae containing a patent inconsistency.  In the case
of logics with fixpoint-definable operators, such as \CMAELCD, there
are two other reasons for a tableau not to correspond to any Hintikka
structure for the input formula.  The first one has to do with
\fm{realization of eventualities} ---formulas of the form $\neg
\commonk{A} \vp$, whose truth condition requires that $\neg \vp$
``eventually'' becomes true --- in the tableau graph.  Applying
decomposition rules to eventualities in the construction of the
tableau can postpone indefinitely the realization by keeping
``promising'' that the realization will happen further down the line,
while that ``promise'' never becomes fulfilled.  Therefore, a ``good''
tableau should not contain states with unrealized eventualities.  The
other additional reason for the resultant tableau not to represent a
Hintikka structure is that some states do not have all the successors
they would be required to have in a corresponding Hintikka structure
(for example, because those successors have been removed for not
realizing eventualities).

During the state elimination phase, we remove from
$\tableau{T}_0^{\theta}$ all states, if any, that cannot be satisfied
in any CMAEHS for any of the reasons suggested above and discussed in
more detail further (excluding patently inconsistent sets, which are
removed ``on the fly'' during the construction phase).  The
elimination procedure results in a (possibly empty) subgraph
$\tableau{T}^{\theta}$ of $\tableau{T}_0^{\theta}$, called the
\de{final tableau for $\theta$}. If some state $\De$ of
$\tableau{T}^{\theta}$ contains $\theta$, it is declared satisfiable;
otherwise, $\theta$ is declared unsatisfiable.

The logic \CMAELCD\ involves modal operators over equivalence
relations, and thus invokes some typical complications in the
tableau-building procedures associated with inverse-looking
modalities, see e.g. \cite{Gore98}: every box occurring in the label
of a descendant state has a backwards effect on all predecessor
states, incl. the current state. In order to deal with these
complications we must either organize a mechanism for backtracking and
\emph{backwards propagation of box-formulae}, or a mechanism for
anticipation of the occurrence of such boxes in the future, coming
from subformulae of formulae in the label of the current state, based
on \emph{analytic cut rules}. We will adopt here the latter approach, which is easier to describe and implement into what we call a \emph{diamond-pro\-pa\-ga\-ting} procedure, by employing suitably restricted analytic cut rules to maintain the efficiency of the procedure, but later we will briefly discuss the former
alternative, too. The two procedures only differ in the construction phase;
the prestate and state elimination phases are common to both. The need and use of analytic cut rules is illustrated later in Example \ref{ex:cut}.

\subsection{Cut-saturated sets and expansions}
\label{subsec:cut_saturated}

The application of the analytic cut, mentioned above, is implemented by imposing an additional \emph{cut-saturating rule} on the construction of the full expansions of a given set of formulae. In order to prevent the unnecessary swelling and proliferation of states, we will restrict the application of  that rule by imposing generic restrictions which, on the other hand, should be sufficiently relaxed to guarantee the completeness of the tableau procedure. These generic conditions, which will be specified later, will be imposed separately on the two types of box-formulae in \CMAELCD, viz. $\distrib{A}$-formulae and on $\commonk{A}$-formulae. 

\begin{definition}
  \label{def:cmaelcd_fully_expanded} Given restrictive conditions $C_1$ and $C_2$,  a set $\De$ of \CMAELCD-formulas is \de{$(C_1,C_2)$-\funny} if it
  satisfies the following conditions, where 
  \subf{\psi} is the set of subformulae of a formula $\psi$:  
  
\begin{enumerate}[label=CS\arabic*, start=0]

\item \label{it:FE1} $\De$ is fully expanded (recall
  Definition~\ref{def:cmaelcd_downward_saturated}).
  
\item \label{it:FE2} For any $\distrib{A}\vp\in \subf{\psi}$ where
  $\psi\in\De$, if condition $C_1$ holds then either $\distrib{A} \vp
  \in \De$ or $\neg \distrib{A} \vp \in \De$.
  
\item \label{it:FE3} For any $\commonk{A}\vp\in \subf{\psi}$ where
  $\psi\in\De$, if condition $C_2$ holds then either $\commonk{A} \vp
  \in \De$ or $\neg \commonk{A} \vp \in \De$.
 \end{enumerate}
\end{definition}

We note that \ref{it:FE2} and \ref{it:FE3} are semantically sound
rules, no matter what $C_1$ and $C_2$ are, as they cannot make a
tableau closed if the input formula is satisfiable. On the other hand,
if $C_1$ and $C_2$ are too strong, that may prevent the tableau from closing and thus yield an incomplete
tableau procedure, as will become apparent later.  Again, the reason
we would want to make $C_1$ and $C_2$ as strong as possible is to
avoid branching on too many formulae, causing an unnecessary large
state space and resulting in a practically less efficient procedure.

Hereafter, we will omit the explicit mention of the conditions $C_1$
and $C_2$, unless necessary. In fact, for now we can assume both $C_1$
and $C_2$ to be \textsf{True}, but later we will introduce non-trivial
restrictive conditions.

\begin{definition}\label{def:full_expansion}
The family $\cse{\Ga}$ of \de{\funny expansions (CS-expansions) of a set of formulae $\Ga$} is defined by expanding the procedure $\fexp$ with the following two set-replacement rules, again applied to a non-deterministically chosen set $\Phi$ from the current family and a formula $\psi \in \Phi$:  
  \begin{enumerate}
\item For any formula $\di{A}\vp$ that is a subformula of $\psi$ such that $C_1$ is satisfied, replace $\Phi$ with the two extensions of $\Phi$ obtained by adding respectively
  $\di{A}\vp$ and $\neg\di{A}\vp$ to it.
\item For any formula $\co{A}\vp$ that is a subformula of $\psi$ such that $C_2$ is satisfied, replace $\Phi$ with the two extensions of $\Phi$ obtained by adding respectively
  $\co{A}\vp$ and $\neg\co{A}\vp$ to it.
\end{enumerate}
\end{definition}

It is clear from the definition that all sets in $\cse{\Ga}$ are 
$(C_1,C_2)$-\funny.

\begin{definition}
 \label{def:extended_closure}

 The \de{extended closure} of $\theta$, denoted $\ecl{\theta}$, is the
 smallest set such that $\vp, \neg \vp \in \ecl{\theta}$ for every $\vp
 \in \cl{\theta}$. The extended closure $\ecl{\Ga}$ of a set of
 formulae $\Ga$ is defined likewise.
\end{definition}

The following is immediate from the definitions. 

\begin{lemma}
Every CS-expansion of a set of formulae $\Ga$ is a subset of $\ecl{\Ga}$.
\end{lemma}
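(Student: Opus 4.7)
The plan is to proceed by induction on the sequence of set-replacement operations used to construct $\cse{\Ga}$ from the initial family $\{\Ga\}$, maintaining as invariant that every set in the current family is a subset of $\ecl{\Ga}$. The base case is immediate, since $\Ga \subseteq \cl{\Ga} \subseteq \ecl{\Ga}$. For the inductive step, I would verify that each of the five set-replacement operations preserves the invariant. The three operations inherited from $\fexp$ --- the $\alpha$-rule, the $\beta$-rule, and the auxiliary $\neg\commonk{A}$-rule --- add $\alpha$- or $\beta$-components of some $\varphi \in \Phi$, and by condition~(2) of Definition~\ref{def:Clo} these components lie in $\cl{\Ga} \subseteq \ecl{\Ga}$ whenever $\varphi$ does. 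For the two cut rules, which add $\distrib{A}\vp$, $\neg\distrib{A}\vp$, $\commonk{A}\vp$, or $\neg\commonk{A}\vp$ when the corresponding positive box-formula occurs as a subformula of some $\psi \in \Phi \subseteq \ecl{\Ga}$, I would invoke the subsidiary claim: \emph{if $\psi \in \ecl{\Ga}$ and $\chi$ is a subformula of $\psi$, then both $\chi$ and $\neg \chi$ belong to $\ecl{\Ga}$}.

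The subsidiary claim would be established by a separate structural induction on $\psi$. The key observation is that at each step of syntactic descent --- into a conjunct, beneath a negation or a double negation, or under a single distributed-knowledge or common-knowledge operator --- conditions~(2) and~(3) of Definition~\ref{def:Clo} place either the immediate subformula itself or its negation into $\cl{\Ga}$, so both polarities end up in $\ecl{\Ga}$; the inductive hypothesis then handles the deeper subformulae of that component. In particular, for $\psi = \neg \distrib{A}\chi \in \cl{\Ga}$ one uses condition~(3) to get $\neg\chi \in \cl{\Ga}$ and hence $\chi \in \ecl{\Ga}$, while for $\psi = \neg\commonk{A}\chi$ the $\beta$-components supplied by Definition~\ref{def:Clo}(2) do the analogous work.

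The main obstacle is precisely this subsidiary subformula-closure property for $\ecl{\Ga}$: the closure $\cl{\Ga}$ is not by definition closed under taking arbitrary subformulae but only under $\alpha$-/$\beta$-decomposition and condition~(3), so the case analysis must carefully track the polarity of each subformula as one descends through the syntactic structure, and separately verify that both positive and negated incarnations of each intermediate box-formula are accounted for in $\ecl{\Ga}$. Once that bookkeeping is in place, the verification of the five rules in the main induction is routine and the lemma follows.
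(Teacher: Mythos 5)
Your overall strategy---induction on the sequence of set-replacement operations, reduced to a subformula-closure property of $\ecl{\Ga}$---is the natural way to unpack what the paper merely declares ``immediate from the definitions'' (no proof is given there). However, the subsidiary claim on which everything rests is false for the definitions as literally stated, and your justification of it contains the precise error. From $\neg\chi\in\cl{\Ga}$ the definition of the extended closure yields $\neg\chi\in\ecl{\Ga}$ and $\neg\neg\chi\in\ecl{\Ga}$, but \emph{not} $\chi\in\ecl{\Ga}$: the operation $\ecl{\cdot}$ only stacks one negation on top of members of $\cl{\Ga}$, it never strips one off, and $\neg\neg\chi$ and $\chi$ are distinct formulae here. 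Since conditions (2) and (3) of Definition~\ref{def:Clo} flip polarity each time you descend through a negated connective, a box-formula nested in a negative context is represented in $\cl{\Ga}$ only negatively. Concretely, for $\Ga=\set{\neg\di{\set{a,b}}\di{a}q}$ one computes $\cl{\Ga}=\set{\neg\di{\set{a,b}}\di{a}q,\ \neg\di{a}q,\ \neg q}$, so $\di{a}q\notin\ecl{\Ga}$ and $q\notin\ecl{\Ga}$; yet $\di{a}q\in\subf{\neg\di{\set{a,b}}\di{a}q}$ and condition $C_{11}$ fires (take $\neg\di{E}\varepsilon$ to be $\neg\di{\set{a,b}}\di{a}q$ itself), so the cut rule produces the CS-expansion $\set{\neg\di{\set{a,b}}\di{a}q,\ \di{a}q,\ q}$, which is not a subset of $\ecl{\Ga}$. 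So both your subsidiary claim and, in fact, the lemma itself fail under Definitions~\ref{def:Clo} and~\ref{def:extended_closure} read literally.

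What your attempt really exposes is a mismatch inside the paper rather than a repairable slip in your induction: the proof of Lemma~\ref{lemma:CH4} silently uses the characterization $\ecl{\alpha}\subseteq\crh{\beta,\neg\beta}{\beta\in\subf{\alpha}}\cup\crh{\di{e}\co{E}\varepsilon,\neg\di{e}\co{E}\varepsilon}{\co{E}\varepsilon\in\subf{\alpha},\ e\in E}$ as if it were the definition, i.e.\ a closure containing \emph{every} subformula in both polarities (plus the auxiliary $\di{e}\co{E}\varepsilon$-formulae needed as $\alpha$-components of $\co{E}\varepsilon$). Under that reading your subsidiary claim holds essentially by definition and your main induction goes through, modulo one smaller point: in the step for the three $\fexp$-rules you argue that the components of $\varphi$ lie in $\cl{\Ga}$ ``whenever $\varphi$ does'', but after a cut the set $\Phi$ may contain $\beta$-formulae such as $\neg\co{A}\vp$ lying in $\ecl{\Ga}\setminus\cl{\Ga}$, and for these you must reason via the positive counterpart ($\vp$ and $\di{a}\co{A}\vp$ are $\alpha$-components of $\co{A}\vp\in\cl{\Ga}$, hence their negations are in $\ecl{\Ga}$) rather than via condition~(2) applied to $\varphi$ itself. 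The honest repair is to first establish (or adopt as the definition) the displayed characterization of $\ecl{\Ga}$ and then run your induction against it; as written, your argument has a genuine gap at exactly the point you yourself flagged as the main obstacle, and that gap cannot be closed because the claim you would need is false there.
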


\begin{lemma}
  \label{lem:size_of_closure}
  For any \CMAELCD-formula $\theta$, the size of (i.e., number of
  formulae in) the extended closure of $\theta$ is \bigo{k \cdot
    \card{\theta}}, where $k$ is the number of agents occurring in
  $\theta$.
\end{lemma}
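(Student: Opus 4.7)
The plan is to bound the size of $\cl{\theta}$ directly (since $\card{\ecl{\theta}} \leq 2 \card{\cl{\theta}}$) by exhibiting an explicit set $T$ of size $\bigo{k \cdot \card{\theta}}$ that contains $\cl{\theta}$. Let $n = \card{\theta}$ and let $k$ be the number of agents occurring in $\theta$. Standard counting gives $\card{\subf{\theta}} \leq n$, so the only danger is that the closure conditions (in particular the $\alpha$-components of $\commonk{A}\vp$ and the $\beta$-components of $\neg\commonk{A}\vp$) inject new formulae of the form $\distrib{a}\commonk{A}\vp$ and $\neg\distrib{a}\commonk{A}\vp$ that are not subformulae of $\theta$.

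First I would define
\[
T \;=\; \subf{\theta} \;\cup\; \{\neg \vp \mid \vp \in \subf{\theta}\} \;\cup\; \{\distrib{a}\commonk{A}\vp,\; \neg\distrib{a}\commonk{A}\vp \mid \commonk{A}\vp \in \subf{\theta},\; a \in A\}.
\]
Counting yields $\card{T} \leq 2n + 2kn = \bigo{k \cdot n}$, since there are at most $n$ subformulae of $\theta$ of the form $\commonk{A}\vp$, and each contributes at most $2\card{A} \leq 2k$ extra formulae.

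Next I would verify that $T$ satisfies the three closure clauses of Definition~\ref{def:Clo}, which, combined with $\theta \in T$ and the minimality of $\cl{\theta}$, gives $\cl{\theta} \subseteq T$. Clause 1 is immediate. For clause 2, I go through each possible $\alpha$- or $\beta$-formula in $T$ and check that its components lie in $T$: the components of $\psi_1 \con \psi_2$, $\neg(\psi_1 \con \psi_2)$, $\neg\neg\psi$, and $\distrib{A}\psi$ remain in $\subf{\theta} \cup \{\neg\chi : \chi \in \subf{\theta}\}$; the components of $\commonk{A}\vp \in \subf{\theta}$ are $\vp \in \subf{\theta}$ and the $\distrib{a}\commonk{A}\vp$ that were explicitly placed in $T$; the components of $\neg\commonk{A}\vp$ (where $\commonk{A}\vp \in \subf{\theta}$) are $\neg\vp$ and the $\neg\distrib{a}\commonk{A}\vp$, all in $T$; and each $\distrib{a}\commonk{A}\vp \in T$ is an $\alpha$-formula whose components are itself and $\commonk{A}\vp$, both in $T$. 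For clause 3, any $\neg\distrib{A}\psi \in T$ has either $\psi \in \subf{\theta}$ (so $\neg\psi \in T$) or $\psi = \commonk{A}\vp$ with $\commonk{A}\vp \in \subf{\theta}$ (so $\neg\psi = \neg\commonk{A}\vp \in T$).

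The only subtle point, and the place that could go wrong, is the self-referential entry $\distrib{A}\vp$ among the $\alpha$-components of $\distrib{A}\vp$ itself (and likewise the presence of $\distrib{a}\commonk{A}\vp$ among its own $\alpha$-components): these do not trigger a new generation of formulae, which is precisely why the closure does not blow up beyond the $T$ above. Once $\cl{\theta} \subseteq T$ is established, we conclude $\card{\ecl{\theta}} \leq 2\card{T} = \bigo{k \cdot \card{\theta}}$, as required.
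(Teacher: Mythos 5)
Your proof is correct: the set $T$ you exhibit does satisfy all three closure conditions of Definition~\ref{def:Clo} (the key observation being that the only formulae outside $\subf{\theta}$ injected by the closure are of the form $\distrib{a}\commonk{A}\vp$ and their negations, and these generate nothing new), so $\cl{\theta}\subseteq T$ and the $\bigo{k\cdot\card{\theta}}$ bound follows. The paper dismisses this lemma with the single word ``Straightforward,'' so your argument is simply a careful write-up of the intended routine counting; there is no divergence to report.
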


\begin{proof}
  Straightforward.
\end{proof}

\subsubsection{Construction phase}
\label{subsec:diam_proparating}

As already mentioned, a tableau algorithm attempts to produce a
compact representation of ``sufficiently many'' CMAEHSs for the input
formula; in this attempt, it sets in motion an exhaustive search for
such CMAEHSs. As a result, the pretableau $\tableau{P}^{\theta}$ built
at this phase contains two types of edge, as well as two types of node
(states and prestates; see above).

One type of edge, depicted by unmarked, dashed uni-directed arrows
$\dashrightarrow$, represents the search dimension of the
tableaux.  The exhaustive search considers all possible alternatives
arising when prestates are expanded into states by branching in the
``disjunctive'' cases. Thus, when we draw unmarked arrows from a
prestate \Ga\ to each state from a set of states $X$, this intuitively
means that, in any CMAEHS, a state satisfying \Ga\ has to satisfy at
least one of the states in $X$.

The second type of edge represents transition relations in the CMAEHSs
that the procedure attempts to build.  Accordingly, this type of edges is
represented by solid, uni-directed arrows, $\longrightarrow$, 
marked with formulae whose
presence in one of the end nodes requires the presence in the tableau
of the other end node, reachable by a particular relation.
Intuitively, if $\neg \distrib{A} \vp \in \De$ for some state $\De$,
then some (state obtained from a) prestate $\Ga$ containing $\neg \vp$
must be accessible from $\De$ by relation $\rel{R}^D_A$.  We
mark these arrows with the respective formulae $\neg
\distrib{A} \vp$ in order to keep track of the specific reason for creating 
 that particular state. That information will be needed during the
elimination phases.

We now turn to presenting the rules of the ``diamond-propagating''
construction phase, each of which creates a different type of edge, as
discussed above.  The first rule, \RRule{SR}, prescribes how to create
states from prestates,  while \RRule{DR} expands prestates into states.

\medskip

\textbf{Rule} \RRule{SR} Given a prestate $\Ga$, such that \RRule{SR} has
not been applied to it before, do the following:
\begin{enumerate}
\item Add to the pretableau all CS-expansions $\De$ of $\Ga$; declare these to be
  \de{states};
\item For each so obtained state $\De$, put $\Ga\dashrightarrow \De$;
\item If, however, the pretableau already contains a state $\De' = \De$,
  then do not create a new state, but put $\Ga \dashrightarrow
  \De'$.
\end{enumerate}

We denote by $\st{\Ga}$ the (finite) set \crh{\De}{\Ga \dashrightarrow \De}.

\medskip

\textbf{Rule} \RRule{DR}: Given a state $\De$ such that $\neg
\distrib{A} \vp \in \De$ and \RRule{DR} has not been applied to $\De$
with respect to $\neg \distrib{A} \vp$ before, do the following:

\begin{enumerate}
\item Add to the pretableau the set 
        $\Ga = \set{\neg \vp} \union\crh{\distrib{A'} \psi\in \De}{A'\subseteq A} \union 
  \crh{\neg \distrib{A'}\psi\in \De}{A'\subseteq A \text{ and }\neg \distrib{A'} \psi \ne \neg \distrib{A} \vp  
}
    \union 
  \crh{\neg \commonk{A'}\psi\in \De}{A'\cap A \neq \emptyset}$
  and declare this set to be a \emph{prestate}. 
\item Put $\DtoD{\De}{\Ga}{A}{\vp}$.
\item If, however, the pretableau already contains a prestate $\Ga' =
  \Ga$, then do not create a new prestate, but put $\DtoD{\De}{\Ga'}{A}{\vp}$.
\end{enumerate}

When building a tableau for a formula $\theta$, the construction phase
begins with creating a single prestate \set{\theta}.  Afterwards, we
alternate between \RRule{SR} and \RRule{DR}: first, \RRule{SR} is
applied to the prestates created at the previous stage of the
construction, then \RRule{DR} is applied to the states created at the
previous stage.

The construction phase is completed when every prestate required
to be added to the pretableau has already been added (as prescribed in
item 3 of \RRule{SR}) and \RRule{DR} does not apply to any of the states
with respect to any of the formulae.

\medskip

\begin{example}\label{ex1}
Let us construct the pretableau for the
formula $\theta = \neg \distrib{\set{a,c}} \commonk{\set{a,b}} p \land
\commonk{\set{a,b}}(p \con q)$, assuming that $\agents = \set{a, b,
  c}$.  To save space, we replace $\theta$ by the set of its conjuncts
$\Theta = \set{\neg \distrib{\set{a,c}} \commonk{\set{a,b}} p,
  \commonk{\set{a,b}}(p \con q)}$. 

Here and further on in the examples, we let $\underline{\commonk{A}\vp}$ denote the set $\{\commonk{A}\vp, \vp\}\cup\bigcup_{a\in A}\distrib{a}\commonk{A}\vp$.
Figure~\ref{fig:ex1} shows the pretableau for $\Theta$.

\begin{figure}[h!]
\begin{minipage}[c]{0.45\linewidth}
\begin{equation*}
\tiny{
\xymatrix{
&\Ga_1\ar@{-->}[dl]\ar@{-->}[d]\ar@{-->}[dr]\\
\De_2\ar^{\chi}[d]\ar@/_2.0pc/[dddd]_{\chi_a}&\De_3\ar[ddddl]_(0.3){\chi}\ar[ddddr]^(0.3){\chi_b}&\De_1\ar^{\chi}[d]&\\   
\Ga_3\ar@{-->}[d]
 &&\Ga_2 \\
\De_4\ar@/_0.5pc/[dd]_(0.3){\chi_a}&
&\De_5\ar@/^0.5pc/[dd]^{\chi_b}\\
&&&\\
\Ga_5\ar@{-->}[uu]\ar@{-->}[uurr]&&\Ga_6\ar@{-->}[uull]\ar@{-->}[uu]\\ 
}
}
\end{equation*}
\end{minipage}
\hfill
\begin{minipage}[c]{0.45\linewidth}
\tiny{
\begin{align*}
\chi &= \neg\distrib{\{a,c\}}\commonk{\{a,b\}}p\\\
\chi_a &=\neg\distrib{a}\commonk{\{a,b\}}p\\
\chi_b &= \neg\distrib{b}\commonk{\{a,b\}}p\\
\\
\Ga_1&=\{\neg\distrib{\{a,c\}}\commonk{\{a,b\}}p,\commonk{\{a,b\}}(p\land q)\}\\
\De_1&=\{\neg\distrib{\{a,c\}}\commonk{\{a,b\}}p,\underline{\commonk{\{a,b\}}(p\land q)}, p, q, \underline{\commonk{\{a,b\}}p}\}\\
\De_2&=\{\neg\distrib{\{a,c\}}\commonk{\{a,b\}}p,\underline{\commonk{\{a,b\}}(p\land q)}, p, q, \neg\commonk{\{a,b\}}p, \neg\distrib{a}\commonk{\{a,b\}}p\}\\
\De_3&=\{\neg\distrib{\{a,c\}}\commonk{\{a,b\}}p,\underline{\commonk{\{a,b\}}(p\land q)}, p, q, \neg\commonk{\{a,b\}}p, \neg\distrib{b}\commonk{\{a,b\}}p\}\\
\Ga_2&=\{ \neg\commonk{\{a,b\}}p, \distrib{a}\commonk{\{a,b\}}(p\land q), \distrib{a}\commonk{\{a,b\}}p \}\\
\Ga_3&=\{ \neg\commonk{\{a,b\}}p, \distrib{a}\commonk{\{a,b\}}(p\land q), \neg\distrib{a}\commonk{\{a,b\}}p \}\\
\De_4&=\{\neg\commonk{\{a,b\}}p,\underline{\commonk{\{a,b\}}(p\land q)}, p, q, \neg\distrib{a}\commonk{\{a,b\}}p\}\\
\De_5&=\{\neg\commonk{\{a,b\}}p,\underline{\commonk{\{a,b\}}(p\land q)}, p, q, \neg\distrib{b}\commonk{\{a,b\}}p\}\\
\Ga_5&=\{ \neg\commonk{\{a,b\}}p, \distrib{a}\commonk{\{a,b\}}(p\land q) \}\\
\Ga_6&=\{ \neg\commonk{\{a,b\}}p, \distrib{b}\commonk{\{a,b\}}(p\land q) \}\\
\end{align*}}
\end{minipage}
\caption{The pretableau for $\set{\neg \distrib{\set{a,c}} \commonk{\set{a,b}} p, \commonk{\set{a,b}}(p \con q)}$}
\label{fig:ex1}
\end{figure}
\end{example}
\subsubsection{Prestate elimination phase}

At this phase, we remove from pretableau $\tableau{P}^{\theta}$ all
the prestates and unmarked arrows, by applying the following rule (the
resultant graph is denoted $\tableau{T}_0^{\theta}$ and is called the
\emph{initial tableau}):

\smallskip

\RRule{PR} For every prestate $\Ga$ in $\tableau{P}^{\theta}$, do the
following:

\begin{enumerate}
\item Remove $\Ga$ from $\tableau{P}^{\theta}$;
\item If there is a state $\De$ in $\tableau{P}^{\theta}$ with $\De
 \stackrel{\chi}{\longrightarrow} \Ga$, then for every state $\De' \in
 \st{\Ga}$, put $\De \stackrel{\chi}{\longrightarrow} \De'$;
\end{enumerate}

\begin{example}\label{ex1-cont}
We continue Example~\ref{ex1} by creating the initial tableau for $\Theta = \set{\neg \distrib{\set{a,c}} \commonk{\set{a,b}} p,$ $\commonk{\set{a,b}}(p \con q)}$ out of the pretableau in Figure~\ref{fig:ex1}. Again we let $\underline{\co{A}\vp}$ denote the set consisting of $\co{A}\vp$ and its $\alpha$-components. Figure~\ref{fig:ex1-cont} shows the resulting initial tableau.
\begin{figure}[!h]
\begin{minipage}[c]{0.45\linewidth}
\begin{equation*}
\tiny{
\xymatrix{
\De_2\ar[dd]_(0.5){\chi,\chi_a}\ar[ddrr]^(0.2){\chi_a}
&\De_3\ar[ddl]^(0.6){\chi, \chi_b}\ar[ddr]^(0.3){\chi,\chi_b}&\De_1\ar^{\chi}[d]\\
&&\\
\De_4\ar@(dl,dr)[]_{\chi_a}\ar@/^/^{\chi_a}[rr]&
&\De_5\ar@(dl,dr)[]_{\chi_b}\ar@/^/^{\chi_b}[ll]
}
}
\end{equation*}
\end{minipage}
\hfill
\begin{minipage}[c]{0.45\linewidth}
\tiny{
\begin{align*}
\chi &= \neg\distrib{\{a,c\}}\commonk{\{a,b\}}p\\\
\chi_a &=\neg\distrib{a}\commonk{\{a,b\}}p\\
\chi_b &= \neg\distrib{b}\commonk{\{a,b\}}p\\
\\
\De_1&=\{\neg\distrib{\{a,c\}}\commonk{\{a,b\}}p,\underline{\commonk{\{a,b\}}(p\land q)}, p, q, \underline{\commonk{\{a,b\}}p}\}\\
\De_2&=\{\neg\distrib{\{a,c\}}\commonk{\{a,b\}}p,\underline{\commonk{\{a,b\}}(p\land q)}, p, q, \neg\commonk{\{a,b\}}p, \neg\distrib{a}\commonk{\{a,b\}}p\}\\
\De_3&=\{\neg\distrib{\{a,c\}}\commonk{\{a,b\}}p,\underline{\commonk{\{a,b\}}(p\land q)}, p, q, \neg\commonk{\{a,b\}}p, \neg\distrib{b}\commonk{\{a,b\}}p\}\\
\De_4&=\{\neg\commonk{\{a,b\}}p,\underline{\commonk{\{a,b\}}(p\land q)}, p, q, \neg\distrib{a}\commonk{\{a,b\}}p\}\\
\De_5&=\{\neg\commonk{\{a,b\}}p,\underline{\commonk{\{a,b\}}(p\land q)}, p, q, \neg\distrib{b}\commonk{\{a,b\}}p\}\\
\end{align*}}
\end{minipage}
\caption{The initial tableau for $\set{\neg \distrib{\set{a,c}} \commonk{\set{a,b}} p, \commonk{\set{a,b}}(p \con q)}$}
\label{fig:ex1-cont}
\end{figure}
\end{example}

\subsubsection{State elimination phase}

During this phase, we remove from $\tableau{T}_0^{\theta}$ states that are not satisfiable in any CMAEHS. Of course, when a state is removed, so are all of its incoming and outgoing arrows.
 
There are two reasons why a state
$\De$ of $\tableau{T}_0^{\theta}$ might turn out to be unsatisfiable:
either because $\De$ needs, in order to satisfy some diamond-formula, a successor state that has been eliminated, or because $\De$ contains an eventuality
that is not realized in the tableau. Accordingly, we have two
elimination rules \RRule{E1} and \RRule{E2}.

Formally, the state elimination phase is divided into stages; we start
at stage 0 with $\tableau{T}_0^{\theta}$; at stage $n+1$, we remove
from the tableau $\tableau{T}_n^{\theta}$ obtained at the previous
stage exactly one state, by applying one of the elimination rules,
thus obtaining the tableau $\tableau{T}_{n+1}^{\theta}$. We state the
rules below, where $S_m^{\theta}$ denotes the set of states of
$\tableau{T}_{m}^{\theta}$.

\smallskip

\RRule{E1} If $\De\in S^{\theta}_n$ contains a formula $\chi = \neg
\distrib{A} \vp$ such that there is no $\De
\stackrel{\chi}{\longrightarrow} \De'$, where $\De' \in S^{\theta}_n$,
then obtain $\tableau{T}_{n+1}^{\theta}$ by eliminating $\De$ from
$\tableau{T}_n^{\theta}$.

\smallskip

For the other elimination rule, we need the concept of eventuality
realization.

\begin{definition}
  \label{def:realization}
  The eventuality $\xi= \neg \commonk{A} \vp$ \de{is realized at $\De$ in
    $\tableau{T}^{\theta}_n$} if either $\neg \vp \in \De$ or there
  exists in $\tableau{T}^{\theta}_n$ a finite number of states $\De_0, \De_1,
  \ldots, \De_m$ such that 
  $\De_0 = \De$; 
  $\neg \vp \in \De_m$; and, for every $0 \leq i < m$, 
  $\xi \in \De_i$ 
 and  
   there exists $\chi_i = \neg \distrib{a_i}
  \psi_i$ such that $a_i \in A$ and $\De_i
  \stackrel{\chi_i}{\longrightarrow} \De_{i+1}$.
\end{definition}

We can now state the rule.

\medskip

\RRule{E2} If $\De \in S_n^{\theta}$ contains an eventuality $\neg
\commonk{A} \vp$ that is not realized at $\De$ in
$\tableau{T}_n^{\theta}$, then obtain $\tableau{T}_{n+1}^{\theta}$ by
removing $\De$ from $\tableau{T}_n^{\theta}$.

\medskip

We check for realization of $\neg \commonk{A} \vp$ by running the
following, \emph{global} procedure that marks all states of
$\tableau{T}_n^{\theta}$ realizing $\neg \commonk{A} \vp$ in
$\tableau{T}_n^{\theta}$.
Initially, we mark all $\De \in S_n^{\theta}$ such that $\neg \vp \in \De$. Then,
we repeatedly do the following: if $\De \in S_n^{\theta}$ contains $\neg \commonk{A} \vp$ and is unmarked yet, 
but there exists at least one $\De'$ such that $\De
\stackrel{\neg \distrib{a} \psi}{\longrightarrow} \De'$, for some formula $\psi$ and $a\in A$  and $\De'$ is marked, we
mark $\De$. The procedure is over when no more states get marked. Note
that marking is carried out with respect to a fixed eventuality $\xi$
and is, therefore, repeated each time we want to check realization of
an eventuality (see reasons further).

We have so far described elimination rules; to describe the state
elimination phase as a whole, we need to specify the order of their
application.  We have to be careful since, having applied \RRule{E2},
we could have removed all the states accessible from some $\De$ along
the arrows marked with some formula $\chi$; hence, we need to reapply
\RRule{E1} to the resultant tableau to remove such $\De$'s. Conversely,
after having applied \RRule{E1}, we could have thrown away some states
that were needed for realizing certain eventualities; hence, we need
to reapply \RRule{E2}. Moreover, we cannot terminate the procedure
unless we have checked that \emph{all} eventualities are realized.
Therefore, we apply \RRule{E1} and \RRule{E2} in a dovetailed sequence
that cycles through all the eventualities. More precisely, we arrange
all eventualities occurring in the states of $\tableau{T}_0^{\theta}$
in a list $\xi_1, \ldots, \xi_m$.  Then, we proceed in cycles. Each
cycle consists of alternatingly applying \RRule{E2} to the pending
eventuality (starting with $\xi_1$), and then applying \RRule{E1} to
the resulting tableau, until all the eventualities have been dealt
with.  These cycles are repeated until no state is removed throughout
a whole cycle. When that happens, the state elimination phase is over.

The graph produced at the end of the state elimination phase is called
the \fm{final tableau for $\theta$}, denoted by
$\tableau{T}^{\theta}$, and its set of states is denoted by
$S^{\theta}$.

\begin{definition}
 The final tableau $\tableau{T}^{\theta}$ is \de{open} if $\theta \in
 \De$ for some $\De \in S^{\theta}$; otherwise, $\tableau{T}^{\theta}$
 is \de{closed}.
\end{definition}

The tableau procedure returns ``no'' (not satisfiable) if the final tableau is closed;
otherwise, it returns ``yes'' (satisfiable) and, moreover, provides sufficient
information for producing a finite model satisfying $\theta$; that
construction is sketched in Section \ref{subsec:completeness}.

\begin{example}
  We will continue to make the final tableau for the formulae $\Theta$
  considered in Example~\ref{ex1} and Example~\ref{ex1-cont}. The
  state elimination procedure starts with the initial tableau given in
  Figure~\ref{fig:ex1-cont}. During the state-elimination phase, state
  $\De_1$ gets removed due to \RRule{E1}, since it does not have any
  successor states along an arrow labelled with $\chi$, while states
  $\De_2, \De_3, \De_4$ and $\De_5$ are eliminated due to \RRule{E2},
  as all of them contain the unrealized eventuality $\neg
  \commonk{\set{a,b}} p$. Thus, the final tableau for $\Theta$ is an
  empty graph; therefore, $\Theta$ is \emph{unsatisfiable}.
\end{example}

\section{Soundness and completeness of the tableau}
\label{sec:scc}

\subsection{Soundness}
\label{subsec:soundness}

Technically, soundness of a tableau procedure amounts to claiming that
if the input formula $\theta$ is satisfiable, then the final tableau
$\tableau{T}^{\theta}$ is open.

Before going into the technical details, we give an informal outline
of the proof.  The tableau procedure for the input formula $\theta$
starts off with creating a single prestate \set{\theta}.  Then, we
expand \set{\theta} into states, each of which contains $\theta$.  To
establish soundness, it suffices to show that at least one of these
states survives to the end of the procedure and is, thus, part of the
final tableau. 

We start out by showing (Lemma~\ref{lem:expansion}) that if a prestate
$\Ga$ is satisfiable, then at least one state created from $\Ga$ using
\RRule{SR} is also satisfiable.  In particular, this ensures that if
$\theta$ is satisfiable, then so is at least one state obtained by
\RRule{SR} from $\set{\theta}$.  To ensure soundness, it suffices to
prove that this state never gets eliminated from the tableau.

To that end, we first show (Lemma~\ref{lem:DR_sound}) that, given a
satisfiable state $\De$, all the prestates created from $\De$ in
accordance with \RRule{DR}---each prestate being associated with a
formula of the form $\neg \distrib{A} \vp$---are satisfiable;
according to Lemma~\ref{lem:expansion}, each of these prestates will
give rise to at least one satisfiable state.  It follows that, if a
tableau state $\De$ is satisfiable, then every successor of $\De$ in
the initial tableau will have at least one satisfiable successor
reachable by an arrow associated with each formula of the form $\neg
\distrib{A} \vp$ belonging to $\De$. Hence, if $\De$ is satisfiable, it
will not be eliminated on account of \RRule{E1}.

Second, we show that no satisfiable states contain unrealized
eventualities (in the sense of Definition~\ref{def:realization}), and
thus cannot be removed from the tableau on account of \RRule{E2}.
Thus, we show that a satisfiable state of the pretableau
(equivalently, initial tableau) cannot be removed on account of any of
the state elimination rules and, therefore, survives to the end of the
procedure. In particular, this means that at least one state obtained
from the initial prestate $\theta$, and thus containing $\theta$,
survives to the end of the procedure. Hence, the final tableau for
$\theta$ is open, as desired.

We emphasize again that the claims mentioned above, and their proofs,
do not depend on the application (or not) of the cut rules
\ref{it:FE2} and \ref{it:FE3}, because they are sound, since
$\gamma\lor\neg\gamma$ is valid for any formula
$\gamma$. Therefore, these results are unaffected by the restrictive
conditions $C_1$ and $C_2$ for their application.

We now proceed with the technical details. 

\begin{lemma}
 \label{lem:expansion}
 Let $\Ga$ be a prestate of $\tableau{P}^{\theta}$ such that
 \sat{M}{s}{\Ga} for some CMAEM \mmodel{M} and $s \in
 \mmodel{M}$.  Then:

 \begin{enumerate}
\item   \sat{M}{s}{\De} holds for at least one $\De \in \st{\Ga}$.
\item Moreover, if $\neg\co{A}\vp\in\Ga$ and $\sat{M}{s}{\neg\vp}$,
  then $\De$ can be chosen so that $\neg\vp\in\De$.
\item If $\neg\co{A}\vp\in\Ga$ while none of $\neg\co{A}\vp$'s
  $\beta$-components are in $\Ga$, then for every $a\in A$, if
  $\sat{M}{s}{\neg\di{a}\co{A}\vp}$ then $\De$ can be chosen so that
  either $\neg\di{a}\co{A}\vp\in\De$ or $\neg\vp\in\De$.
\end{enumerate}

\end{lemma}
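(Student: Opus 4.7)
The plan is to first establish the key invariant of the CS-expansion procedure, and then derive each part of the lemma from it. The central observation is the analogue of Proposition \ref{lem:full_expansion} for the CS-expansion procedure: at every stage, the disjunction $\bigvee_{\Phi \in \mathcal{F}} \bigwedge \Phi$ is logically equivalent to $\bigwedge \Ga$. This is a straightforward induction on rule applications. Initially $\mathcal{F}=\{\Ga\}$, so the invariant trivially holds. The $\alpha$- and $\beta$-decomposition rules preserve it by Lemma \ref{lem:Decomp}; the item 3 rule of $\fexp$ only adjoins a refinement $\Phi\cup\{\neg\psi\}$ of an existing $\Phi$, so the disjunction is unchanged up to equivalence; and the cut rules of Definition \ref{def:full_expansion} preserve equivalence since $\gamma\vee\neg\gamma$ is valid. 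Upon termination $\mathcal{F}=\cse{\Ga}=\st{\Ga}$, so if $s \Vdash \Ga$ then at least one $\De\in\st{\Ga}$ is satisfied at $s$; this gives part 1.

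For part 2, I would run the same invariant argument but track which branch the chosen satisfied set belongs to. If $\neg\vp\in\Ga$ already, the claim is immediate from part 1 since $\Ga\subseteq\De$ for every $\De\in\st{\Ga}$. Otherwise, the prioritization of eventualities forces $\neg\co{A}\vp$ to be processed early. At that stage, whether some $\beta$-component of $\neg\co{A}\vp$ has already made its way into the set under consideration (in which case item 3 of $\fexp$ fires, adjoining $\neg\vp$ to it) or not (in which case the standard $\beta$-rule produces all $\beta$-component extensions, one of which contains $\neg\vp$), some set $\Phi\cup\{\neg\vp\}\supseteq\Ga\cup\{\neg\vp\}$ appears in $\mathcal{F}$. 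Since $s\Vdash\Ga$ and $s\Vdash\neg\vp$, this set is satisfied at $s$, and reapplying the invariant argument to its continuations yields a satisfiable $\De\in\st{\Ga}$ with $\neg\vp\in\De$.

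Part 3 proceeds entirely analogously. Under the added hypothesis that no $\beta$-component of $\neg\co{A}\vp$ is in $\Ga$, when $\neg\co{A}\vp$ is processed the $\beta$-rule fires and produces, among its extensions, a set of the form $\Phi\cup\{\neg\di{a}\co{A}\vp\}$ for the fixed $a\in A$. Given $s\Vdash\neg\di{a}\co{A}\vp$, this set is satisfied at $s$, and the invariant argument supplies a satisfiable $\De\in\st{\Ga}$ extending it, so $\neg\di{a}\co{A}\vp\in\De$ (the alternative conclusion $\neg\vp\in\De$ only arises if some subsequent choice happens to include $\neg\vp$, which only strengthens the claim).

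The main obstacle I expect is the formal bookkeeping needed to bridge ``some member of $\mathcal{F}$ satisfied at $s$'' and ``some CS-expansion $\De$ containing a prescribed formula is satisfied at $s$''. Since $\cse{\Ga}$ is defined as the collection of \emph{all} possible outputs of the non-deterministic procedure, this amounts to showing that one can consistently choose subsequent rule applications so as to preserve both satisfiability at $s$ and membership of the target formula in the chosen set. The invariant suffices for this (at each step, pick any satisfied disjunct that still contains the target formula; since formulae are only ever added, the target is preserved), but writing the induction out carefully---particularly ensuring that the chain of choices terminates in an element of $\cse{\Ga}$ rather than in some intermediate family---is the delicate point.
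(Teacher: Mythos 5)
Your proposal is correct and follows essentially the same route as the paper, whose proof of Lemma~\ref{lem:expansion} is just the remark that the claim is ``straightforward from the definition of $\cse{\Ga}$ and Proposition~\ref{lem:full_expansion}''; your invariant on $\bigvee_{\Phi\in\mathcal{F}}\bigwedge\Phi$ is precisely that proposition extended to the cut rules, and the branch-tracking for parts 2 and 3 is the intended bookkeeping. The one point worth making explicit in a full writeup is that a set satisfied at $s$ is never patently inconsistent, so the tracked branch survives the removal proviso and terminates as a member of $\cse{\Ga}$.
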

\begin{proof}
  Straightforward from the definition of $\cse{\Ga}$ and using
  Proposition~\ref{lem:full_expansion}.
\end{proof}

\begin{lemma}
 \label{lem:DR_sound}
 Let $\De \in S_0^{\theta}$ be such that \sat{M}{s}{\De} for some
 CMAEM \mmodel{M} and $s \in \mmodel{M}$, and let $\neg \distrib{A}
 \vp \in \De$.  Then, there exists $t \in \mmodel{M}$ such that $(s,
 t) \in \rel{R}^D_A$ and \sat{M}{t}{\Ga}, for a set $\Ga$ defined
 according to the rule \RRule{DR} applied to $ \De$ and $\neg
 \distrib{A} \vp$:
 
 $\Ga =  \set{\neg \vp} \union \crh{\distrib{A'} \psi\in \De}{A'\subseteq A}
   \union \crh{\neg \distrib{A'}
     \psi\in\De}{A'\subseteq A\text{ and } \neg \distrib{A'} \psi \ne \neg \distrib{A} \vp}
         \union \crh{\neg \commonk{A'}\psi\in \De}{A'\cap A \neq \emptyset}$
\end{lemma}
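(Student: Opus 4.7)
The plan is to use the semantic clause for $\neg\distrib{A}\vp$ to produce the required witness $t$, and then verify separately that each of the four kinds of formulae that make up $\Ga$ is forced to be true at $t$, exploiting that every $\rel{R}^D_{A'}$ and $\rel{R}^C_{A'}$ in a CMAEM is an equivalence relation (so the associated box modalities behave as \system{S5}-operators and are preserved under the relevant equivalences).

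First I would invoke $\sat{M}{s}{\neg\distrib{A}\vp}$ to obtain some $t\in \mmodel{M}$ with $(s,t)\in\rel{R}^D_A$ and $\sat{M}{t}{\neg\vp}$; this immediately takes care of the first component $\neg\vp$ of $\Ga$. For the formulae $\distrib{A'}\psi\in\De$ with $A'\subseteq A$, I would use condition $(\dag)$ of Definition~\ref{def:cmaef} (or the equivalent observation recorded right after it) to conclude $\rel{R}^D_A\subseteq\rel{R}^D_{A'}$, hence $(s,t)\in\rel{R}^D_{A'}$; since $\rel{R}^D_{A'}$ is an equivalence relation, $\distrib{A'}\psi$ is an \system{S5}-type box over it, so its truth at $s$ transfers to $t$. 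The $\neg\distrib{A'}\psi$-case (for $A'\subseteq A$, $\neg\distrib{A'}\psi\neq\neg\distrib{A}\vp$) is identical, since $\neg\distrib{A'}\psi$ is the negation of an \system{S5}-box and therefore also preserved under $\rel{R}^D_{A'}$.

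The remaining, and slightly more delicate, clause concerns formulae $\neg\commonk{A'}\psi\in\De$ with $A'\cap A\neq\emptyset$. Here I would pick some $a\in A'\cap A$; then $(s,t)\in\rel{R}^D_A\subseteq\rel{R}^D_a$, and since $a\in A'$ we have $\rel{R}^D_a\subseteq\rel{R}^C_{A'}$ by the definition of $\rel{R}^C_{A'}$ as the reflexive-transitive closure of $\bigcup_{b\in A'}\rel{R}^D_b$. Thus $(s,t)\in\rel{R}^C_{A'}$, and since $\rel{R}^C_{A'}$ is an equivalence relation, $\neg\commonk{A'}\psi$ (being the negation of an \system{S5}-box over $\rel{R}^C_{A'}$) is preserved from $s$ to $t$.

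Putting these four verifications together yields $\sat{M}{t}{\Ga}$ with $(s,t)\in\rel{R}^D_A$, as required. The only step that is not purely mechanical is the one for $\neg\commonk{A'}\psi$, where one must notice that the overlap condition $A'\cap A\neq\emptyset$ is precisely what is needed to embed the single $\rel{R}^D_A$-step from $s$ to $t$ into the reachability relation $\rel{R}^C_{A'}$; this is the main (and really only) conceptual point, everything else being a direct application of the \system{S5}-nature of the epistemic modalities together with the downward inclusion $\rel{R}^D_A\subseteq\rel{R}^D_{A'}$ for $A'\subseteq A$.
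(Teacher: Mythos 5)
Your proof is correct and follows exactly the route the paper intends: the paper's own proof is just the one-line remark that the claim ``easily follows from the semantics of the epistemic operators and the definition of CMAEM,'' and your verification --- obtaining $t$ from the semantic clause for $\neg\distrib{A}\vp$, using $(\dag)$ to get $\rel{R}^D_A\subseteq\rel{R}^D_{A'}$ for $A'\subseteq A$, and embedding the $\rel{R}^D_A$-step into $\rel{R}^C_{A'}$ via some $a\in A'\cap A$, all combined with the \system{S5}-behaviour of the modalities over equivalence relations --- is precisely the argument being alluded to. No discrepancies.
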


\begin{proof}
  Easily follows from the semantics of the epistemic operators and the definition of CMAEM.
\end{proof}

\begin{lemma}\label{lemma1}
  Let $\De\in\tableau{S}_0^{\theta}$, let $\neg\co{A}\vp,
  \neg\di{a}\co{A}\vp \in\De$, and let, furthermore,
  $\DtoD{\De}{\Ga}{a}{\co{A}\vp}$ for some prestate
  $\Ga\in\tableau{P}^{\theta}$. Assume that $\sat{M}{s}{\De}$ and
  $(s,s')\in\rel{R}^D_a$, for some model $\mmodel{M}$ and a pair of
  states $s,s'\in \mmodel{M}$; then $\sat{M}{s'}{\Ga}$.
\end{lemma}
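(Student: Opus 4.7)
The plan is to verify $\sat{M}{s'}{\chi}$ for every $\chi \in \Ga$, by case-splitting on the four groups of formulae placed into $\Ga$ by the rule $\RRule{DR}$ applied to $\De$ and $\neg\di{a}\co{A}\vp$: (i) the ``witness'' formula $\neg\co{A}\vp$; (ii) each $\di{a}\psi \in \De$; (iii) each $\neg\di{a}\psi \in \De$ with $\neg\di{a}\psi \ne \neg\di{a}\co{A}\vp$; and (iv) each $\neg\co{A'}\psi \in \De$ with $a \in A'$. Throughout, the assumption $\sat{M}{s}{\De}$ gives $\sat{M}{s}{\chi}$ for each such $\chi$, and the task is to transport satisfaction from $s$ to $s'$ along $\rel{R}^D_a$.

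The driving observation is uniform: $\rel{R}^D_a$, and every $\rel{R}^C_{A'}$ with $a \in A'$, is an equivalence relation containing the pair $(s,s')$; hence the corresponding S5-box modalities $\di{a}$ and $\co{A'}$ take the same truth value at $s$ and at $s'$. This handles cases (ii) and (iii) by symmetry and transitivity of $\rel{R}^D_a$: in (ii), every $\rel{R}^D_a$-successor of $s'$ is also a successor of $s$, so the universal condition for $\di{a}\psi$ transfers; in (iii), a witness $t$ with $(s,t)\in\rel{R}^D_a$ and $\sat{M}{t}{\neg\psi}$ satisfies $(s',t)\in\rel{R}^D_a$ as well. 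For (iv), the inclusion $\rel{R}^D_a \subseteq \rel{R}^C_{A'}$ (from $a \in A'$) places $(s,s')$ in $\rel{R}^C_{A'}$, and the witness for $\neg\co{A'}\psi$ at $s$ is transported to $s'$ along the equivalence relation $\rel{R}^C_{A'}$.

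The principal case, and the main obstacle, is (i). To apply the same S5-transport argument to $\co{A}$ one needs $(s,s')\in\rel{R}^C_A$, which in turn requires $\rel{R}^D_a \subseteq \rel{R}^C_A$, i.e.\ $a \in A$. This is the intended regime of the lemma, since $\neg\di{a}\co{A}\vp$ occurs in $\De$ as one of the $\beta$-components of $\neg\co{A}\vp$ selected by full expansion, forcing $a$ to be among the indices in $A$. Granting $a \in A$, the witness $t$ for $\sat{M}{s}{\neg\co{A}\vp}$ --- namely some $t$ with $(s,t)\in\rel{R}^C_A$ and $\sat{M}{t}{\neg\vp}$ --- is also a witness at $s'$ by symmetry and transitivity of $\rel{R}^C_A$. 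Combining the four cases yields $\sat{M}{s'}{\Ga}$, as required.
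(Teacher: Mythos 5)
Your proof is correct and follows essentially the same route as the paper's: both transport each formula of $\Ga$ from $s$ to $s'$ using the fact that $\rel{R}^D_a$ is an equivalence relation contained in $\rel{R}^C_{A'}$ whenever $a\in A'$. Your explicit observation that case (i) requires $a\in A$ --- an assumption the lemma leaves implicit but which holds in its only application, where $\neg\di{a}\co{A}\vp$ arises as a $\beta$-component of $\neg\co{A}\vp$ --- is a point the paper's terser proof glosses over.
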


\begin{proof} 

  Recall from the rule \RRule{DR} that $\Ga= \{\neg\co{A}\vp\} \cup
  \crh{\di{a}\gamma}{\di{a}\gamma\in\De}\cup
  \crh{\neg\di{a}\gamma}{\neg\di{a}\gamma\in\De, \neg\di{a}\gamma\neq
    \neg\di{a}\co{A}\vp} \cup
  \crh{\neg\co{a}\gamma}{\neg\co{a}\gamma\in \De}$.  The claim follows
  easily, because $\rel{R}^D_ a$ is an equivalence relation. Indeed,
  $\sat{M}{s'}{\neg\co{A}\vp}$ because every $A$-reachable state from
  $s$ is $A$-reachable from $s'$, too. Moreover, $(s',s'')\in
  \rel{R}^D_a$ iff $(s,s'')\in \rel{R}^D_a$, for all $s''$. Therefore,
  $\sat{M}{s'}{\chi}$ for all $\chi \in \Ga \setminus \{\neg
  \co{A}\vp\}$.
\end{proof}

\begin{lemma}
 \label{lm:E3_sound}
 Let $\De \in \tableau{S}_0^{\theta}$ be such that \sat{M}{s}{\De} for
 some CMAEM \mmodel{M} and $s \in \mmodel{M}$, and let $\neg
 \commonk{A} \vp \in \De$. Then there is a finite path in
 $\tableau{S}_0^{\theta}$ of satisfiable states that realizes
 $\neg\co{A}\vp$ at $\De$.
\end{lemma}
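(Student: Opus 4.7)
The plan is to proceed by induction on $d = d(s, \neg \commonk{A} \vp)$, defined as the length of the shortest $A$-reachability path in $\mmodel{M}$ from $s$ to a state satisfying $\neg \vp$; this is well-defined and finite since $\sat{M}{s}{\neg \commonk{A} \vp}$. For the base case $d = 0$, we have $\sat{M}{s}{\neg \vp}$; if $\neg \vp \in \De$, the singleton path realizes $\neg \commonk{A} \vp$. Otherwise, since $\De$ is fully expanded and contains the $\beta$-formula $\neg \commonk{A} \vp$, it must contain some $\neg \distrib{a_0} \commonk{A} \vp$ with $a_0 \in A$. Applying \RRule{DR} to $\De$ with this formula yields a prestate $\Ga$; by the reflexivity of $\rel{R}^D_{a_0}$, Lemma~\ref{lemma1} (taking $s^* = s$) gives $\sat{M}{s}{\Ga}$. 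Inspection of \RRule{DR} shows that $\Ga$ contains $\neg \commonk{A} \vp$ but none of its $\beta$-components, so Lemma~\ref{lem:expansion}(2) supplies $\De' \in \st{\Ga}$ with $\neg \vp \in \De'$ and $\sat{M}{s}{\De'}$. After prestate elimination this yields the edge $\De \stackrel{\neg \distrib{a_0} \commonk{A} \vp}{\longrightarrow} \De'$ in $\tableau{T}_0^{\theta}$ and a realizing path of length~$1$.

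In the inductive step $d > 0$ we have $\sat{M}{s}{\vp}$, so $\neg \vp \notin \De$, and full expansion still provides some $\neg \distrib{a_0} \commonk{A} \vp \in \De$. Fix a shortest witness path $s = s_0, s_1, \ldots, s_d$ in the model with $(s_0, s_1) \in \rel{R}^D_{b_0}$ for some $b_0 \in A$, so $d(s_1, \neg \commonk{A} \vp) = d - 1$. The \emph{main obstacle} is that the $\beta$-component picked during $\De$'s full expansion refers to $a_0$, which need not coincide with $b_0$, so a single tableau step via $a_0$ may fail to decrease $d$. The plan is to overcome this with a two-step detour. First, apply \RRule{DR} to $\De$ with $\neg \distrib{a_0} \commonk{A} \vp$; by Lemma~\ref{lemma1} with $s^* = s$, the resulting prestate $\Ga$ is satisfied at $s$, and $\sat{M}{s}{\vp}$ together with inspection of \RRule{DR} ensures $\Ga$ still has no $\beta$-component of $\neg \commonk{A} \vp$ (any such component in $\Ga$ would yield $\sat{M}{s}{\neg \vp}$, a contradiction). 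Since $s_1$ witnesses $\sat{M}{s}{\neg \distrib{b_0} \commonk{A} \vp}$, Lemma~\ref{lem:expansion}(3) furnishes $\De' \in \st{\Ga}$ with $\sat{M}{s}{\De'}$ and either $\neg \vp \in \De'$ or $\neg \distrib{b_0} \commonk{A} \vp \in \De'$.

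In the first subcase the path $\De, \De'$ realizes $\neg \commonk{A} \vp$ and we are done. In the second, apply \RRule{DR} to $\De'$ with $\neg \distrib{b_0} \commonk{A} \vp$ to obtain a prestate $\Ga'$; since $(s, s_1) \in \rel{R}^D_{b_0}$, Lemma~\ref{lemma1} with $s^* = s_1$ gives $\sat{M}{s_1}{\Ga'}$, and Lemma~\ref{lem:expansion}(1) yields $\De'' \in \st{\Ga'}$ satisfiable at $s_1$ with $\neg \commonk{A} \vp \in \De''$. As $d(s_1, \neg \commonk{A} \vp) = d - 1$, the inductive hypothesis applied to $\De''$ and $s_1$ supplies a finite realizing path from $\De''$; prepending $\De$ and $\De'$ produces the desired path from $\De$. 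By construction, all intermediate states contain $\neg \commonk{A} \vp$ and the prepended edges are labelled with $\neg \distrib{a_0} \commonk{A} \vp$ and $\neg \distrib{b_0} \commonk{A} \vp$ whose agents lie in $A$, as required by Definition~\ref{def:realization}.
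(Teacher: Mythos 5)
Your proof is correct and follows essentially the same route as the paper's: both arguments induct on the length of a shortest path in the model witnessing the eventuality and rely on Lemma~\ref{lem:expansion}(3), Lemma~\ref{lemma1} and rule \RRule{DR} to mirror that path by satisfiable tableau states. The only structural difference is that the paper packages the induction as an auxiliary claim about \emph{prestates} containing no $\beta$-component of $\neg\co{A}\vp$ (so each model step yields one tableau edge), whereas you induct directly on states and insert a two-edge detour $\De \to \De' \to \De''$ whenever the $\beta$-component already present in $\De$ names the ``wrong'' agent --- a harmless reorganization, since only finiteness of the realizing path is required.
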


\begin{proof}
We start by proving the following:

\begin{quote}
  Let $\neg\co{A}\vp\in\Ga_1$ for some prestate
  $\Ga_1\in\tableau{P}^{\theta}$ such that $\Ga_1$ does not contain
  any of the $\beta$-components of $\neg\co{A}\vp$.  Suppose that
  $\sat{M}{s_1}{\Ga_1}$, and let $
  s_1\overset{{a_1}}{\longrightarrow}s_2\overset{{a_2}}{\longrightarrow}\ldots\overset{{a_{n-1}}}{\longrightarrow}s_n
  $ be a \emph{shortest} path in $\mmodel{M}$ that satisfies
  $\neg\co{A}\vp$, i.e., $\sat{M}{s_n}{\neg\vp}$, and for all $i< n$,
  the following hold: $\sat{M}{s_i}{\set{\neg\co{A}\vp, \vp}}$, and
  $(s_i,s_{i+1})\in \rel{R}^D_{a_i}$, for some $a_i\in A$. Then there
  exists a path 
\[
\De_1\overset{\neg\di{a_1}\co{A}\vp}{\longrightarrow}\De_2\overset{\neg\di{a_2}\co{A}\vp}{\longrightarrow}\ldots\overset{\neg\di{a_{n'-1}}\co{A}\vp}{\longrightarrow}\De_{n'},
\]
of satisfiable states in $\tableau{S}_0^{\theta}$, where $n'\leq n$, 
$\De_1\in\st{\Ga_1}$ and $\neg\vp\in \De_{n'}$.
\end{quote}

We prove the above claim by induction on $n$.

If $n = 1$, then $\sat{M}{s_1}{\neg \vp}$. Since $\neg\co{A}\vp\in
\Ga_1$ and $\sat{M}{s_1}{\Ga_1}$, Lemma \ref{lem:expansion} implies
that there is a $\De_1\in \st{\Ga_1}$ such that $\sat{M}{s_1}{\De_1}$
and $\neg \vp\in \De_1$. Thus $\De_1$ is the needed path in
$\tableau{S}_0^{\theta}$ that satisfies the claim above.

Assume now the claim holds for all $m < n$. Let
$\neg\co{A}\vp\in\Ga_1$, let $\sat{M}{s_1}{\Ga_1}$, and assume that none of
$\neg\co{A}\vp$'s $\beta$-components are in $\Ga_1$. Let the path in
$\mmodel{M}$ satisfying the eventuality $\neg\co{A}\vp$ be
$s_1\overset{{a_1}}{\longrightarrow}s_2\overset{{a_2}}{\longrightarrow}\ldots\overset{{a_{n-1}}}{\longrightarrow}s_n$,
where $n>1$.

Since $\sat{M}{s_1}{\set{\neg\di{a_1}\co{A}\vp, \neg\co{A}\vp}}$,
Lemma \ref{lem:expansion} implies the existence of
$\De_1\in\st{\Ga_1}$ in $\tableau{S}_0^{\theta}$ with
$\sat{M}{s_1}{\De_1}$, and $\neg\di{a_1}\co{A}\vp\in\De_1$ or $\neg\vp\in\De_1$. 
In the latter case, $\De_1$ is the needed path. In the former case, due
to Lemma \ref{lem:DR_sound}, there exists a prestate
$\Ga_2\in\tableau{T}^{\theta}$, with
$\DtoD{\De_1}{\Ga_2}{{a_1}}{\co{A}\vp}$; then,
$\neg\co{A}\vp\in\Ga_2$.  Note that $\Ga_2$ cannot contain any of
$\neg\co{A}\vp$'s $\beta$-components, since $\De_1$ contains
$\neg\di{a_1}\co{A}\vp$, and thus, it can contain at most one other
$\beta$-component, namely $\neg\vp$. But in that case we would have
that $\sat{M}{s_1}{\neg\vp}$, which contradicts the assumption. Lemma
\ref{lemma1} gives us $\sat{M}{s_2}{\Ga_2}$.

Thus, since
$s_2\overset{{a_2}}{\longrightarrow}\ldots\overset{{a_{n-1}}}{\longrightarrow}s_n$
is a path of length $n-1$ that realizes $\neg\co{A}\vp$ at $s_2$, the
induction hypothesis claims that there is a path of satisfiable states
in $\tableau{S}_0^{\theta}$,
\[
\De_2\overset{\neg\di{a_2}\co{A}\vp}{\longrightarrow}\ldots\overset{\neg\di{a_{n'-1}}\co{A}\vp}{\longrightarrow}\De_{n'},
\]
where $n'\leq n-1$, $\De_2\in\st{\Ga_2}$, $\neg\vp\in \De_{n'}$. 

Since $\Ga_1\dashrightarrow
\DtoD{\De_1}{\Ga_2}{a_1}{\co{A}\vp}\dashrightarrow\De_2$, we obtain a
path in $\tableau{S}_0^{\theta}$ of length atmost $n$ that satisfies the
induction hypothesis.

That concludes the induction. 

Getting back to the claim of the Lemma, we have that if
$\neg\co{A}\vp\in\De$, then either $\neg\vp \in \De$ or there exists
an $a'\in A$ such that $\neg \di{a'}\co{A}\vp\in \De$, since $\De$ is
fully expanded. In the former case, $\neg\co{A}\vp$ is realized in
$\De$ and the claims follows. In the latter case, there will be a
prestate $\Ga$ in $\tableau{T}^{\theta}$, such that
$\DtoD{\De}{\Ga}{a'}{\co{A}\vp}$. Note that in this case $\Ga\subseteq
\De$. Due to \RRule{DR} and the fact that $\neg\vp\notin\De$, $\Ga$
cannot contain any of $\neg\co{A}\vp$'s $\beta$-components.

Thus, the statement above gives us that there there is a $\De\rightarrow\Ga\rightarrow\De_1\rightarrow\ldots\rightarrow\De_{n'}$, i.e. there is a path of satisfiable states in $\tableau{S}_0^{\theta}$, that realizes $\neg\co{A}\vp\in\De$. 
\end{proof}

\begin{theorem}
 \label{thr:soundness}
 If $\theta \in \lang$ is satisfiable in a CMAEM, then
 $\tableau{T}^{\theta}$ is open.
\end{theorem}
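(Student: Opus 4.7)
The plan is to show that at least one tableau state containing $\theta$ survives all elimination stages, hence the final tableau is open. I begin with the single initial prestate $\{\theta\}$, which is satisfiable at some state $s$ of a CMAEM $\mmodel{M}$ by hypothesis. Applying Lemma~\ref{lem:expansion} to $\{\theta\}$ yields a state $\De_0 \in \st{\{\theta\}}$ with $\sat{M}{s}{\De_0}$; since $\theta \in \De_0$, it suffices to prove that $\De_0$ is never eliminated during the state elimination phase, i.e., that $\De_0 \in S^{\theta}$.

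The central claim I would prove, by induction on the stage $n$ of the state elimination phase, is the following invariant: \emph{every state $\De \in S^{\theta}_n$ that is satisfiable in some CMAEM remains in $S^{\theta}_{n+1}$}. For the base case $n=0$, satisfiable states are by definition in $S^{\theta}_0$. For the inductive step, suppose $\De \in S^{\theta}_n$ is satisfied at some $s$ in some CMAEM $\mmodel{M}$; I must show that neither \RRule{E1} nor \RRule{E2} removes $\De$. For \RRule{E1}: if $\chi = \neg \distrib{A}\vp \in \De$, then Lemma~\ref{lem:DR_sound} yields a state $t$ with $(s,t)\in \rel{R}^D_A$ satisfying the associated prestate $\Ga$ with $\DtoD{\De}{\Ga}{A}{\vp}$; Lemma~\ref{lem:expansion} then gives a state $\De' \in \st{\Ga}$ with $\sat{M}{t}{\De'}$; by the inductive hypothesis $\De'$ is still in $S^{\theta}_n$, so after prestate elimination $\De \stackrel{\chi}{\longrightarrow} \De'$ persists, and \RRule{E1} does not apply to $\De$ with respect to $\chi$.

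For \RRule{E2}: if $\xi = \neg \commonk{A}\vp \in \De$, Lemma~\ref{lm:E3_sound} produces a path of satisfiable states $\De = \De_0 \stackrel{\neg\di{a_0}\co{A}\vp}{\longrightarrow}\cdots\stackrel{\neg\di{a_{n'-1}}\co{A}\vp}{\longrightarrow}\De_{n'}$ in the initial tableau with $\neg \vp \in \De_{n'}$ and $a_i \in A$, and with $\xi \in \De_i$ for $i < n'$ (the last follows from the fact that these intermediate states were built from prestates containing $\neg\co{A}\vp$ and $\neg\co{A}\vp$ belongs to every full expansion of such a prestate). By the inductive hypothesis, all the $\De_i$ survive to stage $n$, so this path witnesses realization of $\xi$ at $\De$ in $\tableau{T}^{\theta}_n$, and \RRule{E2} does not apply. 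This invariant implies $\De_0 \in S^{\theta}$ and hence $\tableau{T}^{\theta}$ is open.

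The main obstacle I anticipate is the bookkeeping in the \RRule{E2} case: one must verify that the path supplied by Lemma~\ref{lm:E3_sound}, which lives in $\tableau{S}_0^{\theta}$, still lives in $\tableau{T}^{\theta}_n$, which requires that each of the states on the path has been shown satisfiable so that the inductive hypothesis applies to all of them simultaneously. A secondary subtlety is that the elimination procedure interleaves \RRule{E1} and \RRule{E2} and cycles over all eventualities, so the induction must be over the total ordinal count of eliminations rather than over a single phase; however, since the inductive step shows no satisfiable state is ever removed at any single step, the argument carries through uniformly, independently of the order in which rules are applied and, as already noted, independently of the choice of cut restrictions $C_1$ and $C_2$.
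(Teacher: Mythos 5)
Your proposal is correct and follows essentially the same route as the paper: the paper's own proof is a short sketch that likewise establishes, by induction on the stages of the state elimination phase and using Lemmas~\ref{lem:expansion}, \ref{lem:DR_sound} and \ref{lm:E3_sound}, that no satisfiable state is ever removed by \RRule{E1} or \RRule{E2}, and then concludes via Lemma~\ref{lem:expansion} applied to the initial prestate $\set{\theta}$. Your write-up simply makes explicit the bookkeeping (the invariant that all satisfiable states of $S_0^{\theta}$ persist at every stage, so the satisfiable successors and realizing paths are still present) that the paper leaves to the reader.
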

\begin{proof}
  Using the preceding Lemma \ref{lem:DR_sound} and Lemma
  \ref{lm:E3_sound}, one can show by induction on the number of stages
  in the state elimination process that no satisfiable state can be
  eliminated due to \RRule{E1}--\RRule{E2}.  The claim then follows
  from Lemma~\ref{lem:expansion}.
\end{proof}

\subsection{Completeness}
\label{subsec:completeness}

The \emph{completeness} of a tableau procedure means that if the
tableau for a formula $\theta$ is open, then $\theta$ is satisfiable
in a CMAEM. In view of Theorem~\ref{thr:models_equal_Hintikka}, it
suffices to show that an open tableau for $\theta$ can be turned into
a CMAEHS for $\theta$. In order to prove that, we need to specify
sufficiently strong restrictive conditions \ref{it:newC1} and
\ref{it:newC2} governing the application of the cut rules \ref{it:FE2}
and \ref{it:FE3} respectively on formulas $\distrib{A}\vp$ and
$\commonk{A}\vp$ in the Definition \ref{def:cmaelcd_fully_expanded} of
cut-saturated sets. We now specify these conditions as
follows. 

\begin{enumerate}[label=$C_{\arabic*}$, start=1]
\item \label{it:newC1} \ 
Cut on $\di{A}\vp\in\subf{\psi}$ where $\psi\in\De$, if either of the following holds:
\begin{enumerate}[label= $C_{1\arabic*}$, start=1]
\item  \ $\psi = {\di{B}\delta}$ or $\psi={\neg \di{B}\delta}$, and there is a $\neg\di{E}\varepsilon\in \De$ such that $A\subseteq E$ and $B\subseteq E$. 
\item \ $\psi ={\neg \co{B}\delta}$ and there exists a $\neg\di{E}\varepsilon\in \De$ such that $A\subseteq E$ and $B\cap E\neq \emptyset$. 
\end{enumerate}
  
\item \label{it:newC2} \ 
Cut on $\co{A}\vp\in\subf{\psi}$ where $\psi\in\De$, if either of the following holds:
\begin{enumerate}[label=$C_{2\arabic*}$, start=1]
\item \ $\psi = \di{B}\delta$  or $\psi = \neg \di{B}\delta$, and there exists a $\neg\di{E}\epsilon\in \De$ such that  $B\subseteq E$ and $A\cap E\neq \emptyset$. 
\item \ $\psi = \neg \co{B}\delta$ and there exists a $\neg\di{E}\varepsilon\in \De$ such that $A\cap E\neq \emptyset$ and $B\cap E\neq \emptyset$. 
\end{enumerate}
\end{enumerate}

The intuition: a cut rule only has  to be applied to a formula $\distrib{A}\vp$ or $\commonk{A}\vp$ if: 

\begin{itemize} 
\item[(i)] that formula can occur in the label of a descendant state and, 

\item[(ii)] once it occurs there, it will have an effect spreading back to the current state. 
\end{itemize}

For the former to happen, that formula must occur in a $\distrib{B}$-formula or a $\neg \distrib{B}$-formula or a $\neg\co{B}$-formula. For the latter, the path leading from the current state to that descendant must be labelled with relations propagating the effect of the respective box. 

\begin{example} \label{ex:cut}
  This example illustrates
  the need for applying cut rules and using
  \funny sets instead of simply fully expanded sets.  First, recall
  the requirement of the relations in a (pseudo-)\cmaelcd\ model to be
  equivalence relations, reflected in \ref{it:CHDK} of
  Definition~\ref{def:cmaehs} for Hintikka structures.
  Now, consider the tableau constructed for the formula
  $\theta=\neg\di{\set{a,b}}p\land\neg\di{\set{a,c}}\neg\di{a}p$
  \emph{if we would only use fully expanded sets}:
\begin{equation*}
\xymatrix{
&\set{\theta,\neg\di{\set{a,b}}p,\neg\di{\set{a,c}}\neg\di{a}p}\ar^{\neg\di{\set{a,b}}p}[dl]\ar^{\neg\di{\set{a,c}}\neg\di{a}p}[dr]\\
\set{\neg p} &&\set{\neg\neg\di{a}p,\di{a}p,p}
}
\end{equation*}
The corresponding claimed Hintikka structure and (pseudo)-model, that this tableau would produce (see the construction in Lemma \ref{lm:open_tableau_Hintikka}) would then be, respectively:

\medskip
\begin{align*}
\tiny{\xymatrix{
&\set{\theta,\neg\di{\set{a,b}}p,\neg\di{\set{a,c}}\neg\di{a}p}\ar^{\set{a,b}}[dl]\ar^{\set{a,c}}[dr]\\
\set{\neg p} &&\set{\neg\neg\di{a}p,\di{a}p,p}
}}
&
\tiny{\xymatrix{
&\set{}\ar@{<->}[dl]_{\set{a,b},\set{a},\set{b}}\ar@{<->}^{\set{a,c},\set{a},\set{c}}[dr]\ar@(ul,ur)\\
\set{}\ar@{<->}^{\set{a}}[rr]\ar@(dl,dr)&&\set{p}\ar@(dl,dr)
}}
\end{align*}

\bigskip
In the ``Hintikka''-structure to the left, we have that $\di{a}p$ is in the state in the bottom right corner, but not in the state in the top, though the edge connecting them is labelled with $\{a,c\}$. This on the other hand means, that $\theta$ is not satisfied in the ``model'' to the right, because $\di{a}p$ does not hold at  any state, hence $\neg\di{\set{a,c}}\neg\di{a}p$ in not true at any state. In fact, $\theta$ is not satisfiable at all. 

If we would indeed apply the cut-rules then the tableau for $\theta$ would close. The pretableau for $\theta$ would look as follows. 
\begin{equation*}
\tiny{
\xymatrix{
&&\set{\neg p,\neg\di{a}p}\ar@{-->}[r]&\set{\neg p,\neg\di{a}p}\ar[dl]^{\neg\di{a}p}\\
&\set{\theta,\neg\di{\set{a,b}}p,\neg\di{\set{a,c}}\neg\di{a}p, \neg\di{a}p}\ar[ru]^{\neg\di{\set{a,b}}p}\ar[dr]^(.7){\neg\di{\set{a,c}}\neg\di{a}p}\ar[r]^(0.7){\neg\di{a}p}
&\set{\neg p}\ar@{-->}[r]&\set{\neg p}\\
\theta\ar@{-->}[ru]\ar@{-->}[r] &\set{\theta,\neg\di{\set{a,b}}p,\neg\di{\set{a,c}}\neg\di{a}p, \di{a}p, p}\ar[d]^{\neg\di{\set{a,b}}p}\ar^(0.7){\neg\di{\set{a,c}}\neg\di{a}p}[rd]&\set{\neg\neg\di{a}p,\neg\di{a}p}\\
&\set{\neg p,\di{a}p} & \set{\neg\neg\di{a}p,\di{a}p}\ar@{-->}[r]&\set{\neg\neg\di{a}p,\di{a}p,p}
}}
\end{equation*}
Notice that some of the prestates (namely $\{\neg p, \di{a}p\}$ and $\{\neg\neg\di{a}p,\neg\di{a}p\}$) do not have any full expansions since these are patently inconsistent. After the initial tableau has been build, this then causes the two states in $\st{\theta}$ to be deleted by \RRule{E1} and the final tableau is 
\begin{equation*}
\tiny{
\xymatrix{
\set{\neg p,\neg\di{a}p}\ar[r]^{\neg\di{a}p} & \set{\neg p} & \set{\neg\neg\di{a}p,\di{a}p,p}
}}
\end{equation*}
which closes. 
\end{example}

\medskip

The following lemma is needed to ensure that the satisfaction of the condition \ref{it:CHDK} from the definition of Hintikka structures for \cmaelcd\ is guaranteed in the final tableau.

\begin{lemma}\label{lemma:CH4}
Suppose $\DtoD{\De}{\De'}{A}{\vp}$ in the final tableau $\tableau{T}^{\theta}$ for some input formula $\theta$ and suppose that $\distrib{B}\psi\in \De'$ where $B\subseteq A$. Then $\distrib{B}\psi\in \De$.
\end{lemma}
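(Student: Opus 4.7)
My plan is to reduce the claim to an application of the cut rule in $\De$, using the triggering formula $\neg\distrib{A}\vp\in\De$ itself as the distinguished $\neg\distrib{E}\varepsilon$-witness (with $E=A$), and then to rule out the alternative $\neg\distrib{B}\psi\in\De$ by extracting a patent inconsistency in $\De'$.

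First I unpack the arrow. The prestate-elimination phase compresses a chain $\De\stackrel{\neg\distrib{A}\vp}{\longrightarrow}\Ga\dashrightarrow\De'$ in the pretableau, where $\Ga$ is constructed from $\De$ by \RRule{DR} and $\De'$ is a CS-expansion of $\Ga$. Recall that $\Ga$ consists of $\neg\vp$ together with every $\distrib{A'}\chi\in\De$ with $A'\subseteq A$, every $\neg\distrib{A'}\chi\in\De$ with $A'\subseteq A$ (other than $\neg\distrib{A}\vp$ itself), and every $\neg\commonk{A'}\chi\in\De$ with $A'\cap A\neq\emptyset$. Since $\De'\subseteq\ecl{\Ga}$ and $\distrib{B}\psi$ is positive, $\distrib{B}\psi\in\cl{\xi}$ for some $\xi\in\Ga$. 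A routine inspection of the closure rules shows that every positive formula in $\cl{\xi}$ either lies in $\subf{\xi}$ or has the form $\distrib{a}\commonk{A'}\chi$ with $\commonk{A'}\chi\in\subf{\xi}$ and $a\in A'$ (the latter arising from the $\alpha$-closure of $\commonk{A'}\chi$, which is the only closure rule producing positive formulae that are not honest subformulae). In either case I fix a witness $\psi'\in\De$: $\psi'=\xi$ when $\xi$ is one of the formulae copied from $\De$, and $\psi'=\neg\distrib{A}\vp$ when $\xi=\neg\vp$.

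\textbf{Direct case:} $\distrib{B}\psi\in\subf{\psi'}$. I apply cut on $\distrib{B}\psi$ in $\De$. A short case analysis on the shape of $\psi'$ shows that $C_{11}$ applies when $\psi'$ is a (possibly negated) $\distrib{A'}$-formula, since $A'\subseteq A$ in all such subcases, and that $C_{12}$ applies when $\psi'=\neg\commonk{A'}\chi$, since $A'\cap A\neq\emptyset$; in every subcase $E=A$ witnesses the side condition, using $B\subseteq A$. Cut-saturation of $\De$ then forces $\distrib{B}\psi\in\De$ or $\neg\distrib{B}\psi\in\De$. The second alternative is impossible: if $\neg\distrib{B}\psi\neq\neg\distrib{A}\vp$, then $\neg\distrib{B}\psi\in\Ga\subseteq\De'$, contradicting $\distrib{B}\psi\in\De'$; and if $\neg\distrib{B}\psi=\neg\distrib{A}\vp$, then $\neg\vp=\neg\psi\in\Ga\subseteq\De'$, contradicting $\psi\in\De'$ obtained by the $\alpha$-expansion of $\distrib{B}\psi$ in the fully expanded set $\De'$.

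\textbf{Indirect case:} $\distrib{B}\psi=\distrib{a}\commonk{A'}\chi$ with $B=\{a\}$, $a\in A'$, and $\commonk{A'}\chi\in\subf{\psi'}$. Here I cut on $\commonk{A'}\chi$ rather than on $\distrib{B}\psi$, again using $\neg\distrib{A}\vp$ as the $\neg\distrib{E}\varepsilon$-witness. Since $a\in A\cap A'$, we have $A'\cap A\neq\emptyset$, and the same case analysis on $\psi'$ triggers $C_{21}$ or $C_{22}$. Cut-saturation yields $\commonk{A'}\chi\in\De$ or $\neg\commonk{A'}\chi\in\De$. In the first subcase the $\alpha$-expansion of $\commonk{A'}\chi$ in $\De$ delivers $\distrib{a}\commonk{A'}\chi=\distrib{B}\psi\in\De$; in the second subcase, $\neg\commonk{A'}\chi\in\Ga\subseteq\De'$ clashes with $\commonk{A'}\chi\in\De'$, the latter produced by the $\alpha$-expansion of $\distrib{a}\commonk{A'}\chi\in\De'$, again a patent inconsistency. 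The hardest part is exactly this indirect case: the cut rule $C_1$ fires only on genuine subformulae, and $\distrib{a}\commonk{A'}\chi$ need not be one, so one has to descend to its immediate subformula $\commonk{A'}\chi$; the shape of conditions $C_{12}$, $C_{22}$ and the inclusion of the $\neg\commonk{A'}\chi$-formulae with $A'\cap A\neq\emptyset$ in $\Ga$ are tailored precisely to make this descent go through.
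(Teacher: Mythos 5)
Your proof is correct and follows essentially the same route as the paper's: trace $\distrib{B}\psi$ back to a source formula of the \RRule{DR}-prestate, split on whether it is a genuine subformula or of the form $\distrib{a}\commonk{A'}\chi$, fire $C_1$ resp.\ $C_2$ with $\neg\distrib{A}\vp$ as the $\neg\distrib{E}\varepsilon$-witness, and refute the negative cut alternative via a patent inconsistency in $\De'$. You even treat explicitly the edge case $\neg\distrib{B}\psi=\neg\distrib{A}\vp$ (which \RRule{DR} excludes from $\Ga$), a detail the paper's proof passes over.
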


\begin{proof}

First, note that if the cut rules \ref{it:FE2} and \ref{it:FE3} are applied unrestrictedly to every  
subformula $\distrib{A}\vp$ or $\commonk{A}\vp$ of a formula in the label of the current state $\De$, the proof of the lemma is immediate. We will show that the claim still holds if the restrictions $C_1$ and $C_2$, specified above, are imposed.

For a formula $\alpha$ we let $\FE_1(\alpha)$ be the set of all formulae that can occur in any one-step \funny expansion of $\alpha$ according to the procedure described in Definition \ref{def:full_expansion}. 
Similarly $\FE_1(\Ga)=\bigcup_{\alpha\in\Ga}\FE_1(\alpha)$ for a set $\Ga$ of formulae, and recursively we let $\FE_n(\Ga)=\FE_1(\FE_{n-1}(\Ga))$.
As is easy to see, this construction converges, and the following  is true:

\begin{itemize}
	\item\label{it:iii} For any formula $\alpha$ and any $n\in \mathbf{N}$, $\FE_n(\alpha)\subseteq \ecl{\alpha}$, i.e.: 
	\begin{align*}
	\FE_n(\alpha)\subseteq \crh{\beta,\neg \beta}{\beta\in\subf{\alpha}}\cup\crh{\distrib{e}\commonk{E}\varepsilon, \neg \distrib{e}\commonk{E}\varepsilon}{\commonk{E}\varepsilon\in\subf{\alpha},e\in E}\end{align*}
	\item For any \funny expansion $\Omega$ of $\Ga$ there is an $n$ such that $\Omega \subseteq \FE_n(\Ga)$. 
	\item\label{it:iv} If $\beta\in\FE_n(\Ga)$, then there is an $\alpha\in \Ga$, such that $\beta\in\FE_n(\alpha)$.
\end{itemize}

Now, let $\Ga$ be the prestate in the pretableau
$\tableau{P}^{\theta}$ that gives rise to the relation between $\De$
and $\De'$, i.e. $\DtoD{\De}{\Ga}{A}{\vp}\dashrightarrow\De'$ in
$\tableau{P}^{\theta}$.  The above gives us that since $\De'$ is a
\funny expansion of $\Ga$ and $\distrib{B}\psi\in\De'$, there is an
$\alpha\in \Ga$ such that $\distrib{B}\psi\in\ecl{\alpha}$.  That is, 
either $\distrib{B}\psi\in \subf{\alpha}$, or $\distrib{B}\psi =
\distrib{d}\commonk{D}\delta$ for a
$\commonk{D}\delta\in\subf{\alpha}$ and a $d\in D$.

Since $\alpha\in\Ga$, due to $\RRule{DR}$, either $\alpha =
\distrib{C}\gamma\in\De$ or $\alpha=\neg \distrib{C}\gamma\in\De$ for
a $C\subseteq A$, or $\alpha = \neg\co{C}\gamma\in\De$ where $C\cap
A\neq\emptyset$, or $\alpha= \neg\vp$. We notice that it is enough to
show that \ref{it:newC1} is applicable to $\di{B}\psi$ at $\De$, since
then either $\distrib{B}\psi\in\De$ (which is what we want) or
$\neg\distrib{B}\psi\in\De$; the latter would, according to
\RRule{DR}, imply that $\neg\distrib{B}\psi\in\Ga\subseteq \De'$,
which would cause $\De'$ to be patently inconsistent, which
contradicts $\De'$ being a \funny set and thus fully expanded
(cf. \ref{it:FE1}).  We split according to cases:

\medskip
\noindent\textbf{Case 1}:   $\alpha = \distrib{C}\gamma\in\De$ or
$\alpha=\neg \distrib{C}\gamma\in\De$ for a $C\subseteq A$:
$\distrib{B}\psi\in \ecl{\distrib{C}\gamma}$ gives that
$\distrib{B}\psi\in\subf{\distrib{C}\gamma}$, or $\distrib{B}\psi =
\distrib{d}\commonk{D}\delta$ for a
$\commonk{D}\delta\in\subf{\distrib{C}\gamma}$, where $d\in D$.
	
In the first case, $\distrib{B}\psi$ is a subformula of a
$\di{C}$-formula in $\De$, and since $C,B\subseteq A$ and
$\neg\di{A}\vp\in\De$, \ref{it:newC1} is applicable to $\di{B}\psi$ at
$\De$.
	
In the second case, $\distrib{B}\psi = \distrib{d}\commonk{D}\delta$
for an $\commonk{D}\delta\in\subf{\distrib{C}\gamma}$ with $d\in
D$. Since $B=\{d\}\subseteq A$, we have $d\in D \inter A$ and hence
\ref{it:newC2} is applicable to $\co{D}\delta$ at $\De$, as we also
have $C\subseteq A$ and $\neg\di{A}\vp\in\De$. This means that either
$\co{D}\delta\in \De$ or $\neg\co{D}\delta \in\De$ according to
\ref{it:FE3}. If $\co{D}\delta\in \De$, then
$\di{B}\psi=\di{d}\co{D}\delta\in\De$ according to \ref{it:FE1}. If
$\neg\co{D}\delta \in\De$, then according to \RRule{DR},
$\neg\co{D}\delta\in\Ga$ since $d\in D\cap A$. However,
$\distrib{B}\psi = \distrib{d}\commonk{D}\delta\in \Ga$, and hence
$\commonk{D}\delta\in \De'$. This gives us a contradiction, as $\De'$
is fully expanded and, thus, not patently inconsistent.

The case where $\alpha=\neg\di{C}\gamma$ is similar.

\medskip
\noindent\textbf{Case 2}:  {$\alpha = \neg \vp$:} $\distrib{B}\psi\in
\ecl{\neg\vp}$. We have two cases to consider:

Either $\distrib{B}\psi\in\subf{\neg\vp}$, in which case
$\distrib{B}\psi\in\subf{\vp}\subseteq\subf{\neg\distrib{A}\vp}$ and
thus \ref{it:newC1} is applicable (since $B,A\subseteq A$).

$\distrib{B}\psi = \distrib{d}\commonk{D}\delta$ for an
$\commonk{D}\delta\in\subf{\neg \vp}$ and $d\in D$ gives that
$\commonk{D}\delta\in\subf{\vp}\subseteq\subf{\neg\distrib{A}\vp}$,
and thus \ref{it:newC2} is applicable to $\co{D}\delta$ at $\De$
since, again, $d\in D\cap A$ and $A\subseteq A$. Then, either
$\commonk{D}\delta\in\De$ or $\neg\commonk{D}\delta\in\De$. As before,
the former implies that $\distrib{B}\psi\in \De$, as desired, while
the latter leads to a contradiction.

\medskip
\noindent\textbf{Case 3}:  $\alpha = \neg\co{C}\gamma$, where $C\cap
A\neq \emptyset$:

$\distrib{B}\psi \in \subf{\neg\co{C}\gamma}$ immediately gives that
\ref{it:newC1} is applicable to $\di{B}\psi$ at $\De$.

If $\distrib{B}\psi = \di{d}\co{D}\delta$, where
$\co{D}\delta\in\subf{\neg\co{C}\gamma}$ and $d\in D$, then
\ref{it:newC2} is applicable to $\co{D}\delta$ at $\De$, as $d\in
D\cap A$ and $\neg\di{A}\vp\in \De$. Thus, either $\co{D}\delta\in
\De$ or $\neg\co{D}\delta$. The former implies that, due to
\ref{it:FE1}, $\di{B}\psi\in\De$, as desired, while the other gives a
contradiction, due to \RRule{DR} and \ref{it:FE1}.
\end{proof}

\begin{lemma}
 \label{lm:open_tableau_Hintikka}
 If $\tableau{T}^{\theta}$ is open, then there exists a CMAEHS for
 $\theta$.
\end{lemma}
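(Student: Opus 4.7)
The plan is to construct a CMAEHS directly from the final tableau $\tableau{T}^{\theta}$ by taking its states as the states of the Hintikka structure, setting $H(\De)=\De$ (each tableau state is labelled by itself), and defining each relation $\rel{R}^D_A$ as the set of pairs $(\De,\De')$ for which there exists a formula $\vp$ with $\DtoD{\De}{\De'}{A}{\vp}$ in $\tableau{T}^{\theta}$. Following condition 4 of Definition~\ref{def:cmaes}, we then take $\rel{R}^C_A$ to be the reflexive transitive closure of $\bigcup_{B\subseteq A}\rel{R}^D_B$. Since $\tableau{T}^{\theta}$ is open, some state $\De^{*}\in S^{\theta}$ satisfies $\theta\in \De^{*}= H(\De^{*})$, so once the labelling conditions \ref{it:CHDS}--\ref{it:CHnegCK} are verified, we immediately obtain a CMAEHS satisfying $\theta$.

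Conditions \ref{it:CHDS}, \ref{it:CHnegDK} and \ref{it:CHnegCK} can be dispatched by reading off the surviving structure of the tableau. \ref{it:CHDS} is immediate, because every state in $\tableau{T}^{\theta}$ is a CS-expansion and is therefore fully expanded in the sense of Definition~\ref{def:cmaelcd_downward_saturated}. For \ref{it:CHnegDK}, if $\neg\distrib{A}\vp\in H(\De)$ then $\De$ cannot have been removed by \RRule{E1}, so some $\DtoD{\De}{\De'}{A}{\vp}$ exists; by rule \RRule{DR} the prestate that expanded into $\De'$ contained $\neg\vp$, and since CS-expansion only adds formulae we have $\neg\vp\in\De'=H(\De')$. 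For \ref{it:CHnegCK}, if $\neg\commonk{A}\vp\in H(\De)$ then $\De$ cannot have been eliminated by \RRule{E2}, hence this eventuality is realized at $\De$; by Definition~\ref{def:realization} either $\neg\vp\in\De$ (and reflexivity of $\rel{R}^C_A$ finishes the job) or there is a chain of marked arrows labelled by $\neg\distrib{a_i}\psi_i$ with $a_i\in A$, and each such arrow witnesses membership in $\rel{R}^D_{\set{a_i}}\subseteq \bigunion_{B\subseteq A}\rel{R}^D_B$, so the composition lies in $\rel{R}^C_A$ and terminates in a state containing $\neg\vp$.

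The main obstacle, and the whole reason for introducing cut-saturated sets with the tailor-made restrictions \ref{it:newC1}, \ref{it:newC2}, is verifying \ref{it:CHDK}: whenever $(\De,\De')\in\rel{R}^D_A$ arises from an arrow $\DtoD{\De}{\De'}{A}{\vp}$, we must show that $\distrib{B}\chi\in H(\De)$ iff $\distrib{B}\chi\in H(\De')$ for every $B\subseteq A$. The forward direction is essentially built into rule \RRule{DR}, whose prestate explicitly contains every $\distrib{B}\chi\in\De$ with $B\subseteq A$, and CS-expansion preserves these formulae in $\De'$. The converse direction is precisely the content of Lemma~\ref{lemma:CH4}, whose proof exploited the cut restrictions to trace any $\distrib{B}\chi\in\De'$ back to $\De$. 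Thus the technically hard work is already done; assembling the four Hintikka conditions yields the desired CMAEHS, and openness of $\tableau{T}^{\theta}$ supplies a state whose label contains $\theta$, completing the argument.
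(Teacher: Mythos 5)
Your proposal is correct and follows essentially the same route as the paper's own proof: the same construction of the Hintikka structure from the final tableau (marked arrows become the $\rel{R}^D_A$, with $\rel{R}^C_A$ their reflexive transitive closure), the same appeal to full expansion, \RRule{E1}, \RRule{E2} for conditions \ref{it:CHDS}, \ref{it:CHnegDK}, \ref{it:CHnegCK}, and the same treatment of \ref{it:CHDK} via \RRule{DR} for the forward direction and Lemma~\ref{lemma:CH4} for the converse. No gaps; if anything, you spell out the realization-to-$\rel{R}^C_A$ step in slightly more detail than the paper does.
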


\begin{proof}

The needed Hintikka structure \hintikka{H} for the formula $\theta$ is built out of the final tableau $\tableau{T}^{\theta}$ by renaming the relations between the states, such that they correspond to a subset of $\agents$. This is done by labeling the edges from $\De$ to $\De'$ with the set $A$ for which $\DtoD{\De}{\De'}{A}{\varphi}$ in $\tableau{T}^{\theta}$.

Now, let $\agents$ be the set of agents occurring in $\theta$, and let $S = 
\tableau{S}^{\theta}$. 
For any $A\in\powerne{\agents}$, let $\rel{R}^D_A = \crh{(\De,\De')\in
  S\times S}{\DtoD{\De}{\De'}{A}{\varphi}\text{ for some }\varphi}$,
and let $\rel{R}^C_A$ be the reflexive, transitive closure of
$\bigunion_{B\subseteq A}\, \rel{R}^D_{B}$. Let $L(\De)$ be the
labelling of the state in $\tableau{T}$, i.e. the sets of formulae
that has been associated with $\De$.

Finally, let $\hintikka{H}_{\theta} = (\agents, \tableau{S}^{\theta},
\set{\rel{R}^D_A}_{A \in \powerne{\agents}}, \set{\rel{R}^C_A}_{A \in
  \powerne{\agents}}, \ap, L )$.

We will now show that $\hintikka{H}_{\theta}$ is a Hintikka
structure. To that end, we have to prove $(\agents, S,
\set{\rel{R}^D_A}_{A \in \powerne{\agents}}, \set{\rel{R}^C_A}_{A \in
  \powerne{\agents}})$ is a CMAES, and that
conditions~\ref{it:CHDS}-\ref{it:CHnegCK} of
Definition~\ref{def:cmaehs} hold for $\hintikka{H}$.  The former is
clear from the construction of $\hintikka{H}$.

\ref{it:CHDS} holds since all states in the final tableau are fully
expanded.  

\ref{it:CHnegDK} is satisfied since, otherwise, the state
would have been deleted from the tableau due to \RRule{E1}.  

Likewise, \ref{it:CHnegCK} is satisfied since, otherwise, the state would have
been removed due to \RRule{E2}. 

It remains to show that \ref{it:CHDK} holds. Let $(\De, \De')\in
\rel{R}^D_A$ (i.e. $\DtoD{\De}{\De'}{A}{\vp}$ in
$\tableau{T}^{\theta}$), and $B\subseteq A$. We need to show that
$\distrib{B}\psi\in \De \Leftrightarrow \distrib{B}\psi\in
\De'$.  If $\distrib{B}\psi\in \De$, then due to the propagation
rule \RRule{DR}, $\distrib{B}\psi\in \Ga$, where $\Ga$ is the prestate
in the final pretableau, such that
$\DtoD{\De}{\Ga}{A}{\vp}\dashrightarrow\De'$. Thus $\distrib{B}\psi$
is also in $\De'$ since $\Ga$ is included in all \funny expansions of
$\Ga$.  The other direction follows from Lemma \ref{lemma:CH4}
\end{proof}

\begin{theorem}[Completeness]
 \label{thr:completeness}
 Let $\theta \in \lang$ and let $\tableau{T}^{\theta}$ be open.
 Then, $\theta$ is satisfiable in a CMAEM.
\end{theorem}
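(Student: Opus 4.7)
The plan is to combine the two preceding results directly. Since $\tableau{T}^{\theta}$ is open, by the definition of openness there exists some state $\De \in S^{\theta}$ with $\theta \in \De$. I would invoke Lemma~\ref{lm:open_tableau_Hintikka} to extract from the open final tableau a CMAEHS $\hintikka{H}_{\theta}$ in which $\theta$ appears in the label of some state; explicitly, the construction there takes the state space of $\tableau{T}^{\theta}$, defines $\rel{R}^D_A$ from the marked edges $\DtoD{\De}{\De'}{A}{\vp}$, and defines $\rel{R}^C_A$ as the reflexive, transitive closure of $\bigunion_{B \subseteq A} \rel{R}^D_B$, so that the Hintikka conditions \ref{it:CHDS}--\ref{it:CHnegCK} hold thanks to cut-saturation, \RRule{E1}, \RRule{E2}, and Lemma~\ref{lemma:CH4}.

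Given this CMAEHS satisfying $\theta$, Definition~\ref{def:Hintikka_sat} tells us that $\theta$ is satisfiable in $\hintikka{H}_{\theta}$. Then I would apply Theorem~\ref{thr:models_equal_Hintikka}, which asserts the equivalence of satisfiability in CMAEHSs and CMAEMs, to conclude that $\theta$ is satisfiable in a CMAEM. This yields the desired conclusion.

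There is essentially no obstacle at this stage, since the heavy lifting has been distributed across the previous lemmas. The delicate work sits in Lemma~\ref{lm:open_tableau_Hintikka} (showing that the extracted structure genuinely satisfies the Hintikka clauses, in particular \ref{it:CHDK}, which is where the restricted cut rules governed by \ref{it:newC1} and \ref{it:newC2} become essential through Lemma~\ref{lemma:CH4}) and in Lemma~\ref{lem:psmodels_to_models} (where the pseudo-model is unravelled into a tree-like genuine CMAEM). Once those are in hand, the completeness theorem reduces to a one-line composition: openness $\Rightarrow$ CMAEHS for $\theta$ $\Rightarrow$ CMAEM for $\theta$.
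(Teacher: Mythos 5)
Your proposal matches the paper's proof exactly: the paper also derives completeness as an immediate consequence of Lemma~\ref{lm:open_tableau_Hintikka} (open tableau yields a CMAEHS for $\theta$) composed with Theorem~\ref{thr:models_equal_Hintikka} (equivalence of satisfiability in CMAEHSs and CMAEMs). Your additional remarks about where the real work lies are accurate but not part of the argument itself.
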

\begin{proof}
 Immediate from Lemma~\ref{lm:open_tableau_Hintikka} and
 Theorem~\ref{thr:models_equal_Hintikka}. 
\end{proof}

\section{Complexity, efficiency, and possible optimizations of the tableau procedure}
\label{sec:complexity}

\subsection{Complexity}

The termination of the tableau procedure described above is a fairly straightforward consequence of the finiteness of the set of all possible labels of states and prestates and their re-use in the construction phase. 
In this subsection, we estimate the worst-case running time of all phases of the procedure.

We denote by $|\theta|$ the length of a formula $\theta$
and by $\card{\ecl{\theta}}$ the number of formulae in $\ecl{\theta}$.
Let $|\theta | = n$ and the number of agents occurring in $\theta$ be $k$. 

By Lemma~\ref{lem:size_of_closure}, $\card{\ecl{\theta}} \leq c k n$ for some (small) constant $c$. Then, the number of prestates and states in the tableau for $\theta$ is 
$\bigo{2^{ckn}}$. Comparing two states or prestates takes  \bigo{ckn} steps (assuming a fixed order of the formulae in $\ecl{\theta}$, and each state being represented as a 0/1 string of length $ckn$), hence checking whether a prestate or a state has already been created, takes $\bigo{ckn 2^{ckn}}$.  Therefore, the construction phase takes time $\bigo{ckn2^{2ckn}}$.

The prestate elimination phase takes time $\bigo{2^{ckn}}$.  Checking realization of an eventuality in a state takes $\bigo{2^{ckn}}$ steps and the number of eventualities is bounded by $n$, hence  the elimination of a `bad' state takes at most $\bigo{n2^{ckn}}$ steps. Hence, the elimination state takes 
$\bigo{n2^{2ckn}}$ steps.  

We conclude that the whole tableau procedure 
terminates in $\bigo{ckn2^{2ckn}}$ steps, hence it is in EXPTIME, which is in compliance with the known EXPTIME(-complete) lower bound (see \cite{Fagin95knowledge}, \cite{FHV92}).

\subsection{Efficiency}

Some features of the ``diamond-propagating'' procedure described above
make it sometimes practically sub-efficient.

Firstly, the application of the cut rules \ref{it:FE2} and \ref{it:FE3}
can produce many \funny sets, even after imposing the restrictive
conditions $C_1$ and $C_2$. Potentially, it can create a number of
states that is exponential in the number of subformulae of the form
$\distrib{A} \psi$ or $\commonk{A}\psi$ occurring in the formulas of
the input set $\Ga$.

Secondly, when applying the rule \RRule{DR} to a state $\De$ with respect to
some $\neg \distrib{A} \vp$, we propagate to the newly created
prestate all the diamond-formulae of the form $\neg \distrib{B} \psi$,
where $B \subseteq A$, except $\neg \distrib{A} \vp$ itself. Likewise,
all formulae $\neg\commonk{A}\psi$ where $A$ and $B$ are not disjoint,
get propagated. Thus, the presence of a ``diamond'' in a prestate
$\Ga$ is then passed on to all states in $\st{\Ga}$, resulting in the need to apply the rule \RRule{DR} to every state in $\st{\Ga}$ with
respect to this diamond; this, again, implies the creation of a large
number of states (even though, as we have shown, the maximal number of
states is still no more than exponential in the size of the input
formula). However, we re-iterate that this `diamond-propagation' is necessary
for the procedure developed here, because if a diamond-formula is not
propagated forward, then its negation, which is a box-formula, may be
added to a successor state and thus clash with that diamond-formula in
the current state.

On the other hand, the restrictive conditions $C_1$ and $C_2$  for the application of cut-saturation in the production of CS-expansions can have a very significant effect on the size of the tableau, as illustrated by the next example. 

\medskip

\begin{example}
Suppose we want to build a tableau for
the formula $\theta = \commonk{\set{a,b}} \distrib{a} p \imp \neg
\commonk{\set{b,c}} \distrib{b} p \equiv \neg(\commonk{\set{a,b}} \distrib{a} p \land
\commonk{\set{b,c}} \distrib{b} p)  $ and suppose that $\agents = \set{a,
  b,c}$.  We start off with creating a single prestate \set{\theta}. Using only the unrestricted conditions $C_1$ and $C_2$ to cut, 
applying the rule \RRule{SR} to this prestate produces an overwhelming number of 35 states:
\begin{enumerate}
\small{
\item \set{\theta, {\neg \commonk{\set{a,b}} \distrib{a} p}, 
		\neg\distrib{a} \commonk{\set{a,b}}\distrib{a} p, \distrib{a} p, p, \underline{\commonk{\set{b,c}}\distrib{b}p}}; 
\item \set{\theta, {\neg \commonk{\set{a,b}} \distrib{a} p}, \neg
    \distrib{a}\commonk{\set{a,b}} \distrib{a} p, \distrib{a} p, p, {\neg \commonk{\set{b,c}}\distrib{b}p}, 
    \neg\distrib{b}\commonk{\set{b,c}}\distrib{b}p, \distrib{b}p};
\item \set{\theta, {\neg \commonk{\set{a,b}} \distrib{a} p}, \neg
    \distrib{a} \commonk{\set{a,b}}\distrib{a} p, \distrib{a} p, p, {\neg \commonk{\set{b,c}}\distrib{b}p}, 	
    \neg\distrib{b}\commonk{\set{b,c}}\distrib{b}p, \neg\distrib{b}p};
\item \set{\theta, {\neg \commonk{\set{a,b}} \distrib{a} p}, \neg
    \distrib{a}\commonk{\set{a,b}} \distrib{a} p, \distrib{a} p, p, {\neg \commonk{\set{b,c}}\distrib{b}p}, 
    \neg\distrib{c}\commonk{\set{b,c}}\distrib{b}p, \distrib{b}p};
\item \set{\theta, {\neg \commonk{\set{a,b}} \distrib{a} p}, \neg
    \distrib{a} \commonk{\set{a,b}}\distrib{a} p, \distrib{a} p, p, {\neg \commonk{\set{b,c}}\distrib{b}p}, 	
    \neg\distrib{c}\commonk{\set{b,c}}\distrib{b}p, \neg\distrib{b}p};
\item \set{\theta, {\neg \commonk{\set{a,b}} \distrib{a} p}, \neg
    \distrib{a} \commonk{\set{a,b}}\distrib{a} p, \distrib{a} p, p, {\neg \commonk{\set{b,c}}\distrib{b}p}, 	
    \neg\distrib{b}p};

\item \set{\theta, {\neg \commonk{\set{a,b}} \distrib{a} p}, \neg
    \distrib{a}\commonk{\set{a,b}} \distrib{a} p, \neg \distrib{a} p, \underline{\commonk{\set{b,c}}\distrib{b}p}, p};
\item \set{\theta, {\neg \commonk{\set{a,b}} \distrib{a} p}, \neg
    \distrib{a}\commonk{\set{a,b}} \distrib{a} p, \neg \distrib{a} p, {\neg \commonk{\set{b,c}}\distrib{b}p}, 
    \neg\distrib{b}\commonk{\set{b,c}}\distrib{b}p, \distrib{b}p, p};
\item \set{\theta, {\neg \commonk{\set{a,b}} \distrib{a} p}, \neg
    \distrib{a}\commonk{\set{a,b}} \distrib{a} p, \neg \distrib{a} p, {\neg \commonk{\set{b,c}}\distrib{b}p}, 
    \neg\distrib{b}\commonk{\set{b,c}}\distrib{b}p, \neg\distrib{b}p};
\item \set{\theta, {\neg \commonk{\set{a,b}} \distrib{a} p}, \neg
    \distrib{a}\commonk{\set{a,b}} \distrib{a} p, \neg \distrib{a} p, {\neg \commonk{\set{b,c}}\distrib{b}p}, 
    \neg\distrib{c}\commonk{\set{b,c}}\distrib{b}p, \distrib{b}p, p};
\item \set{\theta, {\neg \commonk{\set{a,b}} \distrib{a} p}, \neg
    \distrib{a}\commonk{\set{a,b}} \distrib{a} p, \neg \distrib{a} p, {\neg \commonk{\set{b,c}}\distrib{b}p}, 
    \neg\distrib{c}\commonk{\set{b,c}}\distrib{b}p, \neg\distrib{b}p};
\item \set{\theta, {\neg \commonk{\set{a,b}} \distrib{a} p}, \neg
    \distrib{a}\commonk{\set{a,b}} \distrib{a} p, \neg \distrib{a} p, {\neg \commonk{\set{b,c}}\distrib{b}p}, 
 \neg\distrib{b}p};
    
\item \set{\theta, {\neg \commonk{\set{a,b}} \distrib{a} p}, 
		\neg\distrib{b} \commonk{\set{a,b}}\distrib{a} p, \distrib{a} p, p, \underline{\commonk{\set{b,c}}\distrib{b}p}}; 
\item \set{\theta, {\neg \commonk{\set{a,b}} \distrib{a} p}, \neg
    \distrib{b}\commonk{\set{a,b}} \distrib{a} p, \distrib{a} p, p, {\neg \commonk{\set{b,c}}\distrib{b}p}, 
    \neg\distrib{b}\commonk{\set{b,c}}\distrib{b}p, \distrib{b}p};
\item \set{\theta, {\neg \commonk{\set{a,b}} \distrib{a} p}, \neg
    \distrib{b} \commonk{\set{a,b}}\distrib{a} p, \distrib{a} p, p, {\neg \commonk{\set{b,c}}\distrib{b}p}, 	
    \neg\distrib{b}\commonk{\set{b,c}}\distrib{b}p, \neg\distrib{b}p};
\item \set{\theta, {\neg \commonk{\set{a,b}} \distrib{a} p}, \neg
    \distrib{b}\commonk{\set{a,b}} \distrib{a} p, \distrib{a} p, p, {\neg \commonk{\set{b,c}}\distrib{b}p}, 
    \neg\distrib{c}\commonk{\set{b,c}}\distrib{b}p, \distrib{b}p};
\item \set{\theta, {\neg \commonk{\set{a,b}} \distrib{a} p}, \neg
    \distrib{b} \commonk{\set{a,b}}\distrib{a} p, \distrib{a} p, p, {\neg \commonk{\set{b,c}}\distrib{b}p}, 	
    \neg\distrib{c}\commonk{\set{b,c}}\distrib{b}p, \neg\distrib{b}p};
\item \set{\theta, {\neg \commonk{\set{a,b}} \distrib{a} p}, \neg
    \distrib{b} \commonk{\set{a,b}}\distrib{a} p, \distrib{a} p, p, {\neg \commonk{\set{b,c}}\distrib{b}p}, 	
    \neg\distrib{b}p};

\item \set{\theta, {\neg \commonk{\set{a,b}} \distrib{a} p}, \neg
    \distrib{b}\commonk{\set{a,b}} \distrib{a} p, \neg \distrib{a} p, \underline{\commonk{\set{b,c}}\distrib{b}p}, p};
\item \set{\theta, {\neg \commonk{\set{a,b}} \distrib{a} p}, \neg
    \distrib{b}\commonk{\set{a,b}} \distrib{a} p, \neg \distrib{a} p, {\neg \commonk{\set{b,c}}\distrib{b}p}, 
    \neg\distrib{b}\commonk{\set{b,c}}\distrib{b}p, \distrib{b}p, p};
\item \set{\theta, {\neg \commonk{\set{a,b}} \distrib{a} p}, \neg
    \distrib{b}\commonk{\set{a,b}} \distrib{a} p, \neg \distrib{a} p, {\neg \commonk{\set{b,c}}\distrib{b}p}, 
    \neg\distrib{b}\commonk{\set{b,c}}\distrib{b}p, \neg\distrib{b}p};
\item \set{\theta, {\neg \commonk{\set{a,b}} \distrib{a} p}, \neg
    \distrib{b}\commonk{\set{a,b}} \distrib{a} p, \neg \distrib{a} p, {\neg \commonk{\set{b,c}}\distrib{b}p}, 
    \neg\distrib{c}\commonk{\set{b,c}}\distrib{b}p, \distrib{b}p, p};
\item \set{\theta, {\neg \commonk{\set{a,b}} \distrib{a} p}, \neg
    \distrib{b}\commonk{\set{a,b}} \distrib{a} p, \neg \distrib{a} p, {\neg \commonk{\set{b,c}}\distrib{b}p}, 
    \neg\distrib{c}\commonk{\set{b,c}}\distrib{b}p, \neg\distrib{b}p};
\item \set{\theta, {\neg \commonk{\set{a,b}} \distrib{a} p}, \neg
    \distrib{b}\commonk{\set{a,b}} \distrib{a} p, \neg \distrib{a} p, {\neg \commonk{\set{b,c}}\distrib{b}p}, 
    \neg\distrib{b}p};

\item \set{\theta, {\neg \commonk{\set{a,b}} \distrib{a} p}, \neg \distrib{a} p, \underline{\commonk{\set{b,c}}\distrib{b}p}, p};
\item \set{\theta, {\neg \commonk{\set{a,b}} \distrib{a} p}, \neg \distrib{a} p, {\neg \commonk{\set{b,c}}\distrib{b}p}, 
    \neg\distrib{b}\commonk{\set{b,c}}\distrib{b}p, \distrib{b}p, p};
\item \set{\theta, {\neg \commonk{\set{a,b}} \distrib{a} p}, \neg \distrib{a} p, {\neg \commonk{\set{b,c}}\distrib{b}p}, 
    \neg\distrib{b}\commonk{\set{b,c}}\distrib{b}p, \neg\distrib{b}p};
\item \set{\theta, {\neg \commonk{\set{a,b}} \distrib{a} p}, \neg \distrib{a} p, {\neg \commonk{\set{b,c}}\distrib{b}p}, 
    \neg\distrib{c}\commonk{\set{b,c}}\distrib{b}p, \distrib{b}p, p};
\item \set{\theta, {\neg \commonk{\set{a,b}} \distrib{a} p}, \neg \distrib{a} p, {\neg \commonk{\set{b,c}}\distrib{b}p}, 
    \neg\distrib{c}\commonk{\set{b,c}}\distrib{b}p, \neg\distrib{b}p};
\item \set{\theta, {\neg \commonk{\set{a,b}} \distrib{a} p}, \neg \distrib{a} p, {\neg \commonk{\set{b,c}}\distrib{b}p}, 
    \neg\distrib{b}p};

\item \set{\theta, \underline{\commonk{\set{a,b}} \distrib{a} p}, p, {\neg \commonk{\set{b,c}}\distrib{b}p}, 
    \neg\distrib{b}\commonk{\set{b,c}}\distrib{b}p, \distrib{b}p};
\item \set{\theta, \underline{\commonk{\set{a,b}} \distrib{a} p}, p, {\neg \commonk{\set{b,c}}\distrib{b}p}, 
    \neg\distrib{b}\commonk{\set{b,c}}\distrib{b}p, \neg\distrib{b}p};
\item \set{\theta, \underline{\commonk{\set{a,b}} \distrib{a} p}, p, {\neg \commonk{\set{b,c}}\distrib{b}p}, 
    \neg\distrib{c}\commonk{\set{b,c}}\distrib{b}p, \distrib{b}p};
\item \set{\theta, \underline{\commonk{\set{a,b}} \distrib{a} p}, p, {\neg \commonk{\set{b,c}}\distrib{b}p}, 
    \neg\distrib{c}\commonk{\set{b,c}}\distrib{b}p, \neg\distrib{b}p};
\item \set{\theta, \underline{\commonk{\set{a,b}} \distrib{a} p}, p, {\neg \commonk{\set{b,c}}\distrib{b}p}, 
    \neg\distrib{b}p};   }
\end{enumerate}
If we instead use the restricted \ref{it:newC1} and \ref{it:newC2}, we will produce 8 states:
\begin{enumerate}
\small{
\item \set{\theta, {\neg \commonk{\set{a,b}} \distrib{a} p}, \neg
    \distrib{a}\commonk{\set{a,b}} \distrib{a} p, \distrib{a} p, p}
\item \set{\theta, {\neg \commonk{\set{a,b}} \distrib{a} p}, \neg
    \distrib{a}\commonk{\set{a,b}} \distrib{a} p, \neg \distrib{a} p}
\item \set{\theta, {\neg \commonk{\set{a,b}} \distrib{a} p}, \neg
    \distrib{b}\commonk{\set{a,b}} \distrib{a} p}
\item \set{\theta, {\neg \commonk{\set{a,b}} \distrib{a} p}, \neg
    \distrib{a} p}
\item \set{\theta, {\neg \commonk{\set{b,c}} \distrib{b} p}, \neg
    \distrib{b}\commonk{\set{b,c}} \distrib{b} p, \distrib{b} p, p}
\item \set{\theta, {\neg \commonk{\set{b,c}} \distrib{b} p}, \neg
    \distrib{b}\commonk{\set{b,c}} \distrib{b} p, \neg \distrib{b} p}
\item \set{\theta, {\neg \commonk{\set{b,c}} \distrib{b} p}, \neg
    \distrib{c}\commonk{\set{b,c}} \distrib{b} p}
\item \set{\theta, {\neg \commonk{\set{b,c}} \distrib{b} p}, \neg
    \distrib{b} p}
}
\end{enumerate}
Figure \ref{fig:example2} shows the pretableau for one part of $\theta$, i.e. $\neg\co{\set{a,b}}\di{a}p$. The tableau for the other part of $\theta$ will be similar and disjoint from this tableau.
\begin{figure}[!ht]
\begin{minipage}[c]{0.58\linewidth}
\begin{equation*}
\tiny{
\xymatrix{
&&\Ga_0\ar@{-->}[dl]\ar@{-->}[d]\ar@{-->}[dr]\ar@{-->}[drr]\\
 &\De_2\ar[dl]_{\chi_0}\ar[d]^{\chi_a}& \De_1 \ar[d]_{\chi_0} & \De_3\ar^(0.4){\chi_b}[dddr] & \De_4\ar^{\chi_a}[d] \\ 
\Ga_3\ar@{-->}[ddr]& \Ga_2\ar@{-->}[dddr] & \Ga_1 \ar@{-->}[ddl]\ar@{-->}[d]\ar@{-->}[dr] & & \Ga_4\ar@{-->}[dddl]\ar@{-->}[d] \\
 & &\De_5 \ar@/_0.6pc/_(0.4){\chi_0}[u] & \De_6\ar^(0.15){\chi_b}[dr]& \De_8\ar^{\chi_b}[d]\\
&\De_7\ar_{\chi_a}[uu]\ar@/_0.5pc/_(.75){\chi_0}[uul]&&&\Ga_5 \ar@{-->}[dllll]\ar@{-->}[dll]\ar@{-->}[dl]\ar@{-->}[d]\\
\De_{11}\ar_(0.4){\chi_a}[uuur]\ar^{\chi_0}[uuu]&&\De_{9} \ar@/^0.8pc/_{\chi_0}[uuu] & \De_{12}\ar@/^0.5pc/^(0.35){\chi_a}[uuur]&\De_{10}\ar@/_0.75pc/_{\chi_b}[u]
}
}
\end{equation*}
\end{minipage}%
\hfill
\begin{minipage}[c]{0.40\linewidth}
\tiny{
\begin{align*}
\chi_0 & = \neg\di{a}p, \chi_a = \neg\di{a}\co{\set{a,b}}\di{a}p, \chi_b = \neg\di{b}\co{\set{a,b}}\di{a}p\\
\\
\Ga_0 &=\set{\neg\co{\set{a,b}}\di{a}p}\\
\De_1 &=\set{{\neg\co{\set{a,b}}\di{a}p},\neg\di{a}p }\\
\De_2 &=\set{{\neg\co{\set{a,b}}\di{a}p},\neg\di{a}\co{\set{a,b}}\di{a}p, \neg\di{a}p}\\
\De_3 &=\set{{\neg\co{\set{a,b}}\di{a}p},\neg\di{b}\co{\set{a,b}}\di{a}p}\\
\De_4 &=\set{{\neg\co{\set{a,b}}\di{a}p},\neg\di{a}\co{\set{a,b}}\di{a}p, \di{a}p, p}\\
\Ga_1 &=\set{\neg p, \di{a}\neg\co{\set{a,b}}\di{a}p}\\
\Ga_2 &=\set{\neg\co{\set{a,b}}\di{a}p,\neg\di{a}p }\\
\Ga_3 &=\set{\neg p, \di{a}\neg\co{\set{a,b}}\di{a}p, \neg\di{a}\co{\set{a,b}}\di{a}p}\\
\Ga_4 &=\set{\neg\co{\set{a,b}}\di{a}p, \di{a}p }\\
\De_5 &= \set{\neg p, {\neg\co{\set{a,b}}\di{a}p}, \neg\di{a}p}\\
\De_6 &= \set{\neg p, {\neg\co{\set{a,b}}\di{a}p}, \neg\di{b}\co{\set{a,b}}\di{a}p}\\
\De_7 &= \set{\neg p, {\neg\co{\set{a,b}}\di{a}p}, \neg\di{a}\co{\set{a,b}}\di{a}p, \neg\di{a}p}\\
\De_8 &=\set{{\neg\co{\set{a,b}}\di{a}p}, \di{a}p, p, \neg \di{b}\co{\set{a,b}}\di{a}p }\\
\Ga_5 &=\set{\neg\co{\set{ab}}\di{a}p}\\
\De_{9} &= \set{{\neg\co{\set{a,b}}\di{a}p}, \neg\di{a}p}\\
\De_{10} &= \set{{\neg\co{\set{a,b}}\di{a}p}, \neg\di{b}\co{\set{a,b}}\di{a}p}\\
\De_{11} &= \set{{\neg\co{\set{a,b}}\di{a}p}, \neg\di{a}\co{\set{a,b}}\di{a}p,\neg\di{a}p}\\
\De_{12} &= \set{{\neg\co{\set{a,b}}\di{a}p}, \neg\di{a}\co{\set{a,b}}\di{a}p,\di{a}p, p}\\
\end{align*}}
\end{minipage}
\caption{Pretableau for $\neg\co{\set{a,b}}\di{a}p$}
\label{fig:example2}
\end{figure}

As seen here, the backtracking procedure is rather inefficient when
applied to formulae of the type of $\commonk{\set{a,b}} \distrib{a} p \imp \neg \commonk{\set{b,c}} \distrib{b} p$. 
\hfill$\Box$\\
\end{example}

Both causes of potential inefficiencies discussed above, viz. the forward diamond-propagation and the (restricted) analytic cut rules on box-formulae, are needed to ensure that every ``successful'' tableau
can be turned into a Hintikka structure.  More precisely, they together ensure that the 
right-to-left implication in the statement of property (CH3) of
Hintikka structures (recall Definition~\ref{def:cmaehs}) holds.

A possible way of eliminating these causes for inefficiencies is to change the strategy in the tableau-building, by  implementing a mechanism for \emph{backward propagation of boxes}: if $\distrib{A} \vp$ occurs in a state $\De$, then ensure that this box is propagated backwards to all predecessor states where it must occur. The main disadvantage of this approach is that it requires an elaborated mechanism of repeated updating the hitherto constructed part of the tableau. We leave the realization of this idea for future work.

\subsection{Improvements}
\label{subsec:improvements}

As stated earlier, the main emphasize of our tableau construction is the ease of presentation, comprehension and implementation, rather than technical sophistication and optimality of the procedure. While being worst-case time optimal, it is amenable to various improvements and further optimizations, some of which we will mention briefly here.  

To begin with, for methodological reasons, our procedure is divided into three phases, where the different components of the tableau-building procedure are dealt with separately. That separation of the procedure into phases makes it less optimal compared to the approach whereby the three phases are carried out simultaneously and the prestate and state elimination is done `on-the-fly'. 

Also, as briefly mentioned in Section \ref{subsec:overview_of_procedure}, it is possible to make the procedure cut-free by using a mechanism for `backwards propagation' of $\di{}$-formulas, which, when well designed can lead to more optimal performance in some cases. This approach is taken e.g., in \cite{GoreWidmann10}, where the authors construct a cut-free tableau-based algorithm for the logic PDL with converse, while the algorithm presented in \cite{Nguyen11} builds on this work by constructing a cut-free tableaux-based algorithm for the description logic SHI, which contains inverse roles. Both methods account for the case where a (number of) formula(s) turns up in a node, which will be required to be in the already created predecessor node of the node in question. The former algorithm deals with eventualities, too.  Adopting this approach to our procedure while optimizing it for the logic \CMAELCD\ would result in a procedure sketched below. 

\subsubsection{State elimination `on-the-fly'}
Here we make use of the concept of `potential rescuers' used in \cite{GoreWidmann10} and \cite{Nguyen11}, though in a slightly different way,  adjusted to our needs. 
We likewise take on board the techniques of updating and propagating statuses of nodes in the tableaux. 

Firstly, we maintain a status for (pre)states, which can either be $\ttt{unexplored}$, 
$\ttt{open}$ or $\ttt{closed}$. The status of a (pre)state is initially set to $\ttt{unexplored}$ when the (pre)state is created, and then updated during the procedure. When a prestate is expanded or a state expanded for all diamond-formulas in it, its status changes to $\ttt{open}$. Later on the status of a state $\De$ can then change to $\ttt{closed}$ in the following cases: 
\begin{itemize}
\item there is an epistemic prestate $\Ga$ such that  $\De\overset{\delta}\rightarrow \Ga$ for a formula $\delta$ and the status of $\Ga$ is $\ttt{closed}$. 
\item $\De$ contains an eventuality $\neg\co{A}\vp$ that it neither realized in the current tableau under construction nor has a ``potential rescuer''. A potential rescuer is a (pre)state, which 
is $A$-reachable from $\De$, contains $\neg\co{A}\vp$, and has not been expanded yet, i.e. it has status $\ttt{unexplored}$. Here we use a modified definition of $A$-reachability, where $\dashrightarrow$-arrows are allowed too.
\end{itemize} 
The status of a prestate $\Ga$ is set to $\ttt{closed}$ if: 
\begin{itemize}
\item all states in $\st{\Ga}$ are closed, including the case where $\st{\Ga}=\emptyset$, or 
\item $\Ga$ contains an eventuality that it neither realized nor has a potential rescuer.
\end{itemize}
Additionally, we make sure, that unsatisfiable (pre)states are removed on-the-fly and that the procedure stops and the tableau closes as soon as unsatisfiability of the input prestate is detected during the procedure, i.e.:

\begin{itemize}
\item We close a prestate when it is expanded and does not have any cut-saturated expansions. 
\item When a (pre)state $\Sigma$ closes, we propagate updates of statuses to the relevant (pre)states, whose status depend on the status of $\Sigma$. These are (pre)states that have outgoing arrows pointing to $\Sigma$.
\item We keep an eye on the initial prestate, labelled with the input formula whose satisfiability we are checking. When/if this prestate closes, we stop the whole procedure and return ``unsat''.
\end{itemize}
Finally, we also want to avoid the unnecessary checking of unrealized eventualities, since this step is one of the more expensive checks. Thus, when updating the status of a (pre)state we only check containment of unrealized eventualities, when this is really necessary. E.g., we do not check that if a potential rescuer is known to be reachable. 
This of course requires some bookkeeping.

\subsubsection{Making the procedure cut-free}
The procedure above takes care of doing the satisfiability checking `on the fly', however it is not cut-free. Though, the procedure can be made cut-free by incorporating the following:

Firstly, we use full expansions instead of cut-saturated expansions. 
Secondly, we now need to account for a further reason why a state $\De$ can close, namely that $\De$ contains a diamond formula $\neg \di{A}\vp$, 
such that $\DtoD{\De}{\Ga}{A}{\vp}$ and all states in $\st{\Ga}$ are \de{incompatible} with $\De$ with respect to $\neg\di{A}\vp$. Here, $\De'\in\st{\Ga}$ is incompatible with $\De$ if $\crh{\di{A'}\psi\in\De'}{A'\subseteq A}\not\subseteq \De$, i.e. condition 
\ref{it:CHDK} will not be fulfilled in the resulting Hintikka structure. 
This, however, does not neccessarily mean, that the state $\De$ needs to close. After all, since we are not proactively looking ahead for box- formulas which could possibly occur in a future descendent state of $\De$ and include these in $\De$ (as is done when using cut-saturated expansions), it is possible that $\De$ could become satisfiable if the box-formulas in question were added to $\De$. 

Therefore, when it happens that $\DtoD{\De}{\Ga}{A}{\vp}$ and none of the states in $\st{\Ga}$ are compatible with $\De$ with respect to $\neg\di{A}\vp$, we construct so-called `alternatives' for the state $\De$.
These are states labelled with the fully expanded sets 
$\De \cup S'$ for each $S'\in \bigcup_{\De'\in\; \st{\Ga}} \mathcal{FE}(\crh{\di{A'}\psi\in\De'}{A'\subseteq A \text{ and } \di{A'}\psi\notin\De})$. 
Then $\dashrightarrow$-arrows pointing to these alternatives are added from each prestates pointing to $\De$, and finally we close the original state $\De$ (and propagate the change of status that hereby occurs, as described previously).  

In this procedure, we need to keep track of when such incompatibilities occur, which requires some further bookkeeping.

\section{Concluding remarks}
\label{sec:concluding}

We have developed a sound and complete tableau-based decision
procedure for the full coalitional multiagent epistemic logic
\CMAELCD. The incremental tableau style adopted here is 
intuitive, practically more efficient, and more flexible than the
maximal tableau style, developed e.g., for the fragment \MAELC\ of \CMAELCD\
in~\cite{HM92}, and therefore it is more suitable both for manual
and automated execution. In fact, an earlier, less optimal version, of
this procedure has been implemented and reported in
\cite{VestergaardThesis}. On the other hand, as discussed in the previous section, various further optimizations of the procedure are possible and desirable, and some such optimizations have been developed for logics related to \CMAELCD, see Section \ref{subsec:related}. Furthermore, our tableau procedure is also amenable to various extensions, subject to current and future work:
\begin{itemize}
\item to temporal epistemic logics of linear and branching time,
  preliminary reports on which have appeared respectively in
  \cite{GorSh09a} and \cite{DBLP:conf/mallow/GorankoS09}.
\item with the strategic abilities operators of the Alternating-time
  temporal logic $\ATL$, a tableau-based decision procedure for which
  were developed in \cite{GorSh08}. Merging tableaux for these two logical systems will produce, inter alia, a feasible decision procedure for the Alternating-time temporal epistemic logic $\ATEL$ \cite{vdHW04}.
  \item a cut-free, `on the fly' version, as described in Section \ref{subsec:improvements}.

\end{itemize}

\section*{Acknowledgments}
We gratefully acknowledge the financial support from: the FIRST Research School and Roskilde University, funding the PhD study of the first author, part of which is her contribution to the present research; 
the National Research Foundation of South Africa for the second author through several research grants, and the Claude Harris Leon Foundation, funding the third author's post-doctoral fellowship at the University of the Witwatersrand, during which the initial work of this research was done. 
We also thank the anonymous referees for their valuable comments and constructive criticism.

\bibliographystyle{plain}
\bibliography{ctl-mas}
\end{document}